    \newcommand{\href}[2]{#2}
\theoremstyle{definition}
\newtheorem{theorem}{Theorem}[section]
\newtheorem{lemma}[theorem]{Lemma}
\newtheorem{corollary}[theorem]{Corollary}
\newcommand{\para}[1]{\vspace*{.1cm}\noindent\textbf{#1.}}
\def\compactify{\itemsep=0pt \topsep=0pt \partopsep=0pt \parsep=0pt}
\let\latexusecounter=\usecounter
\newenvironment{itemize*}
  {\def\usecounter{\compactify\latexusecounter}
   \begin{itemize}}
  {\end{itemize}\let\usecounter=\latexusecounter}
\newenvironment{enumerate*}
  {\def\usecounter{\compactify\latexusecounter}
   \begin{enumerate}}
  {\end{enumerate}\let\usecounter=\latexusecounter}
\newenvironment{description*}
  {\begin{description}\compactify}
  {\end{description}}
\begin{document}

\title{\Large Full Tilt: Universal Constructors for General Shapes with Uniform External Forces\thanks{
A preliminary version of this paper appeared in the Proceedings of the 2019 ACM-SIAM Symposium on Discrete Algorithms.
This research was supported in part by National Science Foundation Grant CCF-1817602.}}

\author{Jose Balanza-Martinez
\and David Caballero
\and Angel A. Cantu
\and Luis Angel Garcia
\and Timothy Gomez
\and Austin Luchsinger
\and Rene Reyes
\and Robert Schweller
\and Tim Wylie 
}

\date{}
\clearpage\maketitle
\thispagestyle{empty}

\vspace*{-.5cm}
\begin{center}
Department of Computer Science\\University of Texas - Rio Grande Valley \\Edinburg, TX 78539-2999, USA\\
\{jose.balanzamartinez01, david.caballero01, angel.cantu01, luis.a.garcia01, timothy.gomez01, austin.luchsinger01, rene.reyes01, robert.schweller, timothy.wylie\}@utrgv.edu
\end{center}

\begin{abstract} \small\baselineskip=9pt We investigate the problem of assembling general shapes and patterns in a model in which particles move based on uniform external forces until they encounter an obstacle.  In this model, corresponding particles may bond when adjacent with one another.  Succinctly, this model considers a 2D grid of ``open" and ``blocked" spaces, along with a set of slidable polyominoes placed at open locations on the board.  The board may be tilted in any of the 4 cardinal directions, causing all slidable polyominoes to move maximally in the specified direction until blocked.  By successively applying a sequence of such tilts, along with allowing different polyominoes to stick when adjacent, tilt sequences provide a method to reconfigure an initial board configuration so as to assemble a collection of previous separate polyominoes into a larger shape.

While previous work within this model of assembly has focused on designing a specific board configuration for the assembly of a specific given shape, we propose the problem of designing \emph{universal configurations} that are capable of constructing a large class of shapes and patterns.
For these constructions, we present the notions of \emph{weak} and \emph{strong} universality which indicate the presence of ``excess'' polyominoes after the shape is constructed.  In particular, for given integers $h,w$, we show that there exists a strongly universal configuration with $\mathcal{O}(hw)$ $1 \times 1$ slidable particles that can be reconfigured to build any $h \times w$ patterned rectangle.  We then expand this result to show that there exists a weakly universal configuration that can build any $h \times w$-bounded size connected shape.  Following these results, which require an admittedly relaxed assembly definition, we go on to show the existence of a strongly universal configuration (no excess particles) which can assemble any shape within a previously studied ``drop'' class, while using quadratically less space than previous results.

Finally, we include a study of the complexity of motion planning in this model. We consider the problems of deciding if a board location can be occupied by any particle (occupancy problem), deciding if a specific particle may be relocated to another position (relocation problem), and deciding if a given configuration of particles may be transformed into a second given configuration (reconfiguration problem).   We show all of these problems to be PSPACE-complete with the allowance of a single $2\times 2$ polyomino in addition to $1\times 1$ tiles.  We further show that relocation and occupancy remain PSPACE-complete even when the board geometry is a simple rectangle if domino polyominos are included.  

\end{abstract}
\newpage
\setcounter{page}1

\section{Introduction} \label{Introduction}
The ``tilt'' model is an elegant and simple model of robotic motion planning and assembly proposed by Becker et. al.~\cite{BecDemFek2014RMPS} with foundations in classical motion planning.  The model consists of a 2D grid board with ``open'' and ``blocked'' spaces, as well as a set of slidable polyominoes placed at open locations on the board. At the micro or nano scale, individually instructing specific particles may be impossible. Thus, this model uses a global external force to give all movable particles the same instruction. This may be done through any external force such as a magnetic field or gravity. In this work we assume gravity is the global force, and this is where the term \emph{tilt} comes from. In the simplest form of the problem, the board may be tilted (as an external force) in any of the four cardinal directions, causing all slidable polyominoes to slide maximally in the respective direction until reaching an obstacle. These mechanics have proved to be interesting bases for puzzle games, as evidenced by the maximal movement of \cite{atomixGame} and the global movement signals in \cite{megaMazeGame}. By adding bonding glues on polyomino edges, as is done in self-assembly theory~\cite{Doty-2012a,nacoTileSurvey,Woods20140214}, the polyominoes may stick together after each tilt, enabling this model to be a framework for studying the assembly of general shapes.

In this work we propose a new type of problem:  the design of board configurations that are \emph{universal} for a class of general shapes or patterns.  That is, we are interested in designing a single board configuration that is capable of being reconfigured to assemble any shape or pattern within a given set of shapes or patterns by applying the proper sequence of tilts.  This problem is distinct from problems considered in prior work in which a given shape is assembled by encoding the particular shape into a specific board configuration (i.e. encoding the shape by way of placement of ``blocked'' locations and initial polyomino placement).  As an analogy, prior work has focussed on building a specific purpose machine for building copies of a target shape via a simple repeating tilt sequence.  Here, we propose to build something more akin to a computer printer in which any shape or pattern, provided sufficient fuel/ink/polyominoes, may be requested from the machinery, and the particular shape requested is encoded by a provided tilt sequence.

As our primary focus in this paper relates to the problem of reconfiguring board configurations, a natural computational question arises:  what is the complexity of deciding if a given board configuration may be reconfigured into a second target board configuration?
In \cite{BecDemFek2014PCRA}, the authors show that minimizing the number of tilts to reconfigure between two configurations is PSPACE-complete.  In this paper, we add to this growing understanding of reconfiguration complexity by showing that deciding if reconfiguration is possible is PSPACE-complete with $1\times 1$ movable tiles and a single $2\times 2$ movable polyomino.
A related problem, which asks if a given location can be occupied by any polyomino, has also been considered. This problem was shown to be NP-hard \cite{BecDemFek2017PCCAL} in the restricted case of $1\times 1$ polyominoes that never stick together (but is not known to be in NP).
We also extend this line of work by showing this problem is PSPACE-Complete if polyominoes larger than than $1 \times 1$ tiles are allowed. We also show that this problem remains PSPACE-comeplete with different constraints such as limiting the number of larger polyominos and the board complexity.

\subsection{Motivations}
The beauty of the Tilt model is its simplicity combined with its depth.  These features allow this model to be a framework for computation and assembly within a number of potential applications at various scales.  A few examples at the macro, micro, and nano-scale are as follows.

\paragraph{Macro-scale}
The simplicity of the Tilt model allows for implementation with surprisingly simple components.  For instance, Becker et al. have constructed a modular, reconfigurable board with both geometry and sliding components, which they have demonstrated with video walkthroughs~\cite{tiltVideoBecker}.  Further, the components allow for attaching magnets to implement bonding between components. These bonded components can be viewed as polyominoes, which can even be sorted within a tilt system~\cite{2DDiscPoly}. Even a basic set of Legos is capable of quickly implementing many of our proposed constructions.  These systems can be represented realistically with games like Tilt by Thinkgames~\cite{ThinkfunTilt} and the marble based Labyrinth~\cite{labyrinthGame}.

\paragraph{Micro-scale}
It has been shown how to control magneto-tactic bacterium via global signals to move bacteria around complex vascular networks via magnets~\cite{Becker2013FCMM}, and how the same type of bacteria can be moved magnetically through a maze~\cite{khali2016controlmp}. This has a plethora of medical uses, such as minimal invasive surgery, targeted drug payloads, subdermal micro-constructions, ``targeted delivery of hemotherapeutic agents or therapeutic genes into malignant cells while sparing healthy cells''~\cite{Sinha1909}, and ``medical intervention in targeted zones inaccessible to catheterization''~\cite{martel2009mri}.

\paragraph{Nano-scale}
Promising potential applications for Tilt Assembly systems may even be found in the emerging field of DNA nanotechnology.  DNA walkers have been engineered to traverse programmed paths along substrates such as 2D sheets of DNA origami~\cite{Zhou2015APN,doi:10.1021/acsnano.7b02693}.  Such walkers may be augmented with activation signals to drive the walker forward one step by way of a DNA strand-displacement reaction~\cite{strand-displacement:2011}.  By flooding the system with a given signal, the walker could be pushed to continuously walk forward until stopped by some form of blocked location.  By further implementing four such signal reactions (one for each cardinal direction), and adding a specifically chosen signal type at each stage or step of the algorithm, a set of these DNA walkers become a nanoscale implementation of the Tilt Assembly model.  If feasible, such an implementation implies our Tilt algorithms offer a novel technique for the construction of nanoscale shapes and patterns.

\begin{table*}[t]\small
\centering
    \begin{tabular}{| c | c | c | c | c |@{}c@{} | @{}c@{} | c |}
        \hline
        \textbf{Result} & \textbf{Shape} & \textbf{Universal} & \textbf{Tilts} & \textbf{Size} & \textbf{Bonding} & \textbf{Geometry} & \textbf{Theorem} \\
        & \textbf{Class} &&&& \textbf{Complexity} & \textbf{Complexity} & \\
        \hline
        Fixed Shape & DROP & No & $O(hw)$\protect\footnotemark & $O(h^3w^3)$ & 2 labels & Connected & Thm. 6 in \cite{BecFek2017TAMF} \\
         \hline
         \textbf{Universal Patterns}  & All & Strongly & $O(hwk)$ & $O(hwk)$ & $k$ labels & Simple &  Thm. \ref{thm:pat}\\
        \hline
        \textbf{Universal  Shapes}  & All & Weakly & $O(hw)$ & $O(hw)$ & 1 label & Simple & Thm. \ref{thm:gen_shapes}\\
        \hline
        \textbf{Universal Shapes}  & DROP & Strongly & $O(h^2w)$ & $O(h^2w)$ & 2 labels & Connected & Thm. \ref{thm:shuriken} \\
        \hline
    \end{tabular}
    \caption{An overview of the shape construction results. The \textbf{Result} is the type of constructor achieved, and the \textbf{Shape Class} refers to the types of shapes that can be built. The \textbf{Tilts} are the number of board tilts, or external forces, required to build the shape. The \textbf{Size} refers to the size of the board necessary with respect to the size of the $h \times w$ bounding box of the shape being built. The \textbf{Bonding Complexity} is the number of distinct particle types that can attach to each other and are necessary to build the shape. The $k$ labels needed for patterns is the number of desired types of particles in the pattern ($1\leq k \leq |S|$). The \textbf{Geometry Complexity} refers to the complexity of the open and blocked pieces of the board based on the hierarchy provided in Section~\ref{Prelims}.
    The \textbf{Theorem} refers to where this information is from.}
    \label{table:shapes}
\end{table*}

\begin{table*}[t]
\centering
    \begin{tabular}{| c | c | c | c | c |}
        \hline
        \textbf{Problem} & \textbf{Shapes} & \textbf{Geometry} & \textbf{Complexity} & \textbf{Theorem} \\
        \hline
	      Relocation/Occupancy & $1\times 1$ & Connected & NP-hard &  1 in \cite{BecDemFek2017PCCAL}\\
        \hline
        Reconfiguration Optimization & $1\times 1$ & Connected & PSPACE-complete &  10 in \cite{BecDemFek2017PCCAL} \\
        \hline
        \textbf{Relocation/Occupancy} & $1\times 1$, $2\times 2$ & Connected & PSPACE-complete &  \ref{thm:relocation}, \ref{corr:occupancy} \\
        \hline
        \textbf{Reconfiguration} & $1\times 1$, $2\times 2$ & Connected & PSPACE-complete &  \ref{thm:reconfiguration} \\
        \hline
        \textbf{Relocation/Occupancy} & $1\times 1$, $1\times 2$, $2\times 1$ & Rectangular & PSPACE-complete &  \ref{thm:PSPACEdomi}, \ref{thm:occupancy} \\
        \hline
    \end{tabular}
	\caption{An overview of the computational complexity results related to tilt assembly and our results. The \textbf{Problem} gives the computational question. The \textbf{Shapes} refer to the size of the polyominos that are allowed to move within the world. The \textbf{Geometry} column refers the complexity of the nonmovable tiles needed in the reduction. The \textbf{Complexity} refers to the computational complexity class that the problem was proven to be a member, and the \textbf{Theorem} is the reference to the result.}
    \label{table:complexity}
\end{table*}

\subsection{Our Contributions in Detail}

We first show the existence of a universal configuration for building any $h\times w$ bounded shape or pattern. The result utilizes simple geometry (the set of open spaces on the board has genus-$0$), only a single type of bonding particle, and is quadratically smaller in size than the corresponding non-universal construction from previous work~\cite{BecFek2017TAMF}.  Moreover, this is the first result in the literature that is capable of building any connected shape.  However, in the case of general shape construction, we say this system is only \emph{weakly} universal in that it allows for the inclusion of ``helper'' polyominoes that are not counted as part of the final shape as they do not stick to any other tile.  Our next result is for \emph{strong} universality in which only a single final polyomino of the desired shape is permitted.  In this case we achieve a restricted class of shapes termed ``Drop'' shapes which are shapes buildable by dropping $1\times 1$ polyominoes onto the outside of the shape from any of the four cardinal directions.  Previous non-universal work has focused on both constructing and identifying members of the Drop shapes class~\cite{BecFek2017TAMF}.
A summary of universal shape construction results, along with closely related prior work, is provided in Table~\ref{table:shapes}.

Our next set of results focuses on various decision problems within the tilt model.
The \emph{occupancy} problem asks if a given location on a board can be occupied by a polyomino.
The \emph{relocation} problem asks if a specific tile is able to be relocated to a given location.
The \emph{reconfiguration} problem asks whether a given board configuration may be reconfigured into a second given configuration.
We show these problems to be PSPACE-complete, even when polyominoes are restricted to be $1\times 1$ and a single $2\times 2$ piece that never stick to each other. We also show that the \emph{occupancy} and \emph{relocation} problems remain PSPACE-complete if we limit our board to a rectangular frame but allow multiple $1 \times 2$ and $2 \times 1$ dominos. Our proofs rely on a reduction from a 2-tunnel gadget network game with different gadget types that was recently proven to be PSPACE-complete by Demaine, Grosof, Lynch, and Rudoy~\cite{DBLP:conf/fun/DemaineGLR18}.  Previous work on occupancy has shown the problem to be NP-hard~\cite{BecDemFek2017PCCAL} even when restricted to $1 \times 1$ pieces that do not stick.  A closely related problem of computing the minimum number of tilts needed to reconfigure between two board configurations has been shown to PSPACE-complete for $1 \times 1$ non-sticking pieces~\cite{BecDemFek2017PCCAL}.  A summary of our complexity results, along with closely related prior work, is provided in Table~\ref{table:complexity}.

\addtocounter{footnote}{0}
\footnotetext{This technique permits ``pipelined'' construction for the creation of $n$ copies of the target shape in amortized $O(1)$ number of tilts per copy.}

\section{Preliminaries} \label{Prelims}

\para{Board}
A \emph{board} (or \emph{workspace}) is a rectangular region of the 2D square lattice in which specific locations are marked as \emph{blocked}.  Formally, an $m\times n$ board is a partition $B=(O,W)$ of $\{(x,y) | x\in \{1, 2, \dots, m\}, y\in \{1, 2, \dots, n\}\}$ where $O$ denotes a set of \emph{open} locations, and $W$ denotes a set of \emph{blocked} locations- referred to as ``concrete'' or ``walls.''  We classify the different possible board geometries according to the following hierarchy:

\begin{itemize}
    \item Connected:\footnote{In \cite{FullTilt2019}, this hierarchy level was known as \emph{complex}. We modify this class to allow for a proper hierarchy of shape classes.}  A board is said to have \emph{connected} geometry if the set of open spaces $O$ for a board is a connected shape.
    \item Simple:\footnote{\cite{FullTilt2019} defines simple geometry based on the connectivity of $W$. We also modify the concept for hierarchical purposes. }  A connected board is said to be \emph{simple} if $O$ has genus-0.
    \item Monotone:  A simple board is \emph{monotone} if $O$ is either horizontally monotone or vertically monotone.
    \item Convex:  A monotone board is \emph{convex} if $O$ is both horizontally monotone and vertically monotone.
    \item Rectangular:  A convex board is \emph{rectangular} if $O$ is a rectangle.
\end{itemize}

Our board definitions have changed since \cite{FullTilt2019} in order to create the hierarchy shown above.

\para{Tiles}
A tile is a labeled unit square centered on a non-blocked point on a given board. Formally, a tile is an ordered pair $(c,a)$ where \emph{c} is a coordinate on the board, and \emph{a} is an attachment label. Attachment labels specify which types of tiles will stick together when adjacent, and which have no affinity. For a given alphabet of labels $\Sigma$, and some \emph{affinity} function $G: \Sigma \times \Sigma \rightarrow \{0,1\}$ which specifies which pairs of labels attract ($G(a,b)=1$) and which do not ($G(a,b)=0$),  we say two adjacent tiles with labels $a$ and $b$ are \emph{bonded} if $G(a,b)=G(b,a) = 1$.
\\

\para{Polyomino}
A \emph{polyomino} is a finite set of tiles $P = \{t_1, \ldots t_k\}$ that is 1) connected with respect to the coordinates of the tiles in the polyomino and 2) \emph{bonded} in that the graph of tiles in $P$ with edges connecting bonded tiles is connected. A polyomino that consists of a single tile is informally referred to as simply a ``tile".
\\

\para{Configurations}
A configuration is an arrangement of polyominoes on a board such that there are no overlaps among polyominoes, or with blocked board spaces.  Formally, a configuration $C=(B, P=\{P_1\ldots P_k\})$ consists of a board $B$, along with a set of non-overlapping polyominoes $P$ that each do not overlap with the blocked locations of board $B$.
\\

\para{Step} A \emph{step} is a way to turn one configuration into another by way of a global signal that moves all polyominoes in a configuration one unit in a direction $d \in \{N,E,S,W\}$ when possible without causing an overlap with a blocked location, or another polyomino.  Formally, for a configuration $C=(B,P)$, consider the translation of all polyominoes in $P$ by 1 unit in direction $d$.  If no overlap with blocked board spaces occurs, then the new configuration is derived by first performing this translation, and then merging each pair of polyominoes that each contain one tile from a now (adjacently) bonded pair of tiles.  If an overlap does occur, for each polyomino for which the translation causes an overlap with a blocked space, temporarily add these polyominoes to the set of blocked spaces and repeat.  Once the translation induces no overlap with blocked spaces, execute the translation and merge polyominoes based on newly bonded tiles to arrive at the new configuration.  If all polyominoes are eventually marked as blocked spaces, then the step transition does not change the initial configuration.  If a configuration does not change under a step transition for direction $d$, we say the configuration is \emph{$d$-terminal}.  In the special case that a step causes a polyomino to ``leave the board", we simply remove the polyomino from the configuration.
\\

\para{Tilt} A \emph{tilt} in direction $d \in \{N,E,S,W\}$ for a configuration is executed by repeatedly applying a step in direction $d \in \{N,E,S,W\}$ until a $d$-terminal configuration is reached.  We say that a configuration $C$ can be \emph{reconfigured in one move} into configuration $C'$ (denoted $C \rightarrow_1 C'$) if applying one tilt in some direction $d$ to $C$ results in $C'$.  We define the relation $\rightarrow_*$ to be the transitive closure of $\rightarrow_1$. Therefore, $C \rightarrow_* C'$
means that $C$ can be reconfigured into $C'$ through a sequence of tilts.
\\

\para{Tilt Sequence} A \emph{tilt sequence} is a series of tilts which can be inferred from a series of directions $ D = \langle d_1, d_2,\dots, d_k \rangle$; each $d_i \in D$ implies a tilt in that direction. For simplicity, when discussing a tilt sequence, we just refer to the series of directions from which that sequence was derived. Given a starting configuration, a tilt sequence corresponds to a sequence of configurations based on the tilt transformation.  An example tilt sequence $\langle S, W, N, W, S, W, S \rangle$ and the corresponding sequence of configurations can be seen in Figure \ref{fig:simple_example}.
\\
\begin{figure}[H]
    \centering
	\begin{subfigure}[b]{0.1\textwidth}
        \includegraphics[width=1.\textwidth]{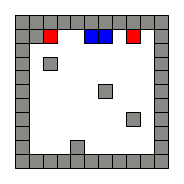}
        \caption{Start}
    \end{subfigure}
	\begin{subfigure}[b]{0.1\textwidth}
        \includegraphics[width=1.\textwidth]{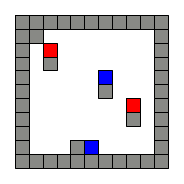}
        \caption{South}
    \end{subfigure}
    \begin{subfigure}[b]{0.1\textwidth}
        \includegraphics[width=1.\textwidth]{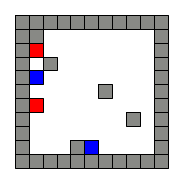}
        \caption{West}
    \end{subfigure}
    \begin{subfigure}[b]{0.1\textwidth}
        \includegraphics[width=1.\textwidth]{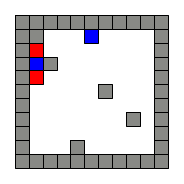}
        \caption{North}
    \end{subfigure}
        \begin{subfigure}[b]{0.1\textwidth}
        \includegraphics[width=1.\textwidth]{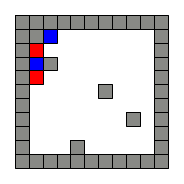}
        \caption{West}
    \end{subfigure}
        \begin{subfigure}[b]{0.1\textwidth}
        \includegraphics[width=1.\textwidth]{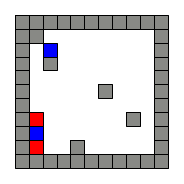}
        \caption{South}
    \end{subfigure}
        \begin{subfigure}[b]{0.1\textwidth}
        \includegraphics[width=1.\textwidth]{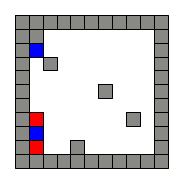}
        \caption{West}
    \end{subfigure}
        \begin{subfigure}[b]{0.1\textwidth}
        \includegraphics[width=1.\textwidth]{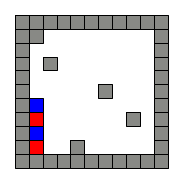}
        \caption{South}
    \end{subfigure}

    \caption{Tilt Example}
    \label{fig:simple_example}
\end{figure}

\para{Universal Configuration} A configuration $C'$ is universal to a set of configurations $\mathcal{C} = \{ C_1, C_2, \dots, C_k \}$ if and only if $C' \rightarrow_* C_i \ \forall \mkern9mu C_i \in \mathcal{C}$.
\\

\para{Configuration Representation}  A configuration may be interpreted as having constructed a ``shape" in a natural way.  Define a shape to be a connected subset $S\subset Z^2$.  A configuration \emph{strongly} represents $S$ if the configuration consists of a single polyomino whose tile coordinates are exactly the points of some translation of $S$.  A weaker version (discussed in detail in section \ref{sec:patterns}) allows for some ``helper" polyominoes to exist in the configuration and not count towards the represented shape. In this case, we say a configuration \emph{weakly} represents $S$.  We extend this idea of shape representation to include patterns. We say a configuration represents a pattern if each attachment label used in the representation corresponds to exactly one symbol of the pattern. For example, a configuration with attachment labels $\{a_1, a_2, \dots, a_n\}$ represents a pattern with symbols $\{s_1,s_2, \dots, s_n \}$ if the location of each tile $t_i \in C$ with attachment label $a_i$ matches with the positions of symbol $s_i$ in the pattern.
\\

\para{Universal Shape Builder}  Given this representation, we say a configuration $C'$ is \emph{universal} for a set of shapes $U$ if and only if there exists a set of configurations $\mathcal{C}$ such that 1) each $u \in U$ is represented by some $C \in \mathcal{C}$ and 2) $C'$ is universal for $\mathcal{C}$. If each $u \in U$ is strongly represented by some $C \in \mathcal{C}$, we say $C'$ is \emph{strongly universal} for $U$.  Alternately, if each $u \in U$ is weakly represented by some $C \in \mathcal{C}$, we say $C'$ is \emph{weakly universal} for $U$. In a similar way, a configuration can be universal for a set of patterns.

\section{Patterns and General Shapes} \label{sec:patterns}
In this section we present a shape builder which is universal for the set of all $h \times w$ binary-patterned rectangles.  We then extend this approach to achieve a universal constructor for any connected shape, or even any patterned connected shape.  We first cover a high-level overview of how the construction works and then formally state and prove the result.

\begin{figure}[ht!]
  \centering
	\begin{subfigure}[b]{.45\textwidth}
        \includegraphics[width=1.0\textwidth]{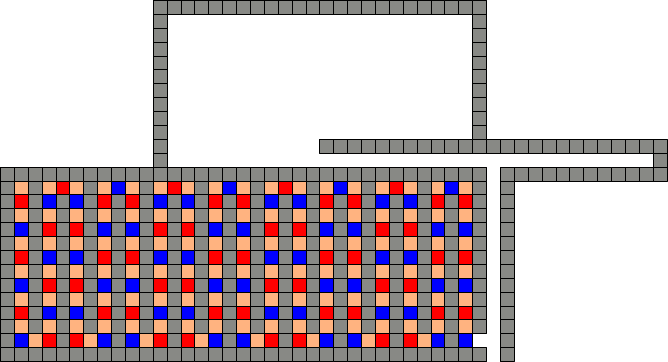}
        \caption{Starting Configuration}
        \label{fig:pat_start}
  	\end{subfigure}
    \hspace*{.1cm}
  	\begin{subfigure}[b]{.45\textwidth}
        \includegraphics[width=1.0\textwidth]{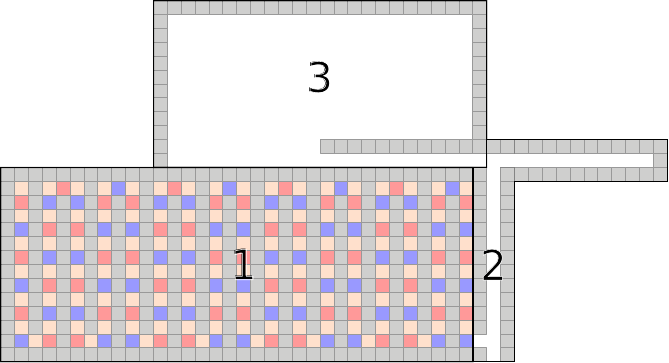}
        \caption{Sections}
        \label{fig:pat_sections}
  	\end{subfigure}
    \caption{An overview of the universal pattern and shape constructor. (a) The universal binary-patterned rectangle builder configuration in a starting configuration. (b) We refer to various sections of the configuration by different names. Section 1 is the \emph{fuel chamber}, section 2 is the \emph{loading chamber}, and section 3 is the \emph{construction chamber}.}
    \label{fig:pat_overview}
\end{figure}

\subsection{Binary-patterned Rectangles: Construction}
The high-level idea behind this construction is simple: tiles are removed from the fuel chamber one at a time and are either used in ``line assembly,'' or ejected from the system. Once a patterned line is complete, it is used for ``rectangle assembly''. Essentially, we are assembling a patterned rectangle pixle-by-pixel, row-by-row.
For a given rectangle size ($h \times w$), we can construct a tilt assembly configuration (Figure \ref{fig:pat_overview}) which can assemble any binary-patterned rectangle of that size. 

\begin{figure}[ht!]
  \centering
  \begin{subfigure}[b]{.22\textwidth}
      \includegraphics[width=1.0\textwidth]{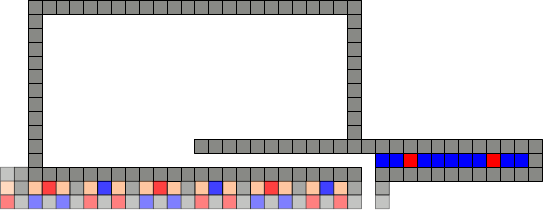}
      \caption{}
      \label{fig:pat_hl1}
  \end{subfigure}
    \hspace*{.1cm}
  \begin{subfigure}[b]{.22\textwidth}
      \includegraphics[width=1.0\textwidth]{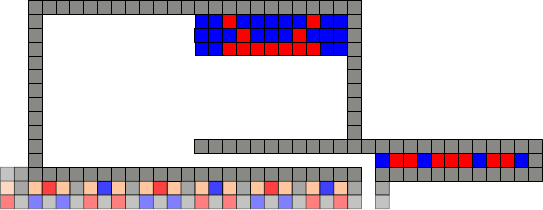}
      \caption{}
      \label{fig:pat_hl2}
  \end{subfigure}
    \hspace*{.1cm}
  \begin{subfigure}[b]{.22\textwidth}
      \includegraphics[width=1.0\textwidth]{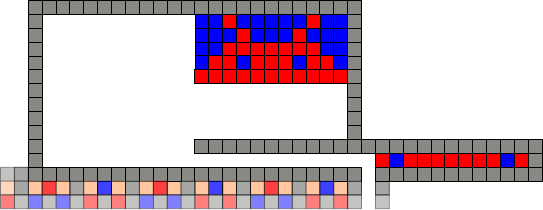}
      \caption{}
      \label{fig:pat_hl3}
  \end{subfigure}
  	\hspace*{.1cm}
	\begin{subfigure}[b]{.22\textwidth}
	    \includegraphics[width=1.0\textwidth]{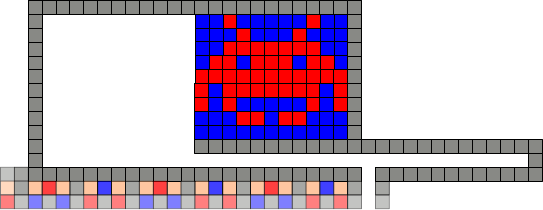}
	    \caption{}
		\label{fig:pat_hl4}
	\end{subfigure}
    \caption{The rectangle construction process at different points. Patterns are built row by row and then added to the final shape. (a) depicts an assembled row of the pattern. (b) and (c) show subsequent configurations after more rows have been added and (d) shows the final assembled pattern.}
    \label{fig:pat_shapes}
\end{figure}

\paragraph{Fuel Chamber}  The fuel chamber is the portion of our system which contains the tiles that are used for construction. Our construction utilizes 2 types of tiles that attach to themselves and to each other, and 1 tile that does not attach to anything (we often refer to this tile as \emph{sand})\footnote{The term \emph{sand} is inspired by the game Minecraft~\cite{minecraft} in which blocks of type ``sand'' do not stick to adjacent blocks and will fall freely if nothing lies beneath them.  Here, sand refers to $1\times 1$ tiles that do not stick to any other tile and are not counted as part of the final assembly.}.  We must be able to fill up the entire $hw$-sized porion of the construction chamber with either of the ``sticky'' tiles, so we need to allocate room for that many tiles in the fuel chamber. In the fuel chamber, each sticky tile must be separated by a sand tile, so we do not end up with ``clumps'' of fuel.  Thus, we have $4hw$ tiles in our fuel chamber. To progress fuel through the fuel chamber, an input sequence of $\langle N,E,S \rangle$ is required. It is important to note that each of our commands in Table~\ref{tab:pat_sequences} ends with this sequence, thereby naturally advancing the fuel through the fuel chamber.

\paragraph{Loading Chamber}  The loading chamber is the next portion of the construction. It is here that tiles can either be removed from the configuration, or loaded into the upper part of the loading chamber. As tiles are added to the loading chamber, one row (line) of the pattern is built. The vertical portion of this chamber is the same height as the fuel chamber, and the horizontal portion has a width of $w$. Once a line is complete, it is added to the construction chamber.

\paragraph{Construction Chamber}  The construction chamber is the portion of the tilt assembly system where our pattern will be constructed. Its dimensions are determined by the size of the rectangle to be constructed.
The upper portion (with dimensions $h \times 2w$) is where the binary-patterned rectangle will be constructed. The lower portion (with dimensions $1 \times 2w$) is where the pre-assembled lines enter from the loading chamber.
The construction chamber can be thought of as having two parts. The left portion is where lines are added to the already existing pattern, and the right portion is where the pattern is stored for other move sequences.

\begin{figure}[t!]
  \centering
  	\begin{subfigure}[b]{.25\textwidth}
        \includegraphics[width=1.0\textwidth]{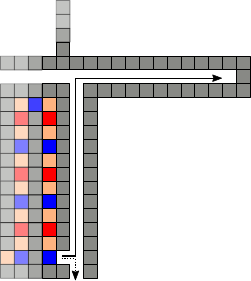}
        \caption{Add/Remove Tile}
        \label{fig:pat_add}
  	\end{subfigure}
    \hspace*{.15cm}
  	\begin{subfigure}[b]{.575\textwidth}
      \centering
  	    \includegraphics[width=1.0\textwidth]{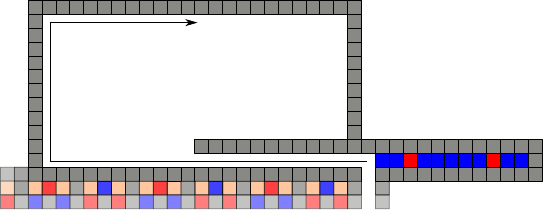}
  	    \caption{Add Line}
  		\label{fig:pat_remove}
  	\end{subfigure}
    \caption{Basic sequences used in the constructor. The cyclic sequence to advance the fuel is $\langle S,E,N,E,S,E, \dots \rangle$. (a) The sequence to add a tile to the line is $\langle E,N,E,S \rangle$. (b) The sequence to remove a tile from the system is $\langle E,S,N,E,S \rangle$.}
    \label{fig:pat_fuel}
\end{figure}

\paragraph{Line Construction}  The $1 \times w$ lines (rows of the rectangle) are constructed pixel-by-pixel (tile-by-tile) from right-to-left. For each pixel, the user decides if a blue or red tile should be placed and discards the other (as well as the separating sand). The sequences required for tile addition and removal are in Figure \ref{fig:pat_fuel}.

\paragraph{Rectangle Construction}  Once the line (row) is complete, the next step is to add it to the construction chamber. The line construction process is then repeated over and over again, with each new line being added to the construction chamber. Thus, the rectangle is built row-by-row from top-to-bottom. The sequences for all moves are in Table \ref{tab:pat_sequences}.

\begin{table}[ht]
  \centering
\begin{tabular}{|c|c|c|c|}
	\hline
	\textbf{Moves} & Add tile & Remove tile & Add line \\
	\hline
	\textbf{Tilts} &  $\langle E,N,E,S \rangle$ & $\langle E,S,N,E,S \rangle$ & $\langle W,N,E,S \rangle$ \\
  \hline
\end{tabular}
\caption{Tilt sequences used for general shape and pattern construction.
Note that a tile will not be removed during the Add Tile and Add Line commands (despite the presence of $\langle E, S \rangle$ in both) because of the preceding $N$ command and the fact that the rightmost column of the fuel chamber will be missing at least 1 tile.}
\label{tab:pat_sequences}
\end{table}

\subsection{Patterned Rectangles and General Shapes: Formal Results}

\begin{theorem}\label{thm:pat}
  Given two positive integers $h,w \in \mathbb{Z}^+$, there exists a configuration $C$ which is \emph{strongly universal} for the set of patterns  $U = \{ u \ | \ u \ \text{is an} \ h \times w \ \text{rectangle, where each pixel has a label} \ x \in \{ 0,1 \}  \}$. This configuration has size $\mathcal{O}(hw)$ and uses $\mathcal{O}(hw)$ tilts to reconfigure into a configuration which strongly represents any pattern $u \in U$.
\end{theorem}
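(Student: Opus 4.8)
The plan is to exhibit the explicit universal configuration $C$ sketched in the construction overview and then verify that, for every target pattern $u \in U$, there is a tilt sequence driving $C$ into a configuration that strongly represents $u$, and that the size and tilt‑count bounds hold. First I would fix the geometry: a fuel chamber holding $4hw$ tiles arranged so that each ``sticky'' tile (of which there are $2hw$, evenly split between the two bonding labels $0$ and $1$) is separated by a non‑bonding \emph{sand} tile; a loading chamber with a vertical shaft of height equal to the fuel chamber and a horizontal arm of width $w$; and a construction chamber with an $h \times 2w$ upper region (left half for assembly, right half for storage) and a $1 \times 2w$ lower region for incoming lines. I would argue the total area is $\BO{hw}$ by bounding each chamber separately. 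The affinity function $G$ uses labels $\{0,1\}$ with $G(0,0)=G(1,1)=G(0,1)=G(1,0)=1$ among sticky tiles and $0$ for sand, so that adjacent fuel tiles of either sticky kind will bond once placed into the assembly (which is exactly what we want, since the \emph{pattern} — not the shape — is what encodes the bits; the shape is always the $h\times w$ rectangle).

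Next I would establish the key invariant lemma: after executing the cyclic fuel‑advance subsequence $\langle N,E,S\rangle$, the next sticky tile in line is exposed at the mouth of the loading chamber, and exactly one tile (sticky or sand) can be peeled off per \emph{Add tile} / \emph{Remove tile} command, while all tiles already committed to a partially‑built line or to the stored pattern in the construction chamber move rigidly and return to their original relative positions. This is the heart of the argument and, I expect, the main obstacle: one must carefully check that the blocked‑location geometry forces exactly the claimed single‑tile behaviour under each of the three tilt sequences in Table~\ref{tab:pat_sequences}, that no unintended bonding or ``clumping'' occurs (the footnote to that table already flags the subtle point that the preceding $N$ plus a missing tile in the rightmost fuel column prevents spurious removal during \emph{Add tile} / \emph{Add line}), and that the partially assembled line does not prematurely bond to the stored rectangle or fall out of alignment. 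I would handle this by a careful case analysis on each of the (at most five) steps in each command, tracking the position of every polyomino; a picture for each intermediate configuration (as in Figures~\ref{fig:pat_fuel} and~\ref{fig:pat_shapes}) does most of the work.

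Granting the invariant, the reconfiguration procedure for a target pattern $u$ is: for each row $i$ from $1$ to $h$ (top to bottom), and within each row for each pixel $j$ from $w$ down to $1$, issue \emph{Add tile} if $u_{i,j}$ equals the label of the currently‑exposed sticky tile and \emph{Remove tile} otherwise (discarding the wrong‑colour tile and the separating sand); once the $1\times w$ line matching row $i$ is assembled in the loading arm, issue \emph{Add line} to append it beneath the stored pattern in the construction chamber. Since $C$ places equal numbers of $0$‑ and $1$‑labelled sticky tiles and the fuel stream alternates in a controllable way, I would note we can always reach a sticky tile of the needed colour within $\BO{1}$ commands (if the stream is forced to strictly alternate sticky colours, at most one \emph{Remove} precedes each \emph{Add}); either way each pixel costs $\BO{1}$ tilts, each of the $h$ lines costs $\BO{w}$ tilts plus one \emph{Add line}, so the whole pattern is produced in $\BO{hw}$ tilts. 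Finally I would observe that after the last \emph{Add line} the configuration consists of a single bonded polyomino occupying exactly an $h\times w$ rectangle with label $u_{i,j}$ at pixel $(i,j)$, plus no other polyominoes (all sand and all rejected sticky tiles have been ejected off the board, and the fuel is exactly exhausted), which is precisely a strong representation of $u$; this establishes strong universality and completes the proof.
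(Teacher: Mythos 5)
Your proposal follows the paper's proof essentially step for step: the same three-chamber board (a serpentine fuel line of $4hw$ alternating sticky tiles buffered by sand, a loading chamber, and an $h\times 2w$ construction chamber with a storage half), the same per-pixel Add/Remove and per-row Add-line tilt sequences, and the same $\mathcal{O}(hw)$ size and tilt accounting; the invariant you isolate (each command peels off exactly one fuel tile while the partial line and stored rectangle return to their relative positions) is precisely the verification the paper leaves at the level of figures. The one step that does not hold as stated is your closing claim that ``the fuel is exactly exhausted'': a pixel consumes either one or two sticky tiles depending on whether the currently exposed colour matches the target, so for most patterns unused fuel remains in the chamber, and a final configuration containing leftover fuel polyominoes does not \emph{strongly} represent $u$. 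The paper addresses exactly this by including a sequence ``to discard any superfluous tiles'' in the reconfiguration: after (or during) assembly one simply keeps issuing the Remove-tile command until the fuel chamber is empty, which ejects the leftovers off the board without disturbing the stored rectangle and costs only $\mathcal{O}(hw)$ additional tilts, so the stated bounds are unaffected. With that small amendment your argument is correct and is the same as the paper's.
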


\begin{proof}
	We show this by constructing a configuration $C = (B,P)$, similar to that shown in Figure~\ref{fig:pat_start}, where $B$ is the board and $P$ is the set of $1\times 1$ polyominoes in the fuel chamber in their starting configuration.

  \noindent The three chambers of the board (fuel, loading, and construction) with scale based on $h$ and $w$ are:

  \begin{itemize}
      \item \textit{Fuel chamber.}
      Let the fuel chamber of this configuration be of height $h + 3$ and length $8w + 2$. The fuel chamber is a rectangular area with staggered columns of blocked locations, so as to create a serpentine path of length $4hw$. Let this path be filled with tiles of alternating attachment labels $a,b$ buffered with label $\varepsilon$, where $G(a,a) = 1, G(a,b) = 1, G(b,b) = 1$ and $\varepsilon$ is the attachment label with no affinity. This allows enough room to have a single serpentine line of $4hw$ tiles that do not stick to each other.

      \item \textit{Loading chamber.}
      The loading chamber must have a vertical hallway whose height is the same as the height of the fuel chamber ($h + 2$). Let the loading chamber also contain a horizontal tunnel whose width is $w + 2$.

      \item \textit{Construction chamber.}
      The construction chamber can be described as a large set of open locations with a perimeter of blocked locations resembling the figures above. Let the construction chamber have height $h + 2$ and width $2w + 2$.

  \end{itemize}
  By observing the dimensions of the chambers, the height of the board is $2h + 5$ and the width of the board is $9w + 4$. So, the size of the board is $\mathcal{O}(hw)$.

  \textit{Reconfiguration.}
  Consider the set of configurations $\mathcal{C}'$, where each $c' \in \mathcal{C}'$ is similar to that shown in Figure \ref{fig:pat_start} and \emph{strongly} represents some $u \in U$ by a patterned-rectangle in the construction area. From configuration $c$, and the construction shown above, there exists a tilt sequence to construct any binary-colored line pixel-by-pixel. There also exists a tilt sequence to construct a rectangle with these binary-colored lines row-by-row. These sequences, along with that to discard any superfluous tiles, shows that there exists a complete sequence to build any rectangular binary pattern. Thus,  $\forall$ $c' \in \mathcal{C}', C \rightarrow_* c'$. 
\end{proof}

\paragraph{General Patterns}
By increasing the size of the fuel chamber, we can easily generalize this result to $k$ tile types (represented by labels or colors). Given $k$ different labeled tiles for the pattern, the fuel chamber just needs to repeat the sequence of tiles in order, with sand in between each labeled tile, enough times to ensure the desired pattern can be built.

\begin{corollary}\label{thm:pat_gen}
  Given three positive integers $h,w,k \in \mathbb{Z}^+$, there exists a configuration $C$ which is \emph{strongly universal} for the set of patterns $U = \{ u \ | \ u \ \text{is an} \ h \times w \ \text{rectangle, where each pixel has a label} \ x \in \{ 0,1, \dots, k \}\}$. This configuration has size $\mathcal{O}(hwk)$ and uses $\mathcal{O}(hwk)$ tilts to reconfigure into a configuation which strongly represents any pattern $u \in U$.
\end{corollary}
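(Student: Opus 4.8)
The plan is to reuse the construction of Theorem~\ref{thm:pat} almost verbatim, changing only the contents and length of the fuel chamber and the affinity function $G$. Fix an ordering $a_0,a_1,\dots,a_k$ of the $k+1$ labels and set $G(a_i,a_j)=1$ for all $i,j$ (so every sticky label bonds with every other, including itself) and $G(\varepsilon,\cdot)=0$ for the sand label; this choice does not change which configurations are bonded polyominoes --- only which symbol each tile represents. In the fuel-chamber serpentine, replace the alternating $a/b$ contents by $hw$ consecutive \emph{menu groups}, where one menu group is the block $a_0\,\varepsilon\,a_1\,\varepsilon\,\cdots\,\varepsilon\,a_k\,\varepsilon$ of $\Theta(k)$ tiles. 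Since the worst-case pattern assigns the same label to all $hw$ pixels and thus needs $hw$ copies of a single label, having $hw$ menu groups --- one tile of every label in each group --- supplies enough tiles of every label. The serpentine now has length $\Theta(hwk)$, so the fuel chamber is widened to $\Theta(wk)$ serpentine columns with height unchanged at $h+3$, giving it area $\Theta(hwk)$; the loading and construction chambers are untouched at $\Theta(hw)$, so the board has size $\mathcal{O}(hwk)$.

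Next I would argue that one menu group realizes one pixel with any desired label. Using exactly the primitives of Theorem~\ref{thm:pat} --- \emph{Add tile} $\langle E,N,E,S\rangle$, \emph{Remove tile} $\langle E,S,N,E,S\rangle$, and \emph{Add line} $\langle W,N,E,S\rangle$ --- the tiles of the current menu group reach the decision point one at a time; to place label $a_x$ one applies \emph{Remove tile} to each of $a_0,\dots,a_{x-1}$ and to every sand tile encountered, applies \emph{Add tile} to $a_x$ (routing it onto the line being assembled), and applies \emph{Remove tile} to the remaining $a_{x+1},\dots,a_k$ and the trailing sand. Exactly one sticky tile per group survives, the group has $\Theta(k)$ tiles, and the decision point ends positioned at the start of the next menu group; hence the $hw$ pixels are built in $\Theta(hwk)$ tilts, and the $h$ \emph{Add line} operations contribute only $\mathcal{O}(h)$ more, for $\Theta(hwk)$ tilts total. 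By the pattern-representation definition this produces a single polyomino strongly representing the target $u$, so $C\rightarrow_* c'$ for the configuration $c'$ representing $u$; ranging over all $u\in U$ yields strong universality. (As in Theorem~\ref{thm:pat}, the tilt sequence is pattern-dependent, which is all the statement requires.)

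The step that needs care --- the main obstacle --- is checking that widening each per-pixel menu from $2$ tiles to $\Theta(k)$ tiles does not break the loading-chamber mechanics the primitives depend on. Concretely I would verify: (i) tiles still advance to the decision point strictly one at a time no matter how many remain queued, so no two tiles of a group ever clump and get committed together; (ii) after an \emph{Add tile}, the partially built line sitting in the horizontal arm of the loading chamber is not disturbed by the subsequent \emph{Remove tile}s that flush the rest of the group --- this is the same situation already handled in the binary case when the rejected tile and its sand are discarded, so the argument carries over unchanged; and (iii) the caveat from Table~\ref{tab:pat_sequences} still holds, i.e., the $\langle E,S\rangle$ inside \emph{Add tile}/\emph{Add line} cannot eject a tile because the preceding $N$ together with the guaranteed gap in the rightmost fuel column persists for the longer serpentine. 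None of these needs an idea beyond Theorem~\ref{thm:pat}; they are routine once the menu-group layout is fixed.
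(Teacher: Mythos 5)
Your proposal is correct and follows essentially the same route as the paper's own (much terser) proof: reuse the construction of Theorem~\ref{thm:pat} and enlarge the fuel chamber so that the serpentine repeats all $k+1$ labels (sand-buffered) once per pixel, selecting one and discarding the rest, which scales both the board size and the tilt count to $\mathcal{O}(hwk)$. Your added checks of the loading-chamber mechanics and the affinity function are just a more careful spelling-out of what the paper leaves implicit.
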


\begin{proof}
  We use the construction from Theorem \ref{thm:pat} and extend the fuel chamber to include $k$ colors. The length of the serpentine fuel line would then be $\mathcal{O}(hwk)$ meaning the width of our board would now scale with $k$ as well. Also, since each pixel requires the user to select a color and discard the others, the number of tilts would also scale with $k$. 
\end{proof}

\paragraph{General Shapes} With a simple extension of this result, we are able to achieve the construction of general shapes. By including sand in the building process and the finished pattern, we can construct any connected shape. This is weakly built by our definition since there are non-attached tiles built along with it. Figure \ref{fig:cat_shapes} shows this process.

\begin{theorem}\label{thm:gen_shapes}
  Given two positive integers $h,w \in \mathbb{Z}^+$, there exists a configuration $C$ which is \emph{weakly universal} for the set of shapes $U = \{ u | u \subseteq \{ 1,\dots,h\} \times \{ 1,\dots,w\} \}$. This configuration has size $\mathcal{O}(hw)$ and uses $\mathcal{O}(hw)$ tilts to reconfigure into a configuration which weakly represents any shape $u \in U$.
\end{theorem}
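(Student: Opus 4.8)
The plan is to obtain Theorem~\ref{thm:gen_shapes} from the patterned-rectangle constructor of Theorem~\ref{thm:pat} essentially for free, by reinterpreting one of the two pattern labels as sand. Fix $h,w$ and let $C$ be exactly the board of Theorem~\ref{thm:pat} (fuel, loading, and construction chambers, serpentine fuel line of $4hw$ tiles), but with the affinity function changed so that label $a$ bonds only to itself ($G(a,a)=1$) while label $b$ has no affinity with anything, i.e.\ $b$-tiles are sand. This changes nothing about the board geometry, the initial placement of fuel, or the tilt sequences of Table~\ref{tab:pat_sequences}: a tilt depends only on the positions of tiles and walls, not on which adjacencies happen to be bonds. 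Given a shape $u\subseteq\{1,\dots,h\}\times\{1,\dots,w\}$ (connected, per the definition of shape), let $p_u$ be the $h\times w$ pattern that places label $a$ on every cell of $u$ and label $b$ on every cell outside $u$, and run the tilt sequence that Theorem~\ref{thm:pat} supplies for building $p_u$.

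Next I would argue the output weakly represents $u$. By Theorem~\ref{thm:pat} the construction chamber ends up holding, in a translated copy of the $h\times w$ box, an $a$-tile at each cell of $u$ and a $b$-tile at each remaining cell; the tiles ``discarded'' by the remove-tile command leave the system, so no stray $a$-tile survives. Since $u$ is connected and $G(a,a)=1$, every pair of adjacent $a$-tiles is bonded, so the surviving $a$-tiles form a single polyomino whose coordinate set is a translation of $u$. Every other tile in the configuration (the $b$-tiles filling out the box) has no affinity with anything, hence each is a one-tile polyomino that does not count toward the represented shape. Thus $C$ with the modified affinity weakly, but not strongly, represents $u$; and since both the board and the tilt schedule are those of Theorem~\ref{thm:pat}, the board has size $\mathcal{O}(hw)$, reconfiguration uses $\mathcal{O}(hw)$ tilts, and only one bonding label is used, matching the corresponding row of Table~\ref{table:shapes}.

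The point that needs the most care is the loss of rigidity. In Theorem~\ref{thm:pat} every partially built line and the partially built rectangle is a single \emph{bonded} polyomino, and correctness of the schedule can quietly rely on this; once $b$ is sand, a line or partial rectangle is instead a collection of bonded $a$-runs interleaved with loose sand tiles, and a tilt could in principle let a sand tile drift away from the run it should abut — or let two sand tiles that ought to be adjacent in the final pattern separate — corrupting the assembly. I would resolve this by making explicit the invariant that throughout Table~\ref{tab:pat_sequences} each group of tiles is, at every intermediate step, held flush against a board wall or against other tiles with no free direction of travel relative to its neighbours until it is deliberately released into the next chamber; under each tilt the whole group then translates exactly as the bonded polyomino of Theorem~\ref{thm:pat} would, so bonds only ever determine which tiles get merged, never how tiles move. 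Checking this invariant for the line-transfer and rectangle-storage moves is the real work; at worst it costs a constant number of extra guide walls or a swap of two consecutive tilts, which disturbs neither the $\mathcal{O}(hw)$ size nor the $\mathcal{O}(hw)$ tilt count. (If $U$ is read literally, a disconnected $u$ has its $a$-tiles split into one polyomino per component rather than the single polyomino required by weak representation, so only connected $u$ — i.e.\ shapes — are covered.)
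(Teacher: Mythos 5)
Your proposal is correct and follows essentially the same route as the paper: reuse the patterned-rectangle constructor of Theorem~\ref{thm:pat} with one of the two labels demoted to sand ($\varepsilon$), so the single sticky label forms the connected shape and the sand tiles are the non-counting helpers, with the same $\mathcal{O}(hw)$ board size and tilt count. Your extra care about the loss of rigidity once one label becomes sand (and the remark that only connected $u$ are covered) is a sound elaboration of a point the paper leaves implicit, not a departure from its argument.
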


\begin{proof}
  Following Theorem \ref{thm:pat}, we know that we can construct any rectangular binary pattern.  By removing one of the labeled (``sticky'') tile types, we can create a binary pattern using only one labeled type and the $\varepsilon$ (sand) tiles.  Then the only connected portion of the shape is the parts with the one labeled tile type. Hence, we can build a connected shape surrounded by ``sand.'' This process results in a shape builder that is universal for the set of polyomino shapes that fit within an $h \times w$ bounding box. 
\end{proof}

\begin{figure}[H]
  \centering
  \begin{subfigure}[b]{.22\textwidth}
      \includegraphics[width=1.0\textwidth]{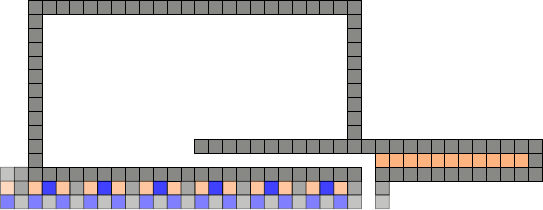}
      \caption{}
      \label{fig:hl1}
  \end{subfigure}
    \hspace*{.1cm}
  \begin{subfigure}[b]{.22\textwidth}
      \includegraphics[width=1.0\textwidth]{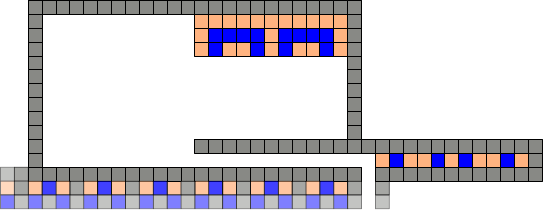}
      \caption{}
      \label{fig:hl2}
  \end{subfigure}
    \hspace*{.1cm}
  \begin{subfigure}[b]{.22\textwidth}
      \includegraphics[width=1.0\textwidth]{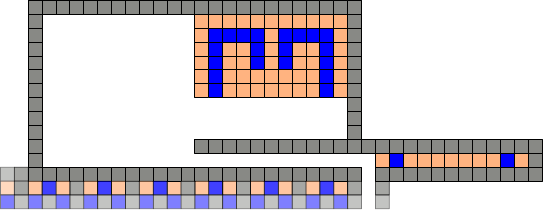}
      \caption{}
      \label{fig:hl3}
  \end{subfigure}
  	\hspace*{.1cm}
	\begin{subfigure}[b]{.22\textwidth}
	    \includegraphics[width=1.0\textwidth]{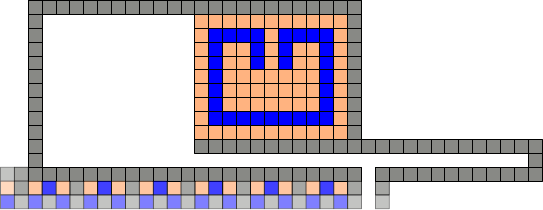}
	    \caption{}
		\label{fig:hl4}
	\end{subfigure}
    \caption{The shape construction process at different intervals. (a) The first row of the shape being built, however, the shape does not need this row and so only sand is added to the row. (b-c) Partially built shape with several disconnected sections of the shape being built simultaneously. (d) The finished shape with sand encasing the shape.}
    \label{fig:cat_shapes}
\end{figure}

\paragraph{Patterned Shapes} By keeping any number of labeled tiles, we can also build any patterned connected shape.

\begin{corollary}\label{thm:pat_shapes}
  Given three positive integers $h,w,k \in \mathbb{Z}^+$, there exists a configuration $C$ which is \emph{weakly universal} for the set of shapes $U = \{ u | u \subseteq \{ 1,\dots,h\} \times \{ 1,\dots,w\}, \ \text{and each pixel in u has a label} \ x \in \{ 0,1,\dots,k \}  \}$.  This configuration has size $\mathcal{O}(hwk)$ and uses $\mathcal{O}(hwk)$ tilts to reconfigure into a configuration which weakly represents any shape $u \in U$.
\end{corollary}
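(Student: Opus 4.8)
The plan is to combine the two generalizations already in hand: the $k$-label extension of Corollary~\ref{thm:pat_gen} and the ``sand-inclusion'' trick from the proof of Theorem~\ref{thm:gen_shapes}. Concretely, I would start from the configuration $C=(B,P)$ of Corollary~\ref{thm:pat_gen}, which is strongly universal for the set of $h\times w$ rectangles whose pixels carry labels from $\{1,\dots,k\}$ and whose board has size $\mathcal{O}(hwk)$, and argue that the \emph{same} configuration is already weakly universal for $U$.

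First I would observe that the fuel chamber of that construction alternates its $k$ sticky-tile labels with the affinity-free label $\varepsilon$ (sand), so the serpentine fuel line in fact presents the user with a choice among $k+1$ values at every pixel step: any of the $k$ sticky labels, or $\varepsilon$. I would keep exactly that fuel layout (the asymptotics are unchanged since $k+1=\mathcal{O}(k)$) and reuse the add-tile / remove-tile / add-line tilt sequences of Table~\ref{tab:pat_sequences}. Given a target patterned shape $u\in U$, I would build, pixel-by-pixel and row-by-row, the $h\times w$ rectangle whose pixel at position $(i,j)$ is the labeled tile $x$ when $(i,j)\in u$ carries label $x$, and is a sand tile otherwise; at each pixel the chosen label (or $\varepsilon$) is pulled from the fuel and the remaining candidates for that pixel are ejected via the remove-tile sequence, exactly as in Theorem~\ref{thm:pat}.

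Next I would argue that the resulting configuration in the construction chamber weakly represents $u$. Since $u$ is by definition a connected subset of $\{1,\dots,h\}\times\{1,\dots,w\}$ and the affinity function satisfies $G(a,a')=1$ for every pair of sticky labels, every pair of adjacent non-sand tiles is bonded; hence the non-sand tiles form a single bonded polyomino whose tile coordinates are exactly a translate of $u$ and whose labels realize the prescribed pattern, while the deposited $\varepsilon$-tiles stick to nothing and do not count toward the represented shape — which is precisely weak representation. Counting, the board has size $\mathcal{O}(hw(k+1))=\mathcal{O}(hwk)$, and each of the $hw$ pixels costs $\mathcal{O}(k)$ tilts for selection and ejection, plus an $\mathcal{O}(1)$ add-line step every $w$ pixels, for a total of $\mathcal{O}(hwk)$ tilts.

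The main point to be careful about — the only place the argument is not a literal quotation of the earlier proofs — is that inserting $\varepsilon$-tiles at the ``absent'' pixel positions must not break the mechanism: sand deposited in earlier rows must not drift and disconnect or displace the partially assembled shape during later add-line tilts, and a sand pixel at the leading edge of a line must still be carried correctly by the add-tile sequence. This is exactly the situation already handled in Theorem~\ref{thm:gen_shapes}, where sand appears both during construction and in the finished assembly; the blocked perimeter of the construction chamber together with the ordering of tilts in Table~\ref{tab:pat_sequences} keeps every already-placed tile (sticky or sand) pinned, so the only genuinely new content is the bookkeeping that interleaves $k$ colors with the sand placements. Everything else follows from Corollary~\ref{thm:pat_gen} and Theorem~\ref{thm:gen_shapes}.
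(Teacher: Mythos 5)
Your proposal is correct and takes essentially the same route as the paper, which proves this corollary exactly by combining the sand-inclusion idea of Theorem~\ref{thm:gen_shapes} with the $k$-color fuel chamber of Corollary~\ref{thm:pat_gen}, noting that board size and tilt count then scale with $k$. Your write-up merely spells out the bonding, connectivity, and counting details that the paper's one-line argument leaves implicit.
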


\begin{proof}
  By the same argument as Theorem \ref{thm:pat_gen}, we can add $k$ colors to our fuel chamber, causing the board size and number of tilts to scale with $k$. 
\end{proof}

\subsection{Additional Notes}
\paragraph{Keeping the Undesirable Tiles}
One aspect of the constructor that may be undesirable or infeasible for some pracitcal implemenations is the idea of removing pieces from the board. Although not shown explicitly, the constructor (and subsequently the tilt sequences) can easily be modified to handle the removed tiles in numerous alternate ways.
\begin{itemize}
    \item Trash. The easiest solution is to have a chamber that the tiles are thrown down into such that they become trapped (this can be achieved with a large enough chamber with one opening at the center of the top wall). The chamber must be large enough to accommodate pieces arbitrarily sticking together, and still prevent tiles from returning to the constructor.
    \item Only Sand as Trash. Each labeled (colored) sticky tile can have its own fuel chamber still using sand to separate each piece. Each fuel chamber would be a $2n^2 \times 1$ vertical chamber that requires a unique tilt sequence to get a piece of fuel of that type out. The only trash is then sand, so the trash chamber would also only need to be $h \times w$ large.
    \item Reusing tiles. With the standard fuel tank, the unwanted tile could be routed down and around to the back of the fuel chamber to be reinserted. This recycling would only require an extra chamber of sand to put between labeled pieces that were recycled consecutively because the sand in between them was used in the shape.
\end{itemize}

\paragraph{Freeing the shape}
Another possible drawback of the constructor as presented is that the shape is trapped in the constructor itself. The constructor may be easily modified to allow for one shape to be released and another constructed.
By enlarging the construction chamber to have a height of $2h$, and removing the top $h$ tiles from the left wall, we can create an opening for the finished shape to leave the constructor. Thus, the \emph{remove shape} tilt sequence $\langle N,W \rangle$ could be added, as it does not overlap with any of the pre-existing tilt sequences.

\section{Drop Shapes} \label{sec:Shapes}
Next, we consider a class of shapes discussed in \cite{BecFek2017TAMF} which we refer to as \emph{drop shapes}. A \emph{drop shape} is any polyomino which is constructable by adding particles from any of the four cardinal directions \emph{\{N, E, S, W\}} towards a fixed seed. For a formal definition of this class of shapes, see constructable polyominoes for the Tilt Assembly Problem (TAP) in \cite{BecFek2017TAMF}.
Figure \ref{fig:drop_shapes} shows an example of a valid and invalid drop shape.
Here, we present a shape builder which is strongly universal for the set of drop shapes.

\begin{figure}[H]
\centering
	\begin{subfigure}[b]{0.16\textwidth}
		\includegraphics[width=1.\textwidth]{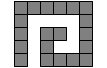}
		\caption{Valid}
	\end{subfigure}
	\begin{subfigure}[b]{0.16\textwidth}
		\includegraphics[width=1.\textwidth]{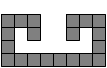}
		\caption{Invalid}
	\end{subfigure}
	\caption{Drop Shapes examples.  (a) This polyomino is buildable with the drop shape method, whereas (b) is a polyomino that is not a constructable drop shape. From a fixed single tile seed, it is not possible to build (b) by adding one tile at a time from a cardinal direction.}
	\label{fig:drop_shapes}
\end{figure}

\subsection{Universal Drop Shape Builder: Construction}
Figure \ref{fig:shuriken_overview} is an example of our universal drop-shape builder for polyominoes fitting within a $4\times 4$ bounding box. At a high-level, this construction works by following five phases, which are indicated by the labeled areas in Figure \ref{fig:shuriken_sections}.
\begin{enumerate}
    \item Select a red or blue tile from the fuel chamber. Area 1 shows the two types of fuel that stick to each other. Here, we use a sequence to pick the one we want.
    \item Choose which direction to add the tile from. We move the shape into the appropriate $N,S,E,$ or $W$ location and move the tile to the side we are adding from. As the area 2 labels show, we can move the new tile to the appropriate side of the shape.
    \item Choose which column/row to add the tile on the shape. This example is for any $4\times 4$ shape, so we choose columns (N/S) or row (E/W) to shoot the tile onto the shape where it will be added.
    \item This area is where the shape is held when the new tile is added. There is a different holding area for each of the four directions.
    \item In the last phase we move the shape to area 5 where it is held while we get the next tile to add. This holding chamber ensures the polyomino being built does not interfere with getting the next tile ready.
\end{enumerate}

A full overview of the process with a flowchart and the necessary tilt sequences is shown in Figure \ref{fig:Automaton} and Table \ref{tab:droptilts}, respectively.


\begin{figure*}[t]
  \centering
	\begin{subfigure}[b]{.45\textwidth}
        \includegraphics[width=1.0\textwidth]{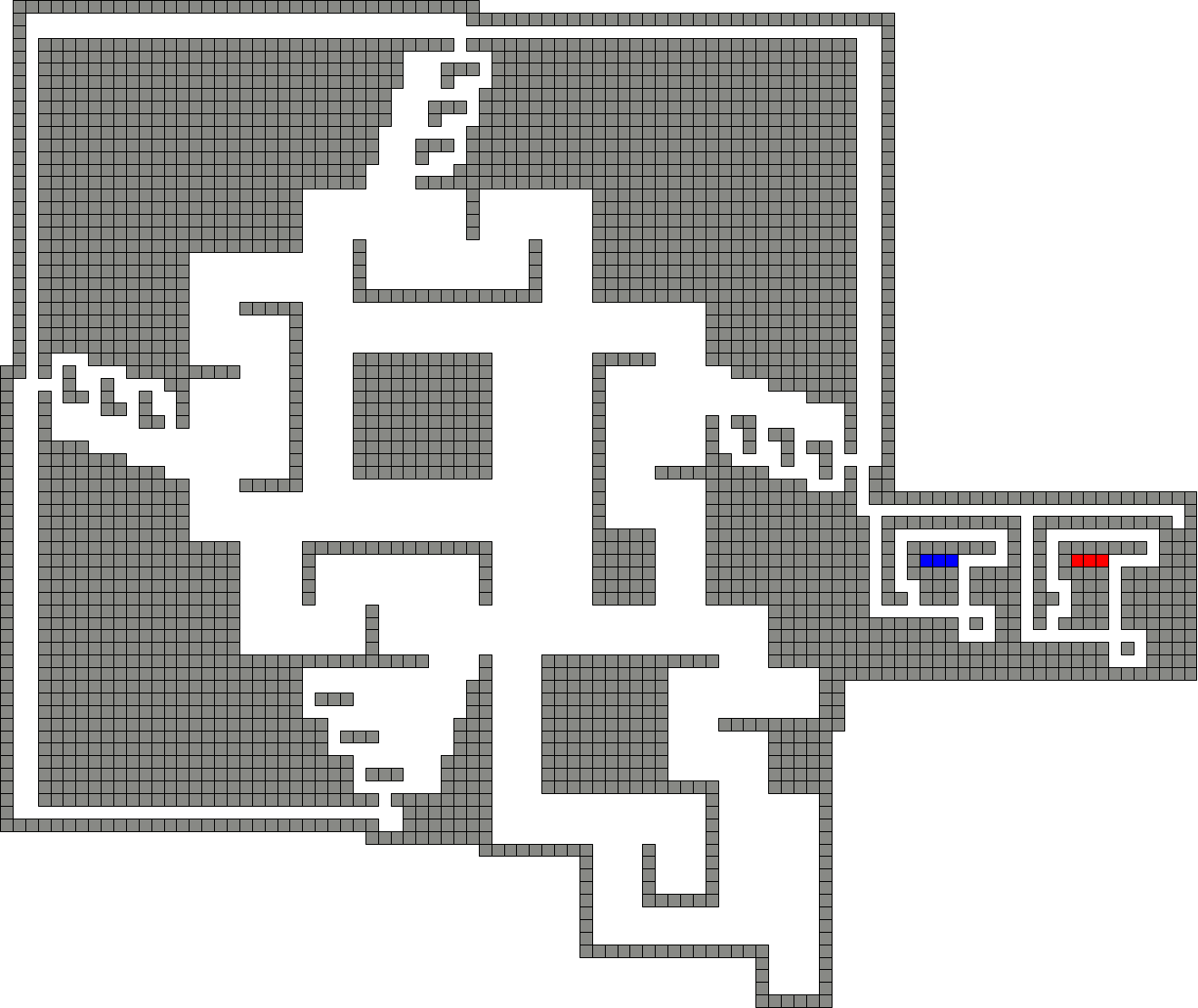}
        \caption{Starting Configuration}
        \label{fig:shuriken_start}
  	\end{subfigure}
		\hspace{.2cm}
		\begin{subfigure}[b]{.45\textwidth}
	        \includegraphics[width=1.0\textwidth]{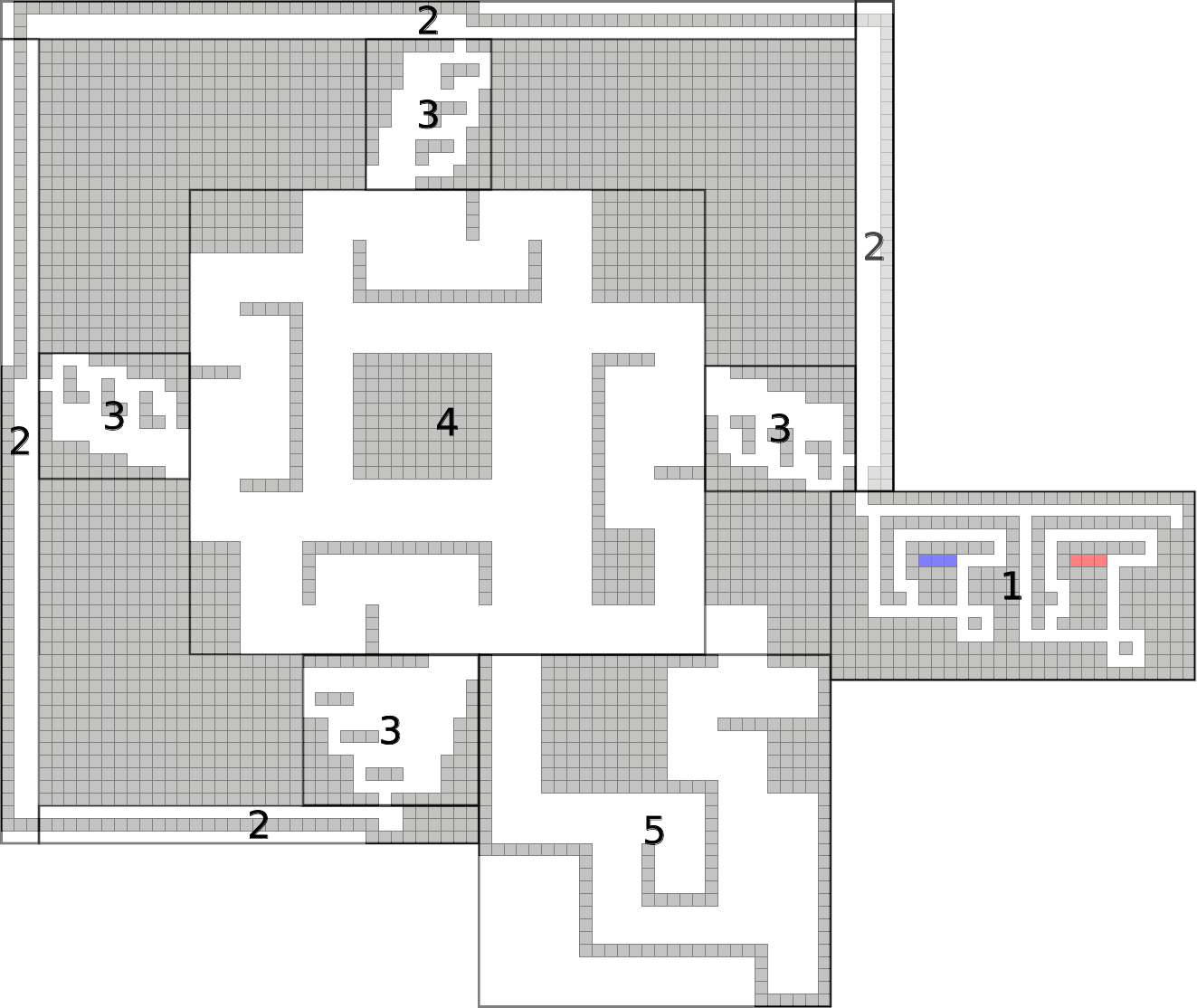}
	        \caption{Sections}
	        \label{fig:shuriken_sections}
	  	\end{subfigure}
    \caption{(a) The universal drop-shape builder. (b) An overview of the parts of the drop shape builder. Area 1 is the \emph{fuel chamber}, area 2 is the \emph{selection chamber}, the area 3's are the \emph{alignment chambers}, area 4 is the \emph{construction chamber}, and area 5 is the \emph{holding chamber}.}
    \label{fig:shuriken_overview}
\end{figure*}

\begin{figure}[H]
\centering
	\includegraphics[width=.45\textwidth]{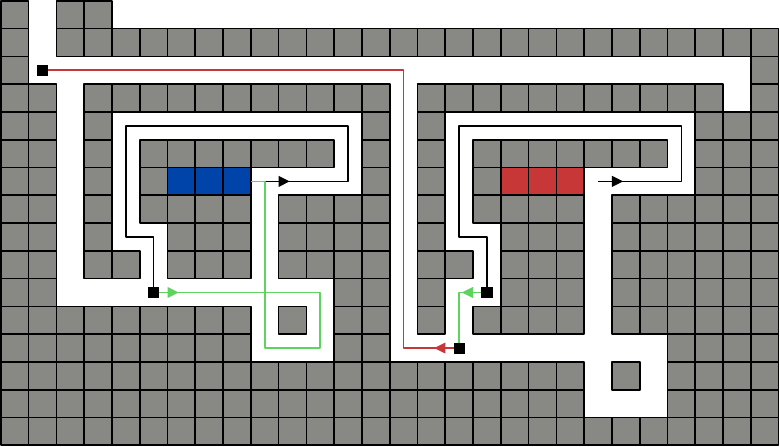}
	\caption{Tile selection gadget. Each tile is pulled out with the sequence $\langle E,N,W,S,E,S \rangle$, and stops at the first square. Then the left tile type (blue) is either pulled out of the gadget or put back in the storage area. This shows it being added back to the storage with $\langle E,S,W,N,W,S \rangle$. This sequence puts the next tile type (red) in a decision location. The red tile is selected with the sequence $\langle W,N,W,S,W,S \rangle$.
	}
	\label{fig:sec1}
\end{figure}
\paragraph{Fuel Chamber: Section 1}
The fuel chamber consists of one or more tile reservoirs. Each reservoir contains a tile type with the same attachment label. The tilt sequence $\langle E,N,W,S,E,S \rangle$ extracts one fuel tile out of each reservoir. After extraction, we select which tile we want one at a time. For each extracted tile, we can either perform a tilt sequence to return the fuel tile to its reservoir, or to select it to be used in shape construction. Figure \ref{fig:sec1} shows an example of selecting the second type of fuel. Once a fuel tile has been selected with the sequence $\langle W,N,W,S,W,S \rangle$, the storing tilt sequence must be performed for the rest of the fuel tiles that were not selected. To store a fuel tile the tilt sequence is $\langle E,S,W,N,W,S \rangle$. Note that the position of a selected fuel piece remains the same after an application of the storage sequence.

On all iterations after the first, we must also consider the position of the polyomino/assembly which is being constructed. Before executing the tilt sequence to extract a tile, the polyomino is located in the northmost eastmost notch of the holding chamber. A quick observation will note that the tilt sequence required to traverse the holding chamber is identical to the tile extraction sequence. Also note that the position of the assembly after an extraction, like the selected fuel piece, is invariant after each application of the storage sequence.  When the extracted fuel tile is to enter the selection chamber, the assembly will be ready to enter the construction chamber.

\paragraph{Selection Chamber: Section 2}
After selecting the fuel tile and storing the rest of the extracted tiles, we now move the fuel tile into the selection chambers. This stage of the construction simply selects the direction of attachment. After tilting $\langle N \rangle$, the fuel enters the eastern selection chamber and the assembly enters the construction chamber. To select an attachment from the east, perform the tilt sequence $\langle E,S,W \rangle$ (notice that this will position the assembly to the west of the eastern alignment chamber), else tilt $\langle W \rangle$ to select the next direction. After this western tilt, to select an attachment from the north, perform the tilt sequence $\langle N,E,S \rangle$ (notice this also positions the assembly below the northern alignment chamber), else tilt $\langle S \rangle$ to select the next direction. Now, after this southern tilt, to attach from the west perform the tilt sequence $\langle W,N,E \rangle$ (this too positions the assembly to the east of the western alignment chamber), else to select the last direction tilt $\langle E \rangle$. At this point, the fuel must be attached from the south. The sequence $\langle S,W,N \rangle$ will prepare the fuel tile for an attachment from the south (while also positioning the assembly to the north of the southern alignment chamber).

\paragraph{Alignment Chambers: Sections 3}
After an attachment direction has been chosen, the tile enters a gadget to select the row/column of the polyomino to target with the new tile. Figure \ref{fig:sec3} shows an example for a northern selection gadget for a $4\times 4$ polyomino. The tilt sequence to align with a particular location varies with respect to the direction of attachment. If the rightmost pixel of a northern attachment is chosen, the tilts $\langle W,S \rangle$ would be performed. Each successive column is chosen by performing the sequence $\langle E,S,W,S \rangle$. This sequence is repeated, each time moving tile further down the gadget, until the desired location is reached, at which point the attachment sequence $\langle W,S \rangle$ is executed. The alignment sequence for each attachment chamber does not affect the position of the assembly (it will remain positioned in the attachment chamber). Note the southern alignment gadget is slightly different but serves the same purpose. This difference is so the alignment and extraction sequences do not overlap.

\begin{figure}[H]
\centering
	\includegraphics[width=.18\textwidth]{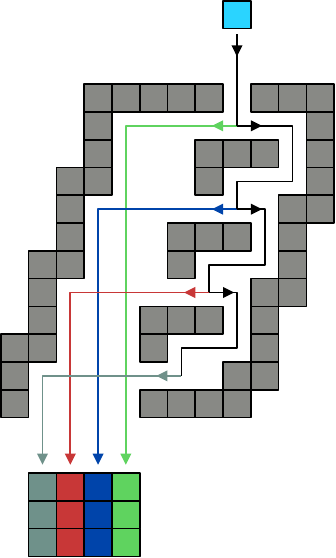}

	\caption{The column selection gadget for drop shapes. Assuming the shape to build is at a fixed location, this gadget allows any column to be selected to drop the new tile onto. The number of columns to drop from in this gadget determines the size of the shape we can build. Thus, this is for a drop shape within a $4\times 4$ bounding box. This gadget is repeated on each of the four sides of the drop-shape constructor (with a slightly modified one on the south side).
	}
	\label{fig:sec3}
\end{figure}

\paragraph{Construction Chamber: Section 4}
This central chamber is where the constructed assembly will be housed as it is being assembled. As we saw in the selection paragraph, the sequences required to position the assembly in front of a particular alignment chamber are the same as the sequences required to prepare the fuel piece to attach from that chamber. This, along with the alignment chamber process, allows us to pinpoint the attachment location of a fuel piece to the assembly.

\paragraph{Holding Chamber: Section 5}
Once the new tile has been added, we return the assembly to the holding chamber, which is where the assembly resides while the next fuel tile is being selected (Section 1). Once again, the target location in the holding chamber is the northmost westmost notch. As mentioned, the holding chamber's tilt sequences's are identical to that of the fuel chamber's. This ensures that we can select a fuel type without repositioning the assembly.

\begin{table*}
\centering
\begin{tabular}{|c|c|}\hline
    $s_{0}$ &  $\langle E,N,W,S,E,S \rangle + \langle E,S,W,N,W,S \rangle^{i} + \langle W,N,W,S,W,S \rangle + \langle E,S,W,N,W,S \rangle^{j}$ \\ \hline
    $s_{\text{E}1}$ & $\langle N,E,S,W,S\rangle$  \\ \hline
    $s_{\text{N}1}$ &  $\langle N,W,N,E,S\rangle$\\ \hline
    $s_{\text{W}1}$ & $\langle N,W,S,W,N,E\rangle$ \\ \hline
    $s_{\text{S}1}$ &   $\langle N,W,S,E,S,W,N\rangle$  \\ \hline
    $s_{\text{E}2}$ &   $\langle S,W,N,W\rangle ^{j} +  \langle N, W, S, E, S, E, S\rangle$ \\ \hline
    $s_{\text{N}2}$ &  $\langle E,S,W,S\rangle ^{j} +  \langle W, S, E, N, E, S, E, S\rangle$ \\ \hline
    $s_{\text{W}2}$ &  $\langle N,E,S,E\rangle ^{j} +  \langle S, E, N, W, E, S, E, S, E, S\rangle$ \\ \hline
    $s_{\text{S}2}$ & $\langle W,N \rangle ^{j} +  \langle E, N, W, S, W, N, E, S, E, S\rangle$ \\ \hline
\end{tabular}
    \caption{The sequence of tilts used in the flowchart (Figure \ref{fig:Automaton}). $s_0$ denotes the tile selection from $i+j+1$ different tile types and chambers. The sequence also places the current polyomino in the correct area for the next sequences (Figure \ref{fig:Launch}). For area 2 in Figure \ref{fig:shuriken_sections}, $s_{\text{E}1}$, $s_{\text{N}1}$, $s_{\text{W}1}$, $s_{\text{S}1}$ represents choosing to attach the new tile from the east, north, west, or south side, respectively. Similarly, the alignment selection (area 3), from the east, north, west, or south side, is represented by $s_{\text{E}2}$, $s_{\text{N}2}$, $s_{\text{W}2}$, and $s_{\text{S}2}$, respectively. Note there may be up to $j-1$ columns.}\label{tab:droptilts}
\end{table*}

\begin{figure*}
  \centering
	\begin{subfigure}[b]{.48\textwidth}
       		 \includegraphics[width=1.0\textwidth]{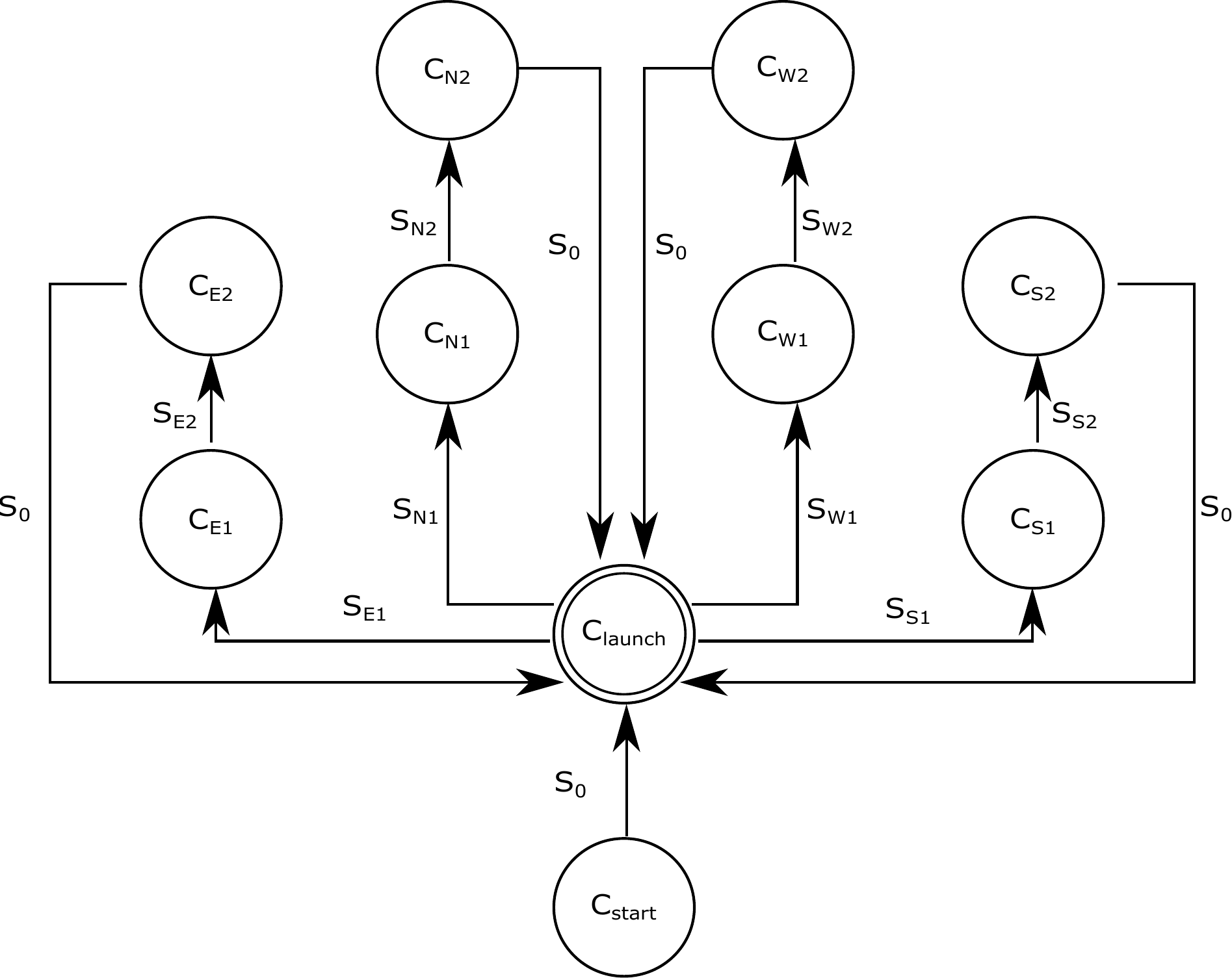}
       		  \caption{Flowchart}
	        \label{fig:Automaton}
  	\end{subfigure}
  		\hspace{.2cm}
	\begin{subfigure}[b]{.48\textwidth}
	        \includegraphics[width=1.0\textwidth]{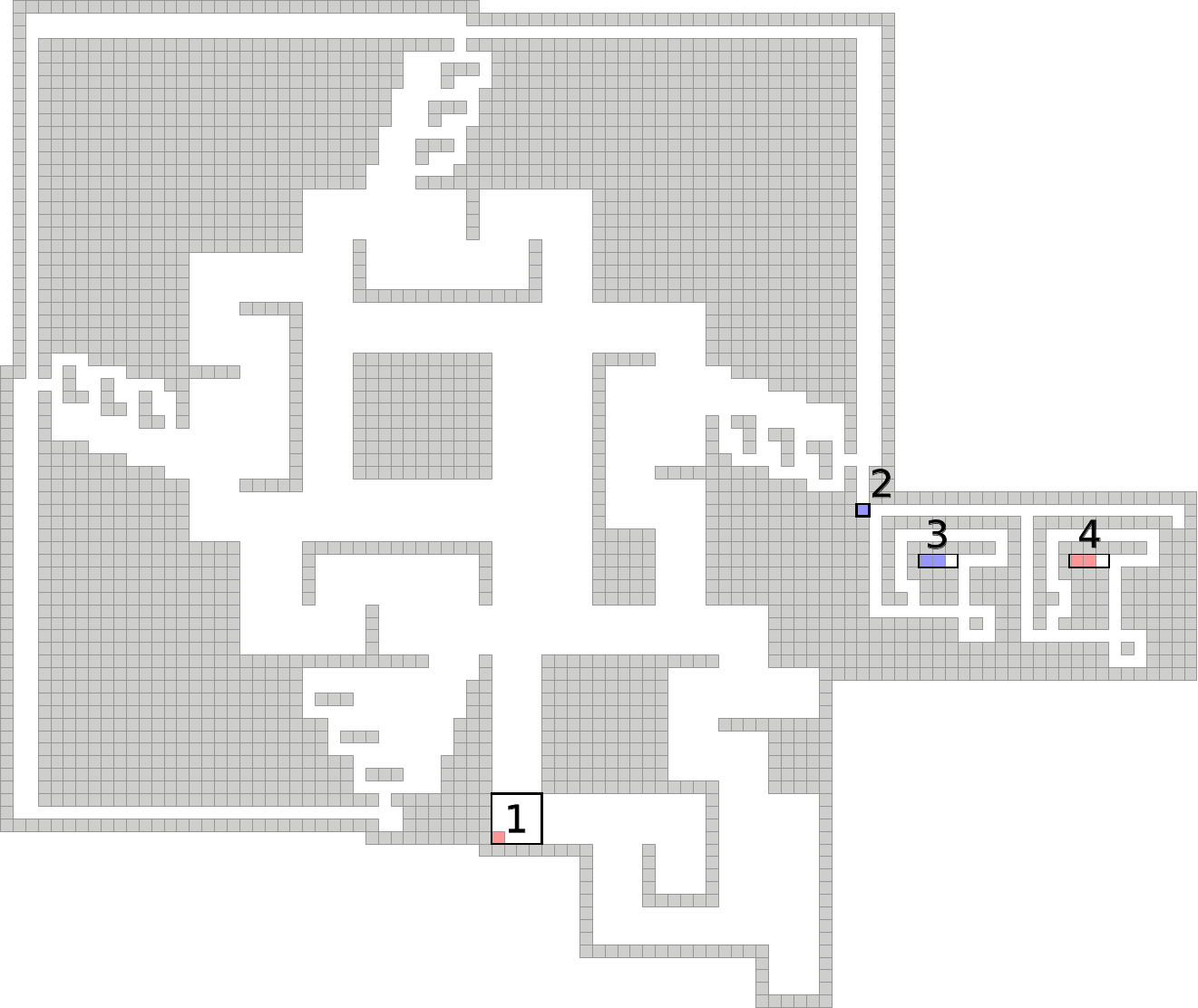}
	        \caption{Example $C_{\text{launch}}$ Position}
	        \label{fig:Launch}
	  \end{subfigure}
    \caption{(a) A flowchart where each state represents a set of configurations and the symbols  represent sequences that can move from one state to another. The sequences for each of the symbols is shown in Table \ref{tab:droptilts}. (b) An example configuration in the $C_{\text{launch}}$ state. Configurations in $C_{\text{launch}}$ always have the assembly located in box 1 at the rightmost bottom corner, the next fuel tile to be shot located in box 2, and all fuel pieces in the fuel chamber are in their proper reservoir pushed to the far left as depicted in box 3 and 4. A configuration \textit{strongly} representing a drop shape $u \in U$ is in  $C_{\text{launch}} $ with no more tiles to launch and the fuel chamber empty.}
    \label{fig:shuriken_proof}
\end{figure*}

\subsection{Universal Drop Shape Builder: Theorem and Proof}

\begin{theorem}\label{thm:shuriken}
	Given two positive integers $h,w \in \mathbb{Z}^+$, there exists a configuration $C$ which is \emph{strongly universal} for the set of drop shapes $U = \{ u | u \subseteq \{ 1,\dots,h\} \times \{ 1,\dots,w\} \}$. WLOG, let $h \geq w$. This configuration has size $\mathcal{O}(h^2w)$ and uses $\mathcal{O}(h^2w)$ tilts to reconfigure into a configuration which strongly represents any shape $u \in U$.
\end{theorem}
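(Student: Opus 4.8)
The plan is to exhibit the universal drop-shape configuration $C$ explicitly along the lines of Figures~\ref{fig:shuriken_overview}--\ref{fig:sec3}, verify its size bound, and then argue that the tilt sequences of Table~\ref{tab:droptilts} faithfully simulate the drop-shape construction process described in \cite{BecFek2017TAMF}, yielding a strong representation of any target $u \in U$. First I would fix the target bounding box dimensions $h \times w$ with $h \geq w$ and describe the five regions: a fuel chamber (Section~1) holding $O(hw)$ fuel tiles of a constant number of attachment labels (red/blue sticky tiles separated by sand) in stacked $2n^2 \times 1$-style reservoirs, a selection chamber (Section~2) of constant geometric complexity relative to the rest, four alignment chambers (Sections~3), each a column/row selection gadget with up to $\max(h,w) = h$ choice positions, a central construction chamber (Section~4) of dimension $O(h) \times O(h)$ large enough to house any assembly within the $h \times w$ box plus maneuvering slack, and a holding chamber (Section~5) whose traversal sequence coincides with the fuel-extraction sequence. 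Summing: the fuel chamber contributes $O(hw)$ area, but the construction and alignment chambers, together with the four-fold replication of the $O(h)$-tall alignment gadgets around an $O(h) \times O(h)$ core, force the total board to be $O(h) \times O(hw)$ in the worst case, giving board size $O(h^2 w)$; since each of the at most $O(hw)$ tile-additions uses a tilt sequence of length $O(h)$ (dominated by walking a fuel tile through an alignment gadget of length $O(h)$, or cycling through $O(hw)$ reservoir-storage steps — whichever is proved larger), the total tilt count is $O(h^2 w)$ as claimed. I would be slightly careful here to confirm that the $s_0$ selection cost (which can be $\Theta(i+j) = \Theta(hw)$ per step if many reservoirs are used) does not blow the bound past $O(h^2w)$; with only a constant number of tile types (two sticky plus sand) $i+j = O(1)$, so the dominant per-step cost is the $O(h)$ alignment walk and the $O(h^2w)$ bound holds.

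Next I would set up the invariant structure captured by the flowchart of Figure~\ref{fig:Automaton}: define the state $C_{\mathrm{launch}}$ (Figure~\ref{fig:Launch}) as the set of configurations in which the partially-built assembly sits in the northmost/eastmost notch of the holding chamber, all unselected fuel is flush-left in its reservoirs, and exactly the remaining fuel tiles are present. The core lemma to prove is a \emph{simulation step}: from any $C_{\mathrm{launch}}$ configuration whose assembly equals a partial drop-shape $S'$, and for any legal next move of the drop-shape process (attach a new tile to $S'$ from direction $d \in \{N,E,S,W\}$ along row/column $\ell$, where $1 \le \ell \le h$), there is a tilt sequence — namely $s_0$ followed by the appropriate $s_{d1}$ then $s_{d2}$ from Table~\ref{tab:droptilts} — reconfiguring $C$ into a $C_{\mathrm{launch}}$ configuration whose assembly is $S' \cup \{\text{new tile}\}$, with one fewer fuel tile. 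This is proved by tracing the four phases: (i) the extraction/selection gadget (Figure~\ref{fig:sec1}) pulls one tile of each reservoir forward and, via the store sequence $\langle E,S,W,N,W,S\rangle$ repeated appropriately, leaves the chosen tile in the decision location while returning the rest and — crucially — leaving the assembly's position invariant under the store sequence (this invariance is the key design property stated in the Section~1 paragraph); (ii) tilting $\langle N \rangle$ injects the fuel into the eastern selection chamber and simultaneously moves the assembly into the construction chamber, after which the branching sequences $\langle E,S,W\rangle$ / $\langle W\rangle$, etc., steer the tile to the $d$-side while co-positioning the assembly adjacent to the matching alignment chamber — I would verify that each branch's assembly-positioning side-effect is exactly the one the paragraph claims, so that alignment and attachment are consistent; (iii) inside the $d$-alignment gadget (Figure~\ref{fig:sec3}), the sequence $\langle W,S\rangle$ or repeated $\langle E,S,W,S\rangle$ walks the fuel tile to column/row $\ell$ without disturbing the assembly, and a final $\langle W,S\rangle$-type shot drives it onto the assembly, where affinity $G(\cdot,\cdot)=1$ among the sticky labels bonds it; (iv) the return sequence re-seats the merged assembly in the holding chamber's northmost/westmost notch, which — because the holding-chamber traversal sequence is identical to the fuel-extraction sequence — means the very next $s_0$ both extracts the next tile and re-homes the assembly, restoring the $C_{\mathrm{launch}}$ invariant. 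Then by induction on the number of drop-shape construction steps (at most $hw$, since $|u| \le hw$), starting from the all-fuel $C_{\mathrm{launch}}$ configuration and following the construction order witnessing $u \in U$, we reach a $C_{\mathrm{launch}}$ configuration with empty fuel chamber whose single polyomino is a translate of $u$; discarded tiles (the unselected red/blue of each step, plus a sand tile per step) are ejected via the store-then-eject routing, so no helper tiles remain, giving a \emph{strong} representation. Taking $\mathcal{C}$ to be the set of such terminal configurations over all $u \in U$ establishes that $C$ is strongly universal for $U$.

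The main obstacle I anticipate is not the high-level automaton argument but the low-level geometric bookkeeping that makes the simulation step actually go through: one must design the alignment gadgets on the four sides so that their walking sequences $\langle E,S,W,S\rangle$ (and the deliberately-modified southern variant) do not collide with, or prematurely trigger, the extraction sequence $\langle E,N,W,S,E,S\rangle$ or the selection-chamber branch sequences, and so that the "position-invariant under storage" and "position-invariant under alignment" properties hold simultaneously for the assembly and any in-transit fuel tile. This is precisely why the south alignment gadget is stated to differ from the other three — so that no prefix of one phase's sequence is an unintended move of another phase. I would handle this by (a) choosing the notch geometry of each chamber so that the assembly, once parked, is pinned on two sides and hence immovable under any sequence that does not begin with the unique "release" direction for that chamber, and (b) checking, sequence-by-sequence against Table~\ref{tab:droptilts}, that the composite sequence for each of the four directions maps $C_{\mathrm{launch}}$ to $C_{\mathrm{launch}}$ with the intended single-tile change — a finite case analysis (four directions $\times$ up to $h$ alignment positions, but the per-position step is uniform, so really four cases plus one induction on $\ell$). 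Once those invariants are nailed down, the size and tilt-count bounds follow by the counting above, completing the proof.
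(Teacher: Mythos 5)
Your proposal follows essentially the same route as the paper: the paper's proof is precisely this proof-by-construction, using the five-chamber builder (fuel, selection, alignment, construction, and holding chambers), the tilt sequences of Table~\ref{tab:droptilts}, and the flowchart of Figure~\ref{fig:Automaton} with the $C_{\text{launch}}$ invariant to add one fuel tile per iteration from any chosen direction and row/column until the fuel is exhausted. If anything, your phase-by-phase invariant checking (position-invariance of the assembly under the storage and alignment sequences, the deliberately modified southern gadget, and the induction over the drop-shape construction order) is more explicit than the paper's brief argument, which simply appeals to the construction and flowchart.
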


\begin{proof}
	This is a proof by construction. We begin with a configuration $C$ where all chambers are empty except the fuel chambers. Following the process outlined in Figure \ref{fig:Automaton}, we can extract the desired fuel piece, and the fuel piece can be moved from the alignment chamber to the shape via the construction chamber. We repeat this process to add single fuel pieces to the shape from any direction. After the fuel chambers are empty, we have generated a configuration $C'$ from $C$ by performing a series of tilts. Thus, $C \rightarrow_* C'$, and to show the transition from the starting configuration to the configuration that represents shape $u \in U$, we use the process shown in the flowchart of Figure \ref{fig:Automaton}.
\end{proof}

\section{PSPACE-complete Results}

This section is a continuation of the work done by \cite{BecDemFek2014PCRA, BecDemFek2014RMPS,BecDemFek2017PCCAL}. In their papers, they discuss the occupancy problem in a tilt-based system which asks whether a given location can be occupied by any tile on the board. We extend their line of investigation by introducing a set of new problems based on their work. We then go on to show PSPACE-completeness for the problems defined.

This section is divided into 5 main subsections. In Section~\ref{subsec:tilt_probs}, we present several natural decision problems for the tilt model. Section~\ref{subsec:puzzle} is where we summarize the puzzle solvability problem from \cite{DBLP:conf/fun/DemaineGLR18}, which we use for our PSPACE-hardness reductions. In Section~\ref{subsec:relocation}, we show that the relocation problem is PSPACE-complete, even when limiting all movable particles to $1\times 1$'s  with a single $2\times 2$. In Section~\ref{subsec:reconfiguration} we show that the reconfiguration problem is PSPACE-complete, even with this same limitation on the size of the movable particles. Finally, Section~\ref{limited_geometry} presents a different construction which also shows PSPACE-completeness for the relocation problem when using a combination of $1\times 1$, $1\times 2$, and $2\times 1$ particles, but using only rectangular board geometry. This same construction is used to show that the occupancy problem is PSPACE-complete as well.

\subsection{Problems in the Tilt Model}\label{subsec:tilt_probs}
In this section we define the three tilt decision problems we consider.

\paragraph{Occupancy Problem} The occupancy problem asks whether or not a given location can be occupied by any tile on the board. Formally, given a configuration $C=(B,P)$ and a coordinate $e \in B$, does there exist a tilt sequence such that $C \rightarrow_* C'$ where $C' = (B,P')$ and $\exists p \in P'$ that contains a tile with coordinate $e$?

\paragraph{Relocation} The relocation problem asks whether a specified polyomino can be relocated to a particular position. That is, given a configuration, a polyomino within that configuration, and a translation of that polyomino, does there exist a sequence of tilts which moves the original polyomino to its translation?

\paragraph{Reconfiguration} The reconfiguration problem asks whether a configuration can be reconfigured into another. Formally, given two configurations $C = (B,P)$ and $C' = (B,P')$, does there exist a tilt sequence such that $C \rightarrow_* C'$?

The occupancy problem was shown to be NP-Hard by \cite{BecDemFek2014PCRA, BecDemFek2014RMPS,BecDemFek2017PCCAL} when only using $1\times 1$ tiles. Their reduction also works to show NP-hardness for the relocation problem when only using $1\times 1$ tiles. Additionally, they show that finding the \emph{optimal} tilt sequence to reconfigure a board from one configuration to another is PSPACE-complete. In this section, we show that the relocation, reconfiguration, and occupancy problems are all PSPACE-complete.
Our hardness proofs rely on a recent result in \cite{DBLP:conf/fun/DemaineGLR18}. They indroduce the \emph{puzzle solvability} problem, which asks whether or not a robot can traverse a system of specific types of gadgets, and show that it is PSPACE-Complete. We present a summary of this problem, and then show our reductions.

\subsection{Puzzle Solvability}\label{subsec:puzzle}
The model introduced in \cite{DBLP:conf/fun/DemaineGLR18} is based on single-agent robot motion planning problems, where an agent navigates a given environment to a target destination. The authors introduce a general model of gadgets, of which the environment to be traversed may be comprised.
We define several key components of these gadgets, as well as the specific types of gadgets themselves. We then state the puzzle solvability problem, which we later reduce from.

Gadget Components:
\begin{itemize}
    \item{Locations.} A gadget consists of one or more \emph{locations}, which are the points of entry and exit to the gadget.

    \item{States.}  Each \emph{state} $s$ of the gadget defines a labeled directed graph on the locations, where a directed edge $(a, b)$ with label $s'$ means that the robot can enter the gadget at location $a$ and exit at location $b$, forcing the state to change to $s'$. The gadgets are defined by state spaces.  A \emph{state space} is a directed graph whose vertices are state and location pairs, where a directed edge from $(s, a)$ to $(s', b)$ means that the robot can traverse through the gadget from $a$ to $b$ if it is in state $s$, such that traversing will change the state of the gadget to $s'$. We use gadgets with at most two states.

    \item{Toggle.}  A \emph{toggle} is a tunnel that can only be traversed in a single direction as dictated by its state. Each toggle has 2 states dictating which direction a robot is allowed to traverse the tunnel. Tunnels are routes between two locations that the robot can traverse through. The state of the toggle gadget changes when the tunnel is traversed.

    \item{Lock.} A \emph{lock} is a tunnel which has two states, locked and unlocked. The unlocked state allows bidirectional traversal. The locked state does not allow traversal whatsoever.
\end{itemize}

Gadgets and Puzzles:
\begin{itemize}
  \item{C2T.}  \emph{Crossing 2-Toggle} is a gadget that has two toggle tunnels perpendicular to each other. Traversing either tunnel causes the gadget's state to change, which means the state (or direction) of both tunnels are changed (or reversed). Figure \ref{fig:C2T} shows a representation of both states of the C2T gadget.

  \item{CTL.} The \emph{Crossing Toggle-Lock} gadget also has two perpendicular tunnels. One tunnel is a toggle, and the other is a lock. Traversing the toggle switches the lock between its locked and unlocked state. Figure~\ref{fig:CTL} shows the representation of both states of the CTL gadget.

  \item{Puzzle.}  A \emph{puzzle} is a problem posed as a system of interconnected gadgets, their initial states, the wires connecting them, and the robot's start and goal location. A puzzle is said to be solvable if there is a path from the start location to the goal location using only moves allowed by the wires and gadgets.
\end{itemize}

\paragraph{Puzzle Solvability Problem} This problem simply asks if a given puzzle of gadgets is solvable. In other words, does there exists a sequence of moves that relocates the robot from its start location to its goal location? If the puzzle consists of only C2T gadgets, we refer to this as the C2T puzzle solvability problem. The same goes for the CTL gadget.

\begin{theorem}\label{puzzle_hardness}
  The C2T puzzle solvability problem and the CTL puzzle solvability problem are both PSPACE-complete.
\end{theorem}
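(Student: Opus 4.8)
The plan is to establish both membership in PSPACE and PSPACE-hardness; the membership direction is routine and handles both gadget families simultaneously, while the hardness direction is the substance. For membership, observe that a configuration of a puzzle is completely specified by the current location of the robot together with the state (one of at most two) of each gadget, so a configuration has size polynomial in the puzzle's description. Solvability is then just reachability of the goal configuration in the configuration graph, which (although exponentially large) is succinctly presented: successor configurations can be computed in polynomial space from a given one. Hence the problem lies in $\mathrm{NPSPACE}=\mathrm{PSPACE}$ by Savitch's theorem, for C2T puzzles and CTL puzzles alike.

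For hardness I would reduce from TQBF (truth of a fully quantified Boolean formula). First note that both C2T and CTL contain a pair of crossing tunnels, so wires may pass over one another at no cost and the construction need not be planar. The core of the reduction is to simulate, out of many copies of the target gadget, a higher-level \emph{door}-style gadget: a gadget with a ``set'' tunnel and a ``traverse'' tunnel that is passable only when the door is open. For CTL this is almost immediate --- the lock tunnel is the traverse tunnel and the toggle tunnel is the open/close control --- and only the parity of toggle traversals needs care. For C2T, which has no lock, the door must be assembled by chaining toggle tunnels so that traversing one path flips the accessibility of another, which is the delicate part (see below). Given reliable doors, I would lay out a linear chain of quantifier gadgets: the robot sweeps across assigning truth values to existential variables, a universal variable is implemented by a sub-construction that forces the robot to double back and re-traverse with the opposite value, and a formula-checking detour reads the current assignment through test/lock tunnels, releasing the robot only when every clause is satisfied. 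A standard accounting then shows the puzzle is solvable precisely when the quantified formula is true.

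The main obstacle is controlling ``cheating.'' Because every toggle traversal has the side effect of flipping the gadget's state, the robot might traverse gadgets out of the intended order, corrupting the door machinery or slipping through a supposedly locked passage; moreover the robot can enter a gadget, wander partway, and back out, so the simulated door must behave correctly under aborted partial traversals. The crux of the proof is therefore a collection of invariants --- maintained across the internal wiring of the simulated door and across the quantifier chain --- showing that any deviation from the intended protocol either achieves no net progress toward the goal or can be fully undone; here the determinism and reversibility of C2T are exactly the properties that make this bookkeeping manageable, since every ``mistake'' has an inverse move. Once the door gadget together with its robustness lemma is in place, the quantifier-chain layout and the equivalence ``puzzle solvable $\iff$ QBF true'' follow by a relatively mechanical argument, and the same template, using the direct door for CTL, completes that case as well.
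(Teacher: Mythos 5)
The paper does not prove this statement at all: its ``proof'' is a one-line citation to Corollary~5.2 of Demaine, Grosof, Lynch, and Rudoy, where both the C2T and CTL results are established. You instead attempt a from-scratch argument, so the right comparison is between your sketch and that cited work. Your membership argument is fine and matches the standard one (configurations are polynomial-size, so reachability is in NPSPACE $=$ PSPACE by Savitch). Your hardness plan --- simulate a door-style gadget out of copies of the target gadget and then run a quantifier-chain reduction from TQBF --- is indeed the same general strategy used in that line of work.

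The genuine gap is that the one step carrying all the technical weight is asserted rather than carried out. For C2T you say the door ``must be assembled by chaining toggle tunnels so that traversing one path flips the accessibility of another, which is the delicate part,'' but no such construction is given, and the subsequent robustness claim --- that any deviation from the intended protocol ``either achieves no net progress toward the goal or can be fully undone'' --- is exactly the invariant one must prove about that (missing) construction; it is not a consequence of determinism and reversibility in the abstract, since reversibility only guarantees the robot \emph{can} undo a move, not that a sequence of unintended traversals cannot leave the simulated door in a spuriously open state. This simulation and its correctness proof are precisely the content of the cited Corollary~5.2, so as it stands your argument assumes what it needs to prove. Two smaller points: using a C2T as a ``free'' crossover is not cost-free, because traversing either tunnel toggles the gadget and so leaves a state side effect that a cheating robot could exploit (in the gadget model the system is an arbitrary graph, so planarity is not even required and crossovers should not be invoked); and even the ``almost immediate'' CTL door needs an argument about traversal parity and about aborted entries into the toggle tunnel, which you flag but do not supply.
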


\begin{proof}
  This was shown in Corollary 5.2 from \cite{DBLP:conf/fun/DemaineGLR18}.
\end{proof}

\begin{figure}[t!]
  \centering
    \begin{subfigure}[t]{0.49\textwidth}
      \centering
      \includegraphics[width=0.5\textwidth]{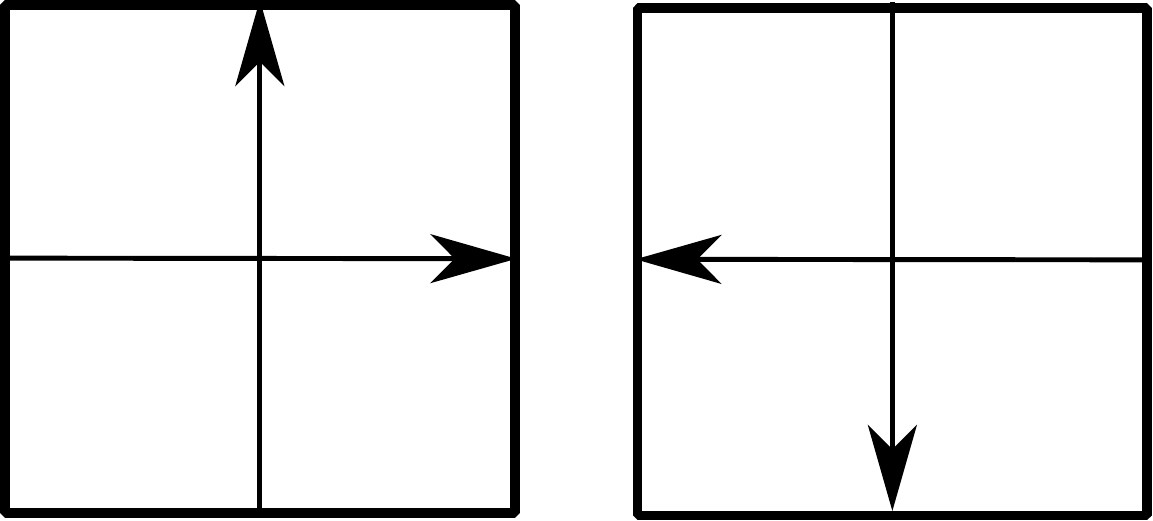}
      \caption{Crossing 2-Toggle (C2T)}
      \label{fig:C2T}
    \end{subfigure}
    \begin{subfigure}[t]{0.49\textwidth}
      \centering
      \includegraphics[width=0.5\textwidth]{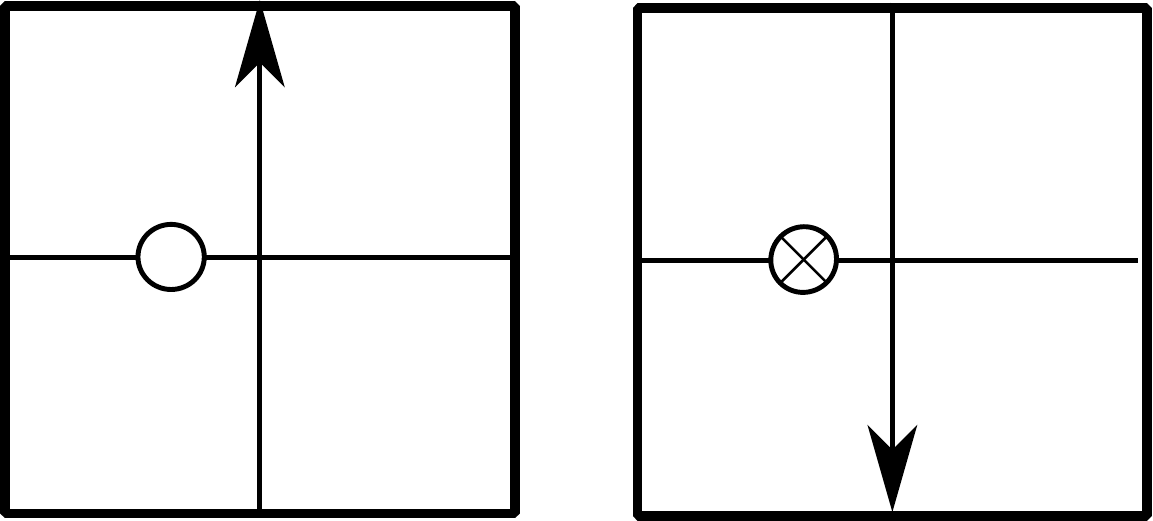}
      \caption{Crossing Toggle-Lock (CTL)}
      \label{fig:CTL}
    \end{subfigure}
    \caption{
    (a) The two states of the crossing 2-toggle gadget. Robot traversal through either tunnel toggles the gadget between these two states. (b) The two states of the crossing toggle-lock gadget. Robot traversal through the directed tunnel toggles the gadget between these two states (locked and unlocked).
    }
    \label{fig:gadget_stuff}
\end{figure}

\subsection{Relocation} \label{subsec:relocation}
The work from \cite{BecDemFek2014PCRA} implies that the relocation problem is NP-hard when using only $1\times 1$ movable particles. In this section, we show that the problem is PSPACE-complete when adding a single $2\times 2$ polyomino.


\subsubsection{Relocation Gadget Preliminaries}

Here we present a gadget in the tilt model that behaves like the C2T gadget described above. Figure \ref{fig:reloc_overview} shows an overview of the gadget. Figure \ref{fig:reloc_sec} shows the basic sections of the gadget which we describe below.
The main idea is that a $1\times 1$ tile (referred to as the \emph{state tile}) is confined to one of two different paths within the gadget and a $2\times 2$ polyomino (referred to as the \emph{robot polyomino}) may only traverse the gadget in a specified direction if the state tile is in the correct corresponding path.
To implement this dependency, we created locations within the gadget where the state tile and robot polyomino need to use each other's geometry to traverse the gadget.
Figures \ref{fig:reloc_all} and \ref{fig:reloc_robot} show the paths of the state tile and robot polyomino, respectively. We note some fundamental properties of the C2T gadget which we recreate in the full-tilt model. Figure \ref{fig:reloc_hl} shows an example of a pathway not reachable by the robot polyomino without the geometric assistance of a state tile. This is the same as the C2T gadget not allowing the robot's traversal through an exit location. In Figure \ref{fig:reloc_hl} we depict an example of a state tile confined to a pathway that is moved into another with the assistance of the robot polyomino. This depicts one of two different \emph{states} that the state tile can be in. This mimics the C2T's state changing function. By combining these elements we are able to create a complex gadget which is able to allow a robot to traverse through it only if the gadget is in the correct state.


\begin{figure*}
    \centering
    \begin{subfigure}[b]{0.32\textwidth}
          \centering
        \includegraphics[width=1\textwidth]{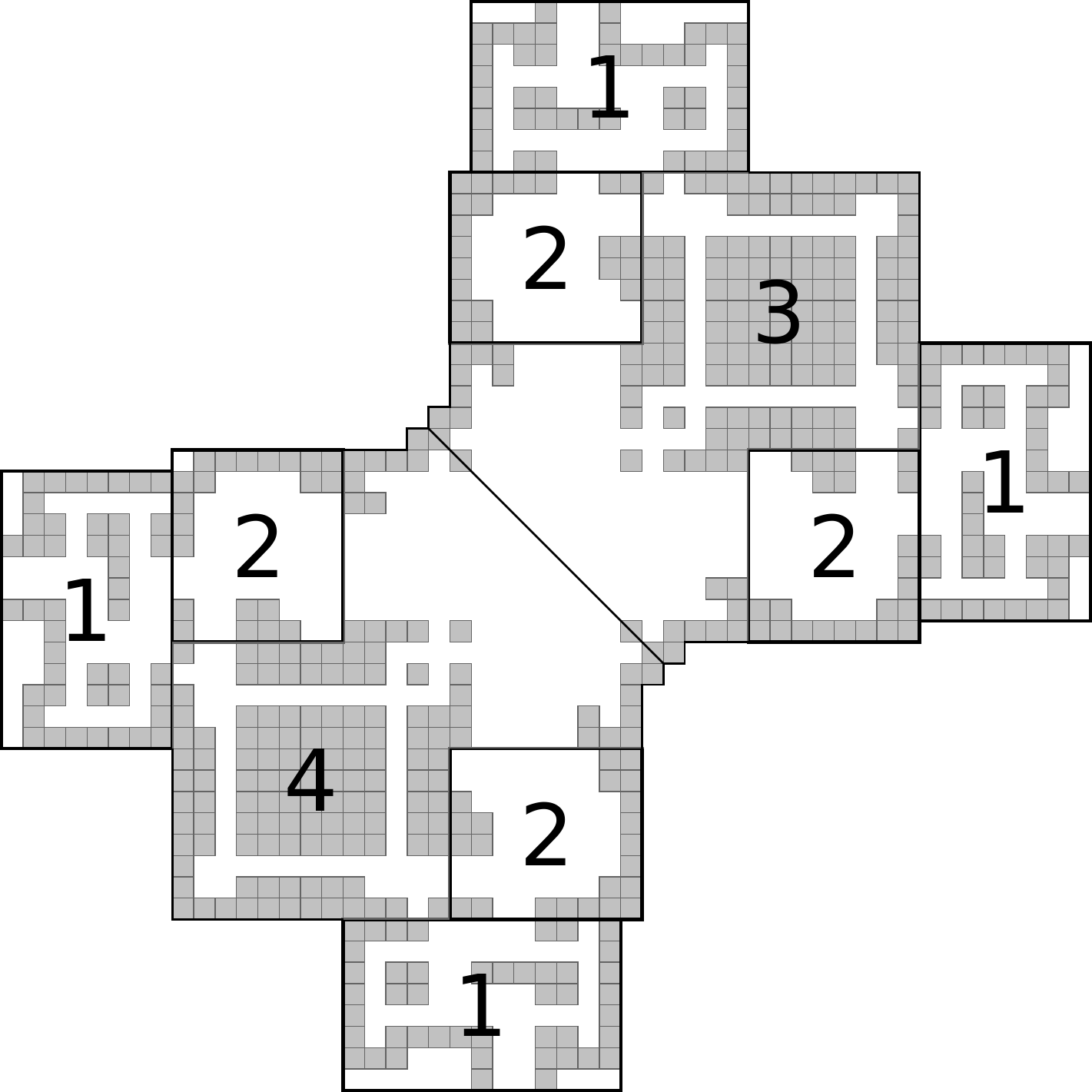}
        \caption{Sections of Gadgets}
        \label{fig:reloc_sec}
    \end{subfigure}
    \hspace{.1cm}
    \begin{subfigure}[b]{0.32\textwidth}
           \centering
         \includegraphics[width=1\textwidth]{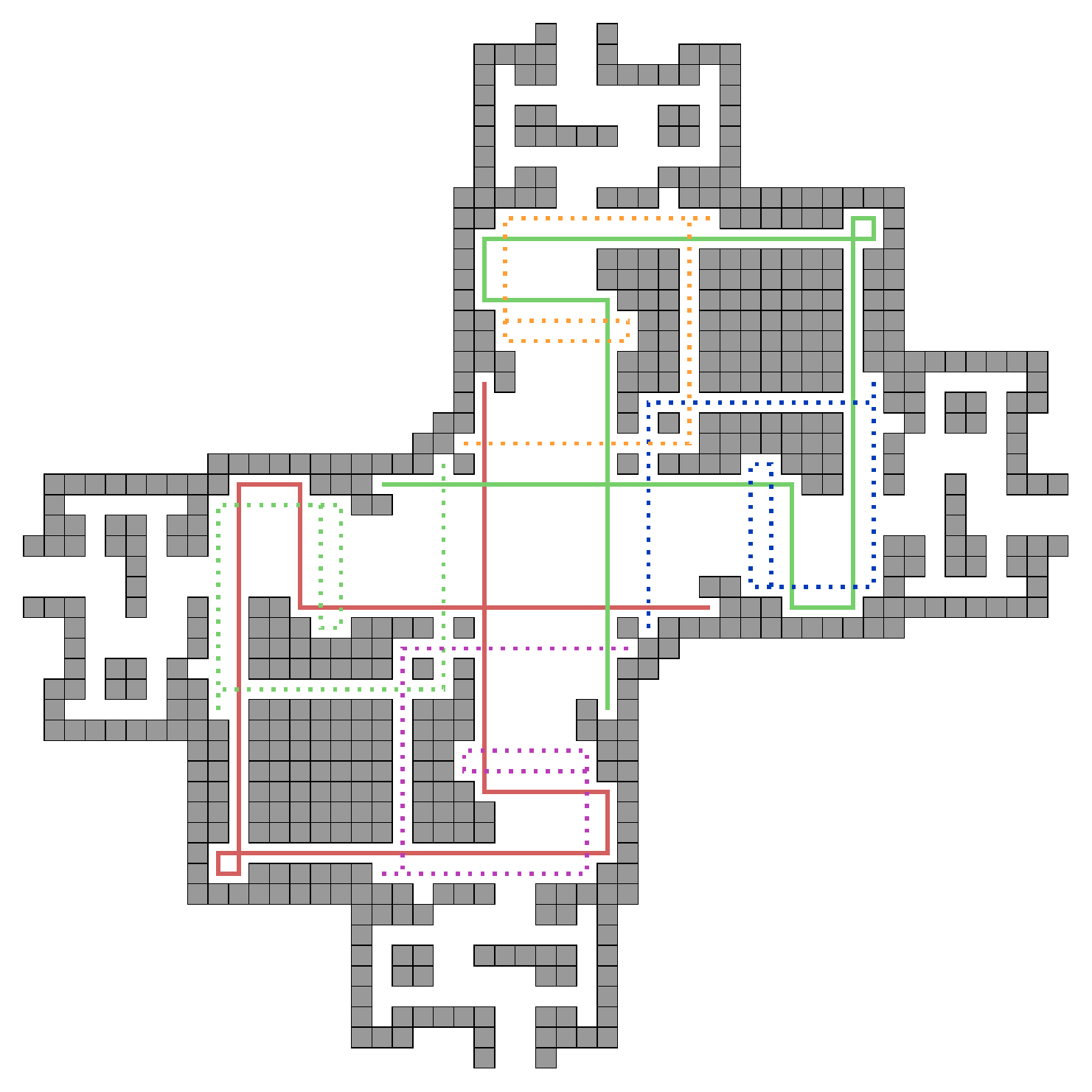}
         \caption{Paths of the State Tile}
         \label{fig:reloc_all}
    \end{subfigure}
    \hspace{.1cm}
    \begin{subfigure}[b]{0.32\textwidth}
           \centering
         \includegraphics[width=1\textwidth]{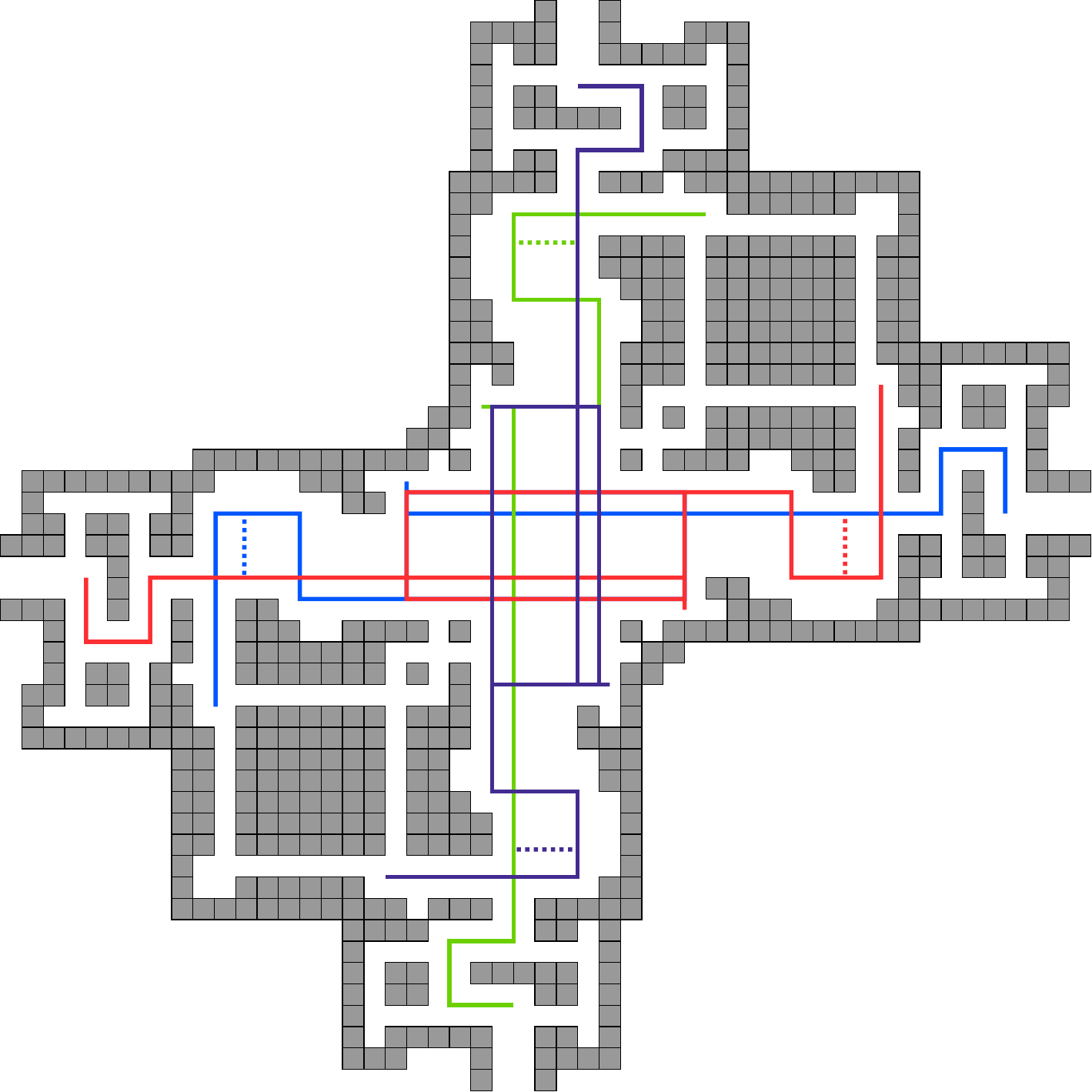}
         \caption{Paths of the Robot Polyomino}
         \label{fig:reloc_robot}
    \end{subfigure}
    \caption{(a) Relocation sections where 1) represents entrance/exits locations, 2) represents areas where the robot polyomino becomes stuck if unassisted by the state tile, and 3) and 4) represent the $NE$ and $SE$ state tile areas. State and robot paths are shown in (b) and (c), where the state tile is stuck on the solid lines and traverses through the dotted lines when changing states, and where the robot polyomino travels through one location to the next through the solid lines, unassisted by the state tile, or traverses the dotted lines if assisted by the state tile.}
    \label{fig:reloc_overview}
\end{figure*}

\begin{figure}[t]
\begin{subfigure}[b]{0.23\textwidth}
  \centering
       \includegraphics[width=1\textwidth]{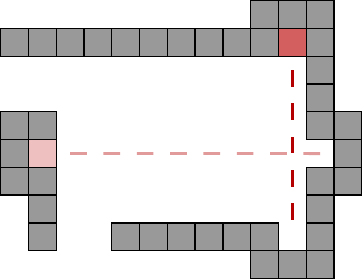}
       \caption{State Path Depiction}
       \label{fig:reloc_hl1}
   \end{subfigure}
   \begin{subfigure}[b]{0.23\textwidth}
   \centering
       \includegraphics[width=1.\textwidth]{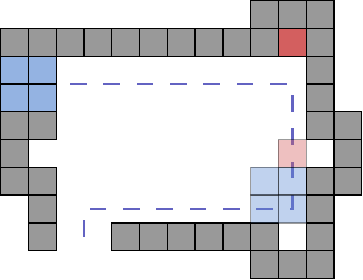}
       \caption{Enforced State Change}
       \label{fig:reloc_hl2}
   \end{subfigure}
   \begin{subfigure}[b]{0.23\textwidth}
     \centering
       \includegraphics[width=1.\textwidth]{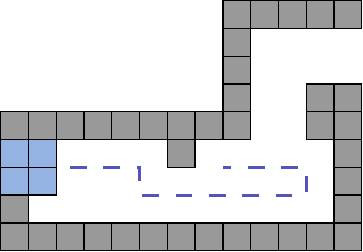}
       \caption{Robot Limited Path}
       \label{fig:reloc_hl3}
   \end{subfigure}
   \begin{subfigure}[b]{0.23\textwidth}
      \centering
        \includegraphics[width=1\textwidth]{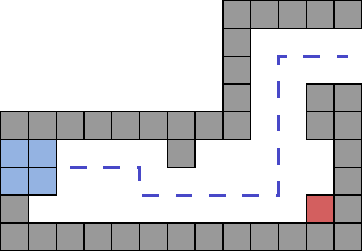}
        \caption{Robot Assisted Path}
        \label{fig:reloc_hl4}
   \end{subfigure}
   \caption{Relocation Properties. (a) A gadget with states 1 (dark) and 2 (light). (b) The robot-traversal ``toggles'' the gadget. (c) A gadget cannot be traversed by the robot alone. (d) The same gadget being robot-traversed with the help of the state tile.}
   \label{fig:reloc_hl}
\end{figure}

\paragraph{Robot Polyomino} The \emph{robot polyomino} is a $2\times2$ polyomino that traverses through a puzzle.
Figure \ref{fig:reloc_robot} shows all paths of the robot in a gadget. The dotted lines show its path without the help of the state tile.

\paragraph{State Tile} The relocation gadget has a \emph{state tile}, which is a single tile trapped inside the gadget. The state tile can freely move in one of the solid lines shown in Figure \ref{fig:reloc_all}, where the dotted lines depict a pathway traversable by the state tiles only with the assistance of the traversing robot polyomino's geometry.

\paragraph{Relocation Gadget}
Our Relocation gadget consists of openings at its four cardinal directions which we will call $N$,$E$,$S$,$W$. The internal geometry is diagonally symmetrical and allows for traversal between its openings. Wires are made with 2-tile wide hallways attached at gadget openings. A high-level overview of the four relocation gadget sections is shown in Figure~\ref{fig:reloc_hl}.

\paragraph{Entrance/Exit Chambers}
Marked as section 1 in \ref{fig:reloc_sec}, the \emph{entrance/exit chambers} represent \emph{locations} in a C2T gadget. These chambers do not enforce any behavior or move sequence constraints on a robot polyomino, letting it move in or out of the chamber with no difficulty. However, these chambers have spaces on its sides designed to keep the state tile within the gadget. Once a state tile becomes stuck in the spaces, the gadget becomes inoperable. These chambers change from entrance to exit chambers depending on the state of the gadget, where if the state tile is in the $NE$ side of the gadget, the $NE$ chambers become entrance chambers and the $SW$ side chambers become exit chambers and vice versa. These different states correspond to the depicted states of C2T gadgets in Figure \ref{fig:gadget_stuff}.

\paragraph{Assistance Chamber}
Marked as section 2 in \ref{fig:reloc_sec}, an \emph{assistance chamber} is where the state tile meets the robot polyomino to assist it in its traversal through the gadget. The robot polyomino can reach the assistance chamber opposite of where it entered from. The only way a robot polyomino can move through an assistance chamber is if the state tile is in the correct path. See Figure \ref{Example_Sequence} for an example travesal. If a robot polyomino enters a gadget whose state tile is in the opposite side, the robot polyomino becomes stuck inside the gadget. This forces the robot to enter through the correct entry points.

\paragraph{State Chambers}
Sections 3 and 4 in \ref{fig:reloc_sec} are the \emph{state chambers} of the relocation gadget. They store the state tiles and dictate the gadget's state. We refer to these states as the $NE$ state/path and the $SW$ state/path. See Figure \ref{fig:reloc_all} for the possible paths of the state tile in its two states. As shown in the figure, the $NE$($SW$) chamber allows a path for the state tile to move freely between its north (south) side and east (west) side.

\paragraph{Directed Tunnels}
We say a \emph{directed tunnel} $(a,b)$ exists in our relocation gadget if a sequence of tilts exists such that we can relocate our robot polyomino from location $a$ to location $b$. For example, the directed tunnel $(W,E)$ exists if our robot polyomino can travel from the west entrance of our relocation gadget to the east entrance.

\begin{figure*}[ht]
	\begin{subfigure}[b]{0.33\textwidth}
	    \centering
        \includegraphics[width=1\textwidth]{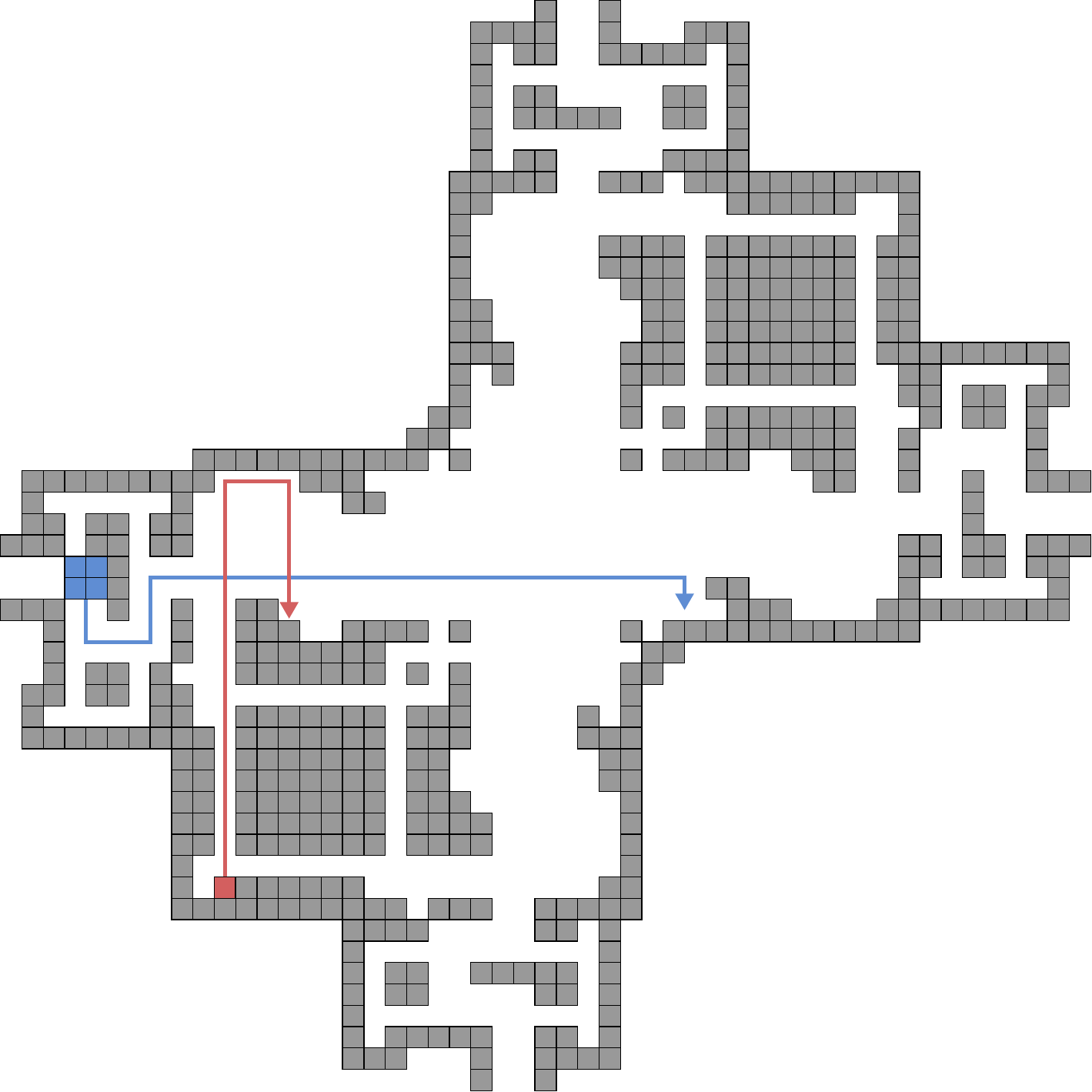}
        \caption{$\langle S, E, N, E, S \rangle$}
        \label{fig:t_ex_1}
    \end{subfigure}
	\begin{subfigure}[b]{0.33\textwidth}
	    \centering
        \includegraphics[width=1\textwidth]{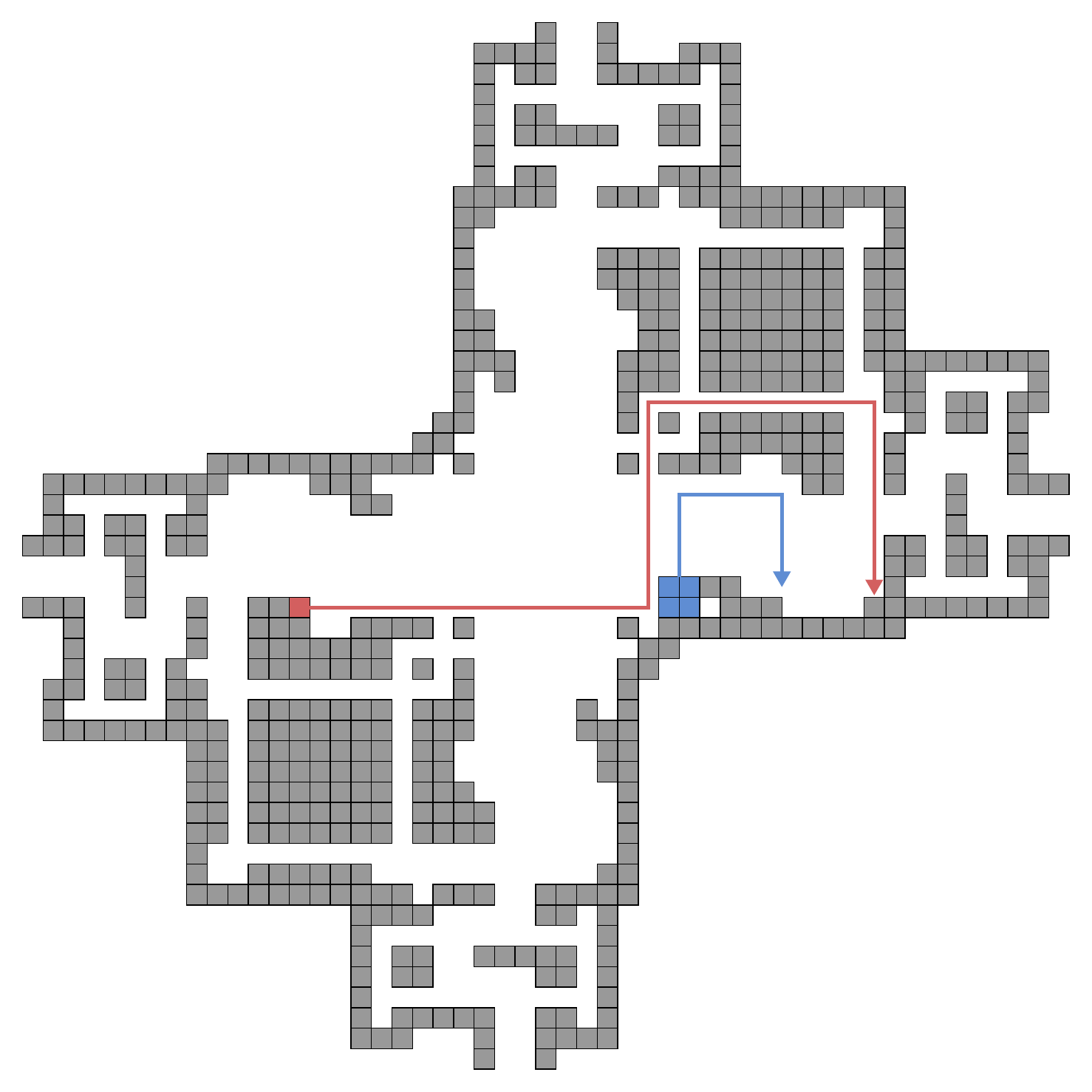}
        \caption{$\langle E, N, E, S \rangle$}
        \label{fig:t_ex_2}
    \end{subfigure}
    	\begin{subfigure}[b]{0.33\textwidth}
	    \centering
        \includegraphics[width=1\textwidth]{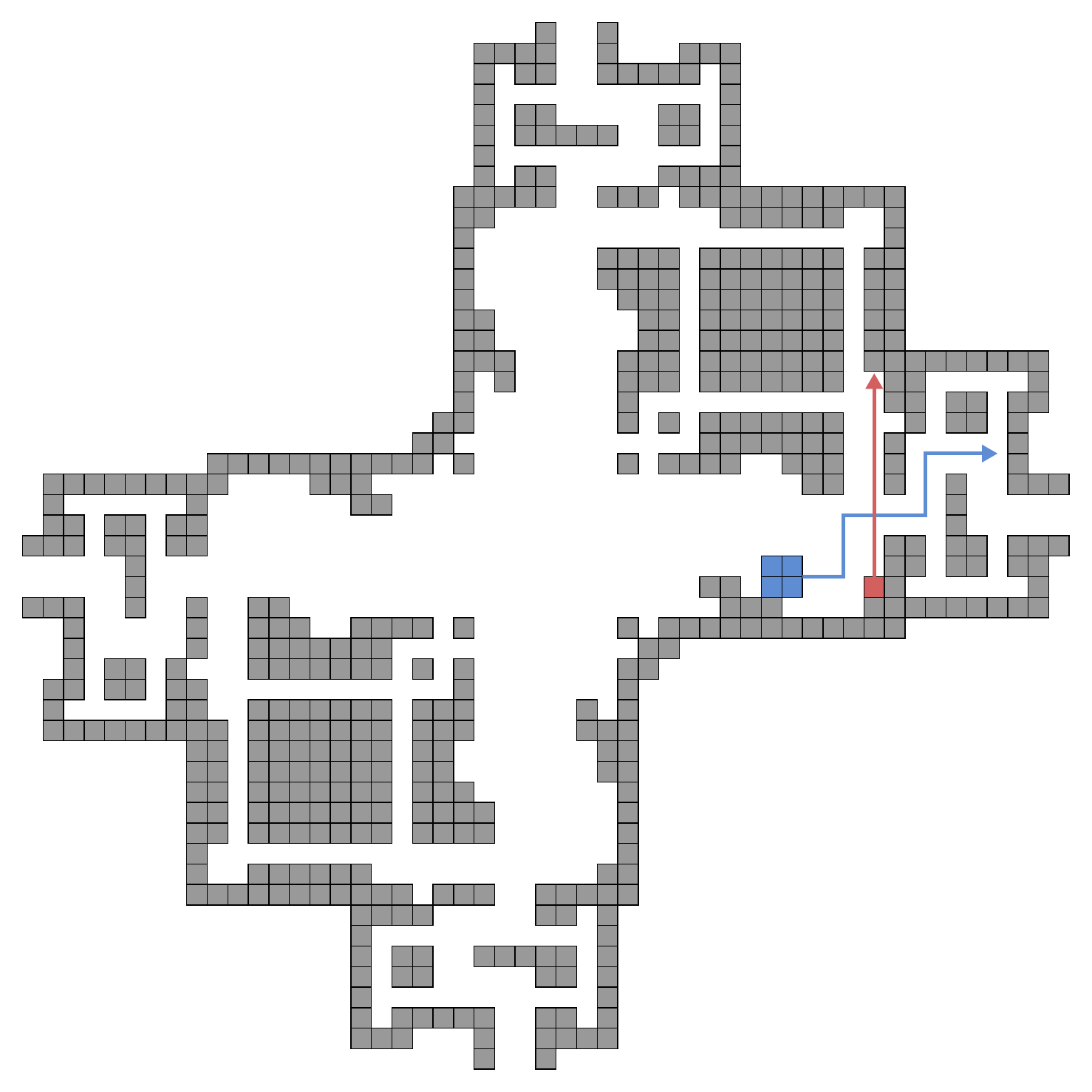}
        \caption{$\langle E, N, E, N, E \rangle$}
        \label{fig:t_ex_3}
    \end{subfigure}
    \caption{Example of a traversing sequence and state change when a gadget is in the state from Figure \ref{fig:C2T}. The robot polyomino enters the gadget in (a) and performs the sequence shown. The robot geometrically assists the state tile to traverse around in (b), and in (c) the state tile assists the robot polyomino via its geometry to exit through the opposite location. In the end the robot would have traversed the gadget, and the state tile would have gone from section 4's red path in Figure \ref{fig:reloc_all} to section 3's green path; Therefore, the gadget was toggled.}
    \label{Example_Sequence}
\end{figure*}

\begin{figure}[ht]
	\begin{subfigure}[b]{0.32\textwidth}
	    \centering
        \includegraphics[width=.5\textwidth]{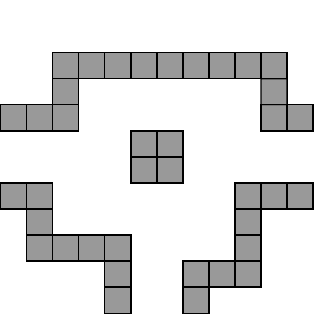}
        \caption{3-way intersection}
        \label{fig:t_shape}
    \end{subfigure}
    \begin{subfigure}[b]{0.32\textwidth}
    	\centering
        \includegraphics[width=.5\textwidth]{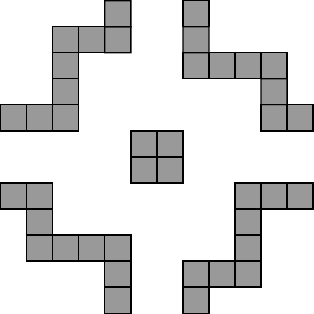}
        \caption{4-way intersection}
        \label{fig:four_way}
    \end{subfigure}
    \begin{subfigure}[b]{0.32\textwidth}
    	\centering
        \includegraphics[width=.6\textwidth]{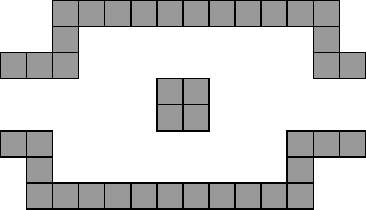}
        \caption{Intermediate Wire}
        \label{fig:inter_arrows}
    \end{subfigure}
    \caption{Intersections of tunnels. The geometric block in the center stops the robot and allows it to choose any of the tunnels to move through. (a) Three tunnels intersecting. (b) A four-way intersection. (c) The intermediate wire used in a system of reconfiguration gadgets to ``reset'' the state tiles.}
    \label{Intersections1}
\end{figure}

\paragraph{Intersections}
To allow robots to change directions in the paths between gadgets, we must have geometry to stop the robot and then choose which direction it will proceed. We place a $2\times2$ blocking piece of geometry in the middle of the wire and expand the surrounding area to allow the robot to make traversing decisions. Examples of 3-way and 4-way intersections are shown in Figures \ref{fig:t_shape} and \ref{fig:four_way}, respectively.

\subsubsection{Relocation is PSPACE-complete}
Here, we show the hardness reduction for the relocation problem. The central idea of our reduction is that our relocation gadget has the same functionality as the C2T gadget. In order to prove this, we prove a series of lemmas (using a brute-force proof-by-computer) about how our gadget works. This equivalent functionality is then used towards the main theorem of this section. We begin with a simple recursive algorithm which, given an initial configuration, creates a tree of all reachable configurations.

\begin{algorithm}
  \caption{Create Configuration Tree}
  \label{alg:config_tree}
  \begin{algorithmic}
    \Function{createTree}{startConfig}
      \State Create a new node for startConfig.
      \State Add startConfig to the list of visited configurations.
        \ForAll {d in \{N,E,S,W\}}
          \State Create new configuration by tiliting startConfig in direction d.
          \If {the new configuration is not in the list of visited configurations}
            \State Create a child node for direction d and call createTree on the new configuration.
          \EndIf
        \EndFor
    \EndFunction
  \end{algorithmic}
\end{algorithm}


\begin{lemma}\label{lem:config_tree}
  Given relocation gadget configurations $C$ and $C'$, there exists a node for $C'$ in the configuration tree created by \emph{createTree}($C$) if and only if  $C \rightarrow_* C'$.
\end{lemma}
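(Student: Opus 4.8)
The statement to prove is Lemma~\ref{lem:config_tree}: the configuration tree built by \emph{createTree}($C$) contains a node for $C'$ if and only if $C \rightarrow_* C'$.

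\textbf{Proof plan.} The plan is to prove both directions by induction, leveraging the fact that \emph{createTree} is exactly a depth-first exploration of the directed graph on configurations whose edges are single tilts, with a ``visited'' set preventing revisits. The key structural observation I would establish first is: a configuration $C''$ gets a node created for it during the run of \emph{createTree}($C$) if and only if $C''$ is reachable from $C$ via $\rightarrow_*$. Note that the recursion generates a child node for direction $d$ only when the tilt of the current configuration in direction $d$ yields a configuration not yet visited, but by that point the target configuration has already been visited (it was added to the visited list when its node was created), so no reachable configuration is missed.

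\textbf{Forward direction ($\Leftarrow$).} Suppose $C \rightarrow_* C'$. I would induct on the length $k$ of a shortest tilt sequence witnessing $C \rightarrow_* C'$. For $k = 0$ we have $C' = C$, and \emph{createTree}($C$) creates a node for startConfig $=C$ immediately. For the inductive step, write the sequence as $C \rightarrow_* C'' \rightarrow_1 C'$ where $C''$ is reached in $k-1$ tilts; by the induction hypothesis a node for $C''$ exists in the tree. The subtlety: having a node means \emph{createTree} was invoked on $C''$ at some point, so the \textbf{for all} loop over $\{N,E,S,W\}$ ran on $C''$. The tilt $d$ taking $C''$ to $C'$ is considered in that loop. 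Either $C'$ was already visited (hence already has a node — done), or it was not, in which case a child node for $d$ is created and \emph{createTree}($C'$) is called, creating a node for $C'$. Either way $C'$ has a node.

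\textbf{Reverse direction ($\Rightarrow$).} Suppose a node for $C'$ exists in the tree. Every node other than the root is created inside a recursive call \emph{createTree}($C'$) that was itself triggered from a parent node $C''$ via a specific direction $d$ with $C'' \rightarrow_1 C'$. Tracing parent pointers back to the root gives a finite chain $C = C_0 \rightarrow_1 C_1 \rightarrow_1 \cdots \rightarrow_1 C_m = C'$, so $C \rightarrow_* C'$ by transitivity; for the root itself $C' = C$ and $C \rightarrow_* C$ trivially. I would note termination of \emph{createTree} (finitely many configurations on a fixed finite board, visited set strictly grows) only to the extent needed to make ``the tree'' well-defined.

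\textbf{Main obstacle.} The one point requiring care — and the place a careless proof would slip — is the interaction between the visited-list check and the forward direction: a node for $C'$ might be created as a descendant of a configuration reached along a path entirely different from the witnessing tilt sequence, and conversely the \textbf{if} guard could in principle skip creating a child even though the configuration is reachable. The resolution is the invariant that the visited list at any moment contains only configurations that already have nodes, so ``skipped because already visited'' still implies ``has a node.'' Making this invariant precise (e.g., by induction on the order in which configurations are added to the visited list, or by a standard DFS-correctness argument) is the technical heart; everything else is routine bookkeeping about parent chains and transitive closure.
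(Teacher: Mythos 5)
Your proof is correct and follows essentially the same route as the paper, which simply appeals to the exhaustive recursion of the algorithm (``all reachable configurations are generated'') without spelling out the standard DFS-correctness details; your two-directional induction and the visited-list invariant just make that brief argument rigorous.
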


\begin{proof}
  Since algorithm~\ref{alg:config_tree} recursively calls itself until the base case where all four child configurations have already been visited, it is easy to see that all reachable configurations are generated by this algorithm.
\end{proof}

\paragraph{Forest Generation} For the following lemmas, we use a brute-force proof-by-computer approach in which we generate a forest of trees (each created by calling algorithm~\ref{alg:config_tree}) from each of the configurations where the robot polyomino is located in any of the four entrance/exit chambers. In other words, a tree is generated for every combination of robot polyomino entrance/exit location and all possible state tile locations. We can easily modify algorithm~\ref{alg:config_tree} to flag nodes as north, east, south, or west nodes if their corresponding configuration has the robot polyomino located at the north, east, south, or west entrance/exit chambers. Sourcecode: \href{https://github.com/asarg/TumbleTiles}{https://github.com/asarg/TumbleTiles}.

\begin{lemma}\label{lem:relocation_tunnels_NE}
  Directed tunnels $(N,S)$ and $(E,W)$ exist in a relocation gadget if and only if the gadget is in the $NE$ state.
\end{lemma}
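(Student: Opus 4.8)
The statement is exactly the kind of claim the machinery built up just before it is designed to dispatch: a finite, bounded gadget together with a single $2\times 2$ robot polyomino and a single trapped state tile has a finite configuration space, so ``$C \to_* C'$?'' is decidable by the search of Algorithm~\ref{alg:config_tree}, and by Lemma~\ref{lem:config_tree} a node for $C'$ appears in \emph{createTree}$(C)$ iff $C \to_* C'$. The plan is therefore to phrase the lemma purely as a reachability question about the forest described in the Forest Generation paragraph (one tree per pair (robot entrance/exit chamber, state-tile position), with nodes flagged $N/E/S/W$ by the robot's chamber) and read the two implications off that forest.

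\textbf{Sufficiency ($\Leftarrow$).} First I would assume the gadget is in the $NE$ state, i.e.\ the state tile occupies some position of the $NE$ path, and exhibit the directed tunnels. For the root configuration with the robot in the $N$ entrance chamber (and likewise $E$), I would point to a node flagged $S$ (resp.\ $W$) in \emph{createTree} of that configuration; a concrete witnessing tilt sequence is the one illustrated in Figure~\ref{Example_Sequence}, but the only thing that matters is that such a flagged node is reachable. Because the gadget's internal geometry is diagonally symmetric, it suffices to verify one of the two tunnels explicitly and invoke the $N\!\leftrightarrow\! E$, $S\!\leftrightarrow\! W$ symmetry for the other; I would state this symmetry precisely so the reduction of work is rigorous.

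\textbf{Necessity ($\Rightarrow$).} Here I would use that the relocation gadget has exactly two ``live'' states, $NE$ and $SW$ (every other state-tile position wedges the tile into a side pocket of an entrance/exit chamber and leaves the gadget inoperable, so it contributes no tunnels); this claim itself is part of what the forest search certifies and I would flag it as such. It then suffices to show that in the $SW$ state neither $(N,S)$ nor $(E,W)$ is a directed tunnel: for each of the finitely many $SW$ state-tile positions, with the robot in the $N$ entrance chamber, generate \emph{createTree} exhaustively and confirm that no node is flagged $S$; by diagonal symmetry the same computation rules out $(E,W)$ from the $E$ chamber. Since by Lemma~\ref{lem:config_tree} the tree contains \emph{every} reachable configuration, the absence of an $S$-flagged node is a genuine impossibility proof, not merely a failed search. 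Combining the two implications gives the iff.

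\textbf{Main obstacle.} The mathematical content is light; the real work is making the proof-by-computer trustworthy. I would need to argue that the configuration space is finite (immediate from the bounded board and a single state tile plus a single robot polyomino), that the enumeration genuinely ranges over \emph{all} legal $SW$ state-tile placements and all four robot entrance chambers, and — most delicately — that the simulator's step/tilt semantics coincide exactly with the definitions in Section~\ref{Prelims}, including maximal sliding, the ``temporarily blocked'' recursion, merging on newly bonded edges, the robot getting stuck in an assistance chamber, and the state tile getting trapped in a side pocket. I would also fix the convention that ``the robot is at location $a$'' means it sits in (or at the far end of the $2$-wide hallway attached to) chamber $a$, so that the flagged nodes correctly model arrivals at gadget locations. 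With those modelling points pinned down, the forest output settles both directions and the lemma follows.
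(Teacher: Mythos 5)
Your proposal is correct and follows essentially the same route as the paper: both directions of the equivalence are reduced, via Lemma~\ref{lem:config_tree}, to reachability of $S$/$W$-flagged nodes in the brute-force forest generated by Algorithm~\ref{alg:config_tree} over all robot entrance chambers and state-tile positions, and then verified by exhaustive search. Your reorganization of the necessity direction as a contrapositive over $SW$ (and inoperable) state-tile placements, plus the diagonal-symmetry shortcut, amounts to the same computational verification the paper performs directly by checking that every tree with a north (east) root containing a south (west) node starts with the state tile on the $NE$ path.
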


\begin{proof}
  To prove this, we show that the existance of directed tunnels $(N,S)$ and $(E,W)$ implies that the gadget is in the $NE$ state, and that the gadget being in the $NE$ state implies the existance of directed tunnels $(N,S)$ and $(E,W)$.
  By Lemma~\ref{lem:config_tree}, we see that the $(N,S)$ / $(E,W)$ directed tunnels exist for any north (east) configuration which is the root of a tree that contains a south (west) node (i.e. there is a tilt sequence to move the robot polyomino from the north entrance to the south).
  For the first implication, we search the brute-force forest and verify that all trees which have a north (east) root and contain a south (west) node also begin with the state tile at some location in the $NE$ path.
  For the second implication, we search the brute-force forest and verify that all trees which have a north (east) root and begin with the state tile at some location in the $NE$ path contain a south (west) node.
\end{proof}

\begin{lemma}\label{lem:relocation_tunnels_SW}
  Directed tunnels $(S,N)$ and $(W,E)$ exist in a relocation gadget if and only if the gadget is in the $SW$ state.
\end{lemma}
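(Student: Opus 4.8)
The plan is to \emph{not} rerun a fresh brute-force search, but to reduce Lemma~\ref{lem:relocation_tunnels_SW} to Lemma~\ref{lem:relocation_tunnels_NE} using the symmetry of the relocation gadget. The gadget's internal geometry is diagonally symmetric, and there is a rigid motion $g$ of the board (a $180^\circ$ rotation about the gadget's center, or equivalently a reflection across the $NW$–$SE$ diagonal) under which the set of open locations, the walls, the state tile's allowed paths, and the robot polyomino's shape (a $2\times 2$ square, hence $g$-invariant) are all preserved, and under which the $NE$ state chamber is carried onto the $SW$ state chamber. On entrances/exits, $g$ realizes the direction swap $N\leftrightarrow S$, $E\leftrightarrow W$ (for the rotation) — in particular it maps the tunnel pair $\{(N,S),(E,W)\}$ to the pair $\{(S,N),(W,E)\}$ — and it maps the ``$NE$ state'' to the ``$SW$ state.'' Combining this with equivariance of the tilt dynamics gives Lemma~\ref{lem:relocation_tunnels_SW} immediately.

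Concretely, first I would make $g$ precise against the construction in Figures~\ref{fig:reloc_sec}--\ref{fig:reloc_robot}: verify that $O$, $W$, the state-tile path structure of Figure~\ref{fig:reloc_all}, and the robot paths of Figure~\ref{fig:reloc_robot} are setwise invariant under $g$, and record the induced bijections on the four entrance/exit chambers and on the two state chambers. Second I would state and check the equivariance lemma for the tilt model: performing a step in direction $d$ and then applying $g$ yields the same configuration as applying $g$ and then performing a step in direction $g(d)$, since ``move maximally until blocked,'' the ``leave the board'' clause, and the merging of newly bonded tiles are all defined purely via adjacency and overlap, which $g$ preserves; iterating over steps gives the statement for tilts, and the transitive closure gives $C \rightarrow_* C'$ iff $g(C) \rightarrow_* g(C')$. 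Hence a directed tunnel $(a,b)$ exists in a gadget configuration $C$ if and only if the directed tunnel $(g(a),g(b))$ exists in $g(C)$. Third I would apply Lemma~\ref{lem:relocation_tunnels_NE} to $g(C)$: $C$ admits tunnels $(S,N)$ and $(W,E)$ $\iff$ $g(C)$ admits tunnels $(N,S)$ and $(E,W)$ $\iff$ $g(C)$ is in the $NE$ state $\iff$ $C$ is in the $SW$ state, which is exactly the claim.

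As a fallback that avoids any reliance on the symmetry being exact, I would instead reuse the brute-force forest already generated for Lemma~\ref{lem:relocation_tunnels_NE} — it is built from every robot entrance/exit position paired with every possible state-tile position, so it already contains every relevant configuration — and run the identical verification with the directions permuted: confirm that every tree rooted at a south (west) configuration that contains a north (east) node begins with the state tile somewhere on the $SW$ path, and conversely that every tree rooted at a south (west) configuration with the state tile on the $SW$ path contains a north (east) node.

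The main obstacle I anticipate is pinning down the symmetry argument cleanly: one must be certain the gadget is genuinely invariant under the chosen $g$ (the prose only says ``diagonally symmetrical,'' so the precise axis/rotation and, crucially, the invariance of the \emph{state-tile path structure} and not merely the walls must be read off the construction), and the equivariance statement must be phrased carefully enough that the board-exit and polyomino-merging clauses of the step definition are visibly respected. If that verification turns out to be at all delicate, the brute-force fallback is the safe route, being a mechanical rerun of an already-trusted computation.
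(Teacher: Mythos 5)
Your fallback route is exactly the paper's proof: the authors simply rerun the forest-searching verification of Lemma~\ref{lem:relocation_tunnels_NE} with the directions permuted, checking that trees rooted at south/west configurations containing north/east nodes are precisely those whose root has the state tile on the $SW$ path. Your primary route --- reducing Lemma~\ref{lem:relocation_tunnels_SW} to Lemma~\ref{lem:relocation_tunnels_NE} via a rigid motion $g$ plus equivariance of the tilt dynamics --- is genuinely different, and its dynamical half is unimpeachable: steps, tilts, the leave-the-board clause, and merging are all defined through adjacency and overlap, so they commute with any lattice isometry provided directions are permuted by $g$, and merging is vacuous here since none of the pieces stick. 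What the symmetry route buys is that the computer search need only be trusted once; what it costs is the burden of verifying the geometric invariance, and that is where the contingency you flag is real. The paper's phrase ``diagonally symmetrical'' most plausibly refers to the $NE$--$SW$ diagonal, on which the two state chambers sit: that reflection maps each state chamber to itself (swapping $N\leftrightarrow E$ and $S\leftrightarrow W$), which explains why the two tunnels \emph{within} a state behave identically but does \emph{not} carry the $NE$ state to the $SW$ state, so it cannot drive your reduction. You would instead need exact invariance under the $180^\circ$ rotation (or the $NW$--$SE$ reflection), which the prose nowhere asserts and which can only be read off the figures; constructions in this paper do sometimes break such symmetries deliberately (cf.\ the modified southern alignment gadget in the drop-shape builder). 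Since you anticipate exactly this and retreat to the brute-force check, which coincides with the paper's argument, the proposal is sound; just be aware that as written the symmetry argument is conditional, not established.
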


\begin{proof}
  We perform a similar forest-searching process as Lemma~\ref{lem:relocation_tunnels_NE} and show a similar implication for directed tunnels $(S,N)$ / $(W,E)$ and gadget state $SW$.
\end{proof}

\begin{lemma}\label{lem:relocation_tunnels_turns}
  Directed tunnels $(N,E), (N,W), (S,E), (S,W), (E,N), (E,S), (W,N), \text{and } (W,S)$ do not ever exist in a relocation gadget (regardless of its state).
\end{lemma}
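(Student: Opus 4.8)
The plan is to prove this exactly as Lemmas~\ref{lem:relocation_tunnels_NE} and~\ref{lem:relocation_tunnels_SW} were proved: by an exhaustive search of the brute-force forest generated by repeated calls to \emph{createTree} (Algorithm~\ref{alg:config_tree}). Recall that this forest contains one tree for every configuration in which the robot polyomino sits in one of the four entrance/exit chambers and the state tile occupies one of its finitely many reachable positions, with nodes flagged $N$, $E$, $S$, or $W$ according to which chamber currently holds the robot. By Lemma~\ref{lem:config_tree}, a directed tunnel $(a,b)$ is realized in some gadget state precisely when some tree rooted at an $a$-flagged configuration contains a $b$-flagged node.

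First I would restate the eight claimed non-tunnels in that language: it suffices to verify that (i) no tree rooted at an $N$-configuration contains an $E$- or $W$-node; (ii) no tree rooted at an $S$-configuration contains an $E$- or $W$-node; (iii) no tree rooted at an $E$-configuration contains an $N$- or $S$-node; and (iv) no tree rooted at a $W$-configuration contains an $N$- or $S$-node. Because the internal geometry of the relocation gadget is diagonally symmetric, the reflection across the $NE$--$SW$ axis (which swaps $N \leftrightarrow E$ and $S \leftrightarrow W$ while fixing each state-tile path) identifies case (i) with case (iii) and case (ii) with case (iv); this roughly halves the work, although the implementation simply checks all four cases directly.

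Then I would run the search: for each root configuration of the relevant type and each admissible starting position of the state tile, walk the finite tree produced by \emph{createTree} and confirm that no forbidden flag ever appears. Since each tilt is deterministic and the board is finite, every tree is finite and the entire forest is computed in finite time, so this is a terminating decision procedure whose completion without a counterexample certifies that none of the eight turning tunnels is ever realized; the lemma follows.

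The main obstacle is not conceptual but one of coverage and faithfulness of the search: I must be certain the forest genuinely enumerates \emph{all} legal state-tile placements (including the transient positions along the dotted ``state-change'' paths, not merely the two clean $NE$ and $SW$ configurations) and that a node is flagged $N$/$E$/$S$/$W$ exactly when the $2\times 2$ robot polyomino occupies the corresponding entrance/exit chamber, since an undercount in either respect would make the ``regardless of its state'' claim vacuous rather than proved. It is also worth recording explicitly that analyzing the gadget in isolation already suffices, because in a full puzzle the only interaction between gadgets occurs through these boundary chambers --- which are exactly the flagged nodes --- so no tunnel can appear in context that does not already appear here.
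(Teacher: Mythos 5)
Your proposal matches the paper's proof: the paper also establishes this lemma by searching the brute-force forest (one tree per combination of robot entrance/exit chamber and state-tile location, via Algorithm~\ref{alg:config_tree}) and verifying, for each of the eight forbidden pairs $(a,b)$, that no tree with an $a$-flagged root contains a $b$-flagged node. Your extra remarks on symmetry and on coverage of all state-tile placements are sensible implementation cautions but do not change the argument.
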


\begin{proof}
  For each directed tunnel $(a,b) \in \{(N,E), (N,W), (S,E), (S,W), (E,N), (E,S), (W,N), (W,S)\}$, we search the forest and verify that no tree exists which has root flagged for direction $a$ and contains a node flagged for direction $b$.
\end{proof}

\begin{lemma}\label{lem:relocation_state}
  If a directed tunnel in a relocation gadget is traversed, the state of the gadget is toggled.
\end{lemma}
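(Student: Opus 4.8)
The plan is to reuse, essentially verbatim, the brute-force proof-by-computer framework already set up for Lemmas \ref{lem:relocation_tunnels_NE}--\ref{lem:relocation_tunnels_turns}. By Lemma \ref{lem:config_tree}, the forest of configuration trees generated by \emph{createTree} (Algorithm \ref{alg:config_tree}), started from every pair consisting of a robot-polyomino entrance/exit chamber and a state-tile location, contains a node for every configuration reachable from that start; and each node can be flagged both with the entrance/exit chamber currently occupied by the robot polyomino (if any) and with the location of the state tile, hence with the current state ($NE$, $SW$, or ``in transit / stuck''). This forest is exactly the object to search.

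First I would narrow down what ``a directed tunnel is traversed'' can mean. By Lemmas \ref{lem:relocation_tunnels_NE}, \ref{lem:relocation_tunnels_SW}, and \ref{lem:relocation_tunnels_turns}, the only directed tunnels that ever exist are $(N,S)$ and $(E,W)$, present precisely when the gadget is in the $NE$ state, and $(S,N)$ and $(W,E)$, present precisely when the gadget is in the $SW$ state. So a traversal of a directed tunnel corresponds in the forest to a tree whose root is flagged for the entrance direction $a$ with the state tile in the state path that permits $(a,b)$, and which contains a node flagged for the exit direction $b$. The existence of at least one such node in the relevant trees is already guaranteed by Lemmas \ref{lem:relocation_tunnels_NE} and \ref{lem:relocation_tunnels_SW}.

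Next, for each of the four valid pairs $(a,b) \in \{(N,S),(E,W),(S,N),(W,E)\}$, I would have the search verify the invariant: in every tree whose root is flagged for $a$ and whose root configuration has the state tile in the state path permitting $(a,b)$, \emph{every} node flagged for the exit direction $b$ has the state tile in the opposite state path. Concretely, every $S$-node (resp.\ $W$-node) appearing in a tree rooted at an $N$-configuration (resp.\ $E$-configuration) in the $NE$ state carries its state tile in the $SW$ path, and symmetrically with $NE$/$SW$ and the directions swapped for the other two pairs. Since the state tile is permanently confined to the gadget, the state recorded at such a node is exactly the gadget's state at the moment the robot has completed its traversal and sits in the exit chamber (from which it can leave along the attached $2$-wide wire without touching the trapped state tile). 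Hence any traversal of a directed tunnel toggles the state, which is the claim; together with the preceding lemmas this gives the full C2T functionality needed for the main reduction.

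The main obstacle is, as in the earlier lemmas, not conceptual but a matter of making the computer verification airtight. One must argue that the flagging faithfully captures ``the robot has traversed from $a$ to $b$'' rather than merely visited an adjacent region, and — crucially — that re-entries are handled. A tree rooted at an $NE$-state $N$-configuration will typically contain, beyond the first $S$-node, further $N$- and $S$-nodes produced by the robot bouncing in and out; the invariant must be checked at \emph{all} of them (each $S$-node in the $SW$ state, each subsequent $N$-node back in the $NE$ state), so the search must sweep the entire forest, not just ``first-exit'' nodes. Exploiting the diagonal symmetry of the gadget would in principle halve the case count, but one must make sure the symmetry maps the $NE$ and $SW$ states to one another correctly; since enumerating all starting configurations directly is cheap, I would keep the exhaustive enumeration and treat the symmetry only as a sanity check on the output.
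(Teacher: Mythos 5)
Your proposal is correct and follows essentially the same route as the paper: reduce to the four possible directed tunnels via Lemma~\ref{lem:relocation_tunnels_turns}, then search the brute-force forest and verify that in every relevant tree (entrance-flagged root in the permitting state, per Lemmas~\ref{lem:relocation_tunnels_NE} and~\ref{lem:relocation_tunnels_SW}) every exit-flagged node has the state tile in the opposite path. Your extra care about checking all exit-flagged nodes rather than only first exits is consistent with, and implicit in, the paper's verification.
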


\begin{proof}
  From Lemma~\ref{lem:relocation_tunnels_turns}, we can deduce that the only directed tunnels which may exist in a relocation gadget are $(N,S)$, $(E,W)$, $(S,N)$, and $(W,E)$.
  Consider directed tunnel $(N,S)$. We search the forest and find all trees which have a north root and contain a south node. By Lemma~\ref{lem:relocation_tunnels_NE} we know that these trees have root configurations where the state tile is in the $NE$ path.
  We verify that all south nodes in these trees represent configurations in which the state tile is in the $SW$ path.
  We perform equivalent verifications for the remaining directed tunnels.
\end{proof}

\begin{lemma}\label{thm:relocation_c2t}
	The relocation gadget correctly implements the behavior of the C2T gadget.
\end{lemma}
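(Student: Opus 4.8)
The plan is to deduce the lemma by assembling Lemmas~\ref{lem:relocation_tunnels_NE}, \ref{lem:relocation_tunnels_SW}, \ref{lem:relocation_tunnels_turns}, and \ref{lem:relocation_state} into an explicit isomorphism between the state graph of the relocation gadget and that of the C2T gadget. First I would fix the correspondence: identify the four entrance/exit chambers $N,E,S,W$ of the relocation gadget (section~1 of Figure~\ref{fig:reloc_sec}) with the four locations of the C2T gadget, identify the relocation gadget's $NE$ state with the C2T state whose two perpendicular toggle tunnels are oriented $N\to S$ and $E\to W$, and identify its $SW$ state with the C2T state in which both are reversed, i.e.\ $S\to N$ and $W\to E$. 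This is a legitimate C2T up to relabeling of locations. For the correspondence to be well defined I would invoke the design of the entrance/exit chambers: they impose no move constraints on the robot, so it can enter and leave freely through the $2$-tile-wide wires attached to each chamber, and their side spaces keep the state tile trapped inside, so that after any completed traversal the gadget is again in exactly one of the two states $NE$ or $SW$.

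Next I would verify that the transition relations agree. The C2T transition relation consists of exactly the four moves $(N,S)$ and $(E,W)$, available only in the first state, and $(S,N)$ and $(W,E)$, available only in the second, each move flipping the state. Lemma~\ref{lem:relocation_tunnels_NE} says $(N,S)$ and $(E,W)$ are directed tunnels of the relocation gadget precisely when it is in the $NE$ state, and Lemma~\ref{lem:relocation_tunnels_SW} says $(S,N)$ and $(W,E)$ are directed tunnels precisely when it is in the $SW$ state; together these match the ``edges available'' part of the C2T state graph. Lemma~\ref{lem:relocation_tunnels_turns} rules out all eight ``turning'' directed tunnels, so these four are the only nontrivial robot moves the gadget realizes (a robot that enters and re-exits through the same chamber performs no move through the gadget and does not affect solvability). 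Finally Lemma~\ref{lem:relocation_state} says that traversing any directed tunnel toggles the gadget's state, which matches the C2T requirement that either traversal flips the shared state; Figure~\ref{Example_Sequence} illustrates one such toggling traversal. Combining these, the map above is an isomorphism of state graphs, which is exactly what ``correctly implements the behavior of the C2T gadget'' means.

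The genuinely hard content is all upstream, in the brute-force forest searches behind Lemmas~\ref{lem:relocation_tunnels_NE}--\ref{lem:relocation_state}; the present lemma is an assembly step, so the main thing to be careful about is the bookkeeping that makes the state-graph isomorphism literal rather than merely informal. In particular I would argue explicitly that $NE$ and $SW$ are the only states in which the gadget can be left after a completed traversal --- no state tile marooned in a side space, and no robot stranded inside an assistance chamber (which a solving strategy would never cause) --- so that the ``iff'' statements of Lemmas~\ref{lem:relocation_tunnels_NE} and \ref{lem:relocation_tunnels_SW} can be read as describing all reachable gadget states, not just the intended ones. This is also the natural place to flag that the wires and the intersection gadgets of Figures~\ref{fig:t_shape} and \ref{fig:four_way} merely route the robot between gadget openings without adding behavior, which is what lets us chain relocation gadgets together in the reduction of Section~\ref{subsec:relocation}.
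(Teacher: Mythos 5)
Your proposal is correct and follows essentially the same route as the paper: the paper's proof of this lemma is precisely the assembly of Lemmas~\ref{lem:relocation_tunnels_NE}, \ref{lem:relocation_tunnels_SW}, \ref{lem:relocation_tunnels_turns}, and \ref{lem:relocation_state} into the conclusion that the gadget matches the C2T's traversal-and-toggle behavior. Your extra bookkeeping (the explicit state-graph correspondence, and the concerns about state tiles escaping or gadgets interacting in a system) is sound but is handled by the paper later, inside the proof of Theorem~\ref{thm:relocation}, rather than in this lemma.
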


\begin{proof}
    By Lemmas~\ref{lem:relocation_tunnels_NE} and \ref{lem:relocation_tunnels_SW}, we see that robot traversal through our relocation gadget is dictated by what state the gadget is in. Lemma~\ref{lem:relocation_tunnels_turns} shows that this traversal must be in a straight line. Lemma~\ref{lem:relocation_state} shows that robot traversal forces the gadget to change state. Thus, our relocation gadget has the same functionality as the C2T.
\end{proof}

\begin{theorem}\label{thm:relocation}
  The relocation problem is PSPACE-complete.  Moreover, it remains PSPACE-complete when polyominoes are all $1\times1$ squares with a single $2\times2$ square.
\end{theorem}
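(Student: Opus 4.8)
The plan is to establish the two halves of PSPACE-completeness separately.

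\emph{Membership.} First I would note that for a fixed board $B$ there are only finitely many configurations, and from any one of them a single tilt (a bounded-length sequence of unit steps, since the board is bounded) can be computed in space polynomial in $|B|$. The relocation problem is therefore a reachability question in the configuration graph, which has out-degree $4$: a nondeterministic machine can guess the tilt directions one at a time while storing only the current configuration and the target translate, so relocation lies in NPSPACE, hence in PSPACE by Savitch's theorem. This bound is insensitive to which polyomino shapes are allowed, so it covers the $1\times1$-plus-one-$2\times2$ case as well.

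\emph{Hardness: the reduction.} I would reduce from the C2T puzzle solvability problem, which is PSPACE-complete by Theorem~\ref{puzzle_hardness}. Given a puzzle $\mathcal{P}$ — C2T gadgets in prescribed initial states, the wires joining their locations, and the robot's start and goal — build a board configuration $C_{\mathcal{P}}$: realize each C2T gadget by one copy of the relocation gadget of Section~\ref{subsec:relocation}, placing its state tile in the $NE$ or $SW$ path according to the gadget's initial state (these correspond to the two C2T states by Lemmas~\ref{lem:relocation_tunnels_NE} and \ref{lem:relocation_tunnels_SW}); realize each wire by a width-$2$ hallway connecting the appropriate $N/E/S/W$ openings; and realize each wire junction by a $3$- or $4$-way intersection gadget (Figures~\ref{fig:t_shape}--\ref{fig:inter_arrows}), whose central $2\times2$ block halts the robot polyomino and then lets it pick any outgoing hallway. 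Place the single $2\times2$ robot polyomino at the location corresponding to the start, and ask whether it can be relocated to the location corresponding to the goal. Every gadget and intersection has constant size and each wire has polynomial length, so the reduction runs in polynomial time and uses only $1\times1$ tiles together with the one $2\times2$ robot, giving the ``moreover'' clause for free.

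\emph{Hardness: correctness.} I would show $C_{\mathcal{P}}$ admits a reconfiguration placing the robot at the goal iff $\mathcal{P}$ is solvable. Forward simulation is immediate from Lemma~\ref{thm:relocation_c2t}: a solution path in $\mathcal{P}$ is a sequence of C2T traversals and free turns, each matched by a concrete tilt sequence (straight traversals inside relocation gadgets, direction choices at intersections), and concatenating these sequences carries the robot to the goal. For the converse I would invoke Lemmas~\ref{lem:relocation_tunnels_NE}--\ref{lem:relocation_state}: whenever the robot is inside a relocation gadget it can only leave along the unique straight tunnel permitted by that gadget's current state, and doing so toggles the state; whenever it is in a hallway a tilt slides it maximally to the next opening or intersection; and at an intersection it is free. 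Hence any reconfiguration of $C_{\mathcal{P}}$ projects onto a legal run of the robot in $\mathcal{P}$, so reaching the goal on the board forces $\mathcal{P}$ to be solvable.

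\emph{The main obstacle.} The subtle point is that a tilt is \emph{global}: one tilt simultaneously moves the robot, the state tile of the gadget it currently occupies, and the state tiles of every other gadget. Lemmas~\ref{lem:config_tree}--\ref{lem:relocation_state} were verified for a single isolated gadget, so the heart of the proof is a composition argument lifting these local guarantees to the assembled puzzle. The two facts I expect to be the real work are: (i) a state tile can never escape its gadget — the side pockets of the entrance/exit chambers and the internal geometry confine it — so state tiles of distinct gadgets never meet and never enter hallways; and (ii) when the robot is not inside a given gadget, that gadget's state tile has no robot geometry to help it onto the dotted paths, so it stays within the solid lines of its current path and an arbitrary tilt merely slides it there, leaving the gadget's state unchanged (and, in particular, never wedging it into an ``inoperable'' pocket). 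Granting (i) and (ii), the board decomposes into independent relocation gadgets plus inert hallways and intersections, and the single-gadget lemmas apply verbatim. Nailing (i) and (ii) down — especially checking that no tilt sequence applied while the robot is elsewhere can jam a state tile — is the crux, and it is again amenable to the same brute-force, proof-by-computer verification used for Lemmas~\ref{lem:relocation_tunnels_NE}--\ref{lem:relocation_state}.
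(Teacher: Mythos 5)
Your proposal matches the paper's proof in essentially every respect: PSPACE membership via nondeterministic search of the configuration graph (NPSPACE $=$ PSPACE), and hardness by the same reduction from C2T puzzle solvability using the relocation gadgets, width-$2$ hallways, and intersection gadgets, with correctness resting on Lemmas~\ref{lem:relocation_tunnels_NE}--\ref{thm:relocation_c2t}. The composition worry you isolate is handled in the paper exactly as you anticipate: the authors extend the brute-force, proof-by-computer search to verify that a state tile can never exit its gadget, and argue that this is the only way gadget behavior could change when gadgets are assembled into a system.
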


\begin{proof}
  First, we observe that this problem is in PSPACE. To see this, consider a directed graph $G = (V,E)$ where each vertex is a configuration on the board $B$ and each edge $e_{i,j} = (v_i,v_j)$ connects two vertices if there exists one tilt that reconfigures $v_i$ to $v_j$. Clearly, a nondeterministic search of this graph will yield the answer in polynomial space, implying membership in NPSPACE. Since NPSPACE $=$ PSPACE, we get membership in PSPACE.

  We now show PSPACE-hardness by a reduction from the C2T puzzle solvability problem from \cite{DBLP:conf/fun/DemaineGLR18}.
  Given an instance of a C2T puzzle, use the tools described above to create a tilt model configuration as follows:
    For each C2T gadget in the puzzle, create a relocation gadget such that the initial state of the relocation gadget matches the initial state of the C2T gadget.
    For each straight wire in the puzzle, create a 2-tile wide tunnel in the tilt configuration.
    For each 3/4-way wire intersection in the puzzle, create a 3/4-way tunnel intersection in the tilt configuration.
    For the goal location of the puzzle, create a $2\times 2$ goal chamber in the the tilt configuration.

  Lemmas~\ref{lem:relocation_tunnels_NE} and \ref{lem:relocation_tunnels_SW} show that there exists a way for the robot polyomino to traverse our relocation gadget. This, along with the fact there clearly exists a move sequence to traverse each wire tunnel/intersection, shows that if the given C2T puzzle instance has a solution, then there exists a tilt sequence which relocates the robot polyomino to the goal chamber. Thus, a solution for the puzzle solvability problem implies a solution to the relocation problem.

  Now, consider the case where the given C2T instance is not solvable. Since Lemma~\ref{thm:relocation_c2t} shows that an individual relocation gadget has the same functionality as an individual C2T gadget, the only way that our relocation problem could be solvable is if the functionality of our relocation gadget changed when placed in a system of gadgets. The only way one relocation gadget could possibly influence another is if a state tile could leave one gadget and enter another. After generating the brute-force forest with a modified algorithm~\ref{alg:config_tree} (flagging any configurations in which the state tile exits the reconfiguration gadget) we search the forest and verify that no such configurations exist. Thus, no solution for the C2T puzzle solvability problem implies no solution for the relocation problem. By Theorem~\ref{puzzle_hardness}, we see that the relocation problem is PSPACE-complete.
\end{proof}

\begin{corollary}\label{corr:occupancy}
  The occupancy problem is PSPACE-complete. Moreover, it remains PSPACE-complete when polyominoes are all $1\times1$ squares with a single $2\times2$ square.
\end{corollary}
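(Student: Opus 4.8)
The plan is to recycle the reduction from the proof of Theorem~\ref{thm:relocation} almost verbatim, observing that the configuration built there is simultaneously an occupancy instance. First I would dispatch membership: occupancy lies in PSPACE by the same argument used for relocation, namely that the directed graph $G=(V,E)$ whose vertices are board configurations and whose edges are single tilts can be searched nondeterministically in polynomial space, so the problem is in $\mathrm{NPSPACE}=\mathrm{PSPACE}$ by Savitch's theorem.

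For hardness, I would take an arbitrary C2T puzzle and build the tilt configuration exactly as in the proof of Theorem~\ref{thm:relocation}: one relocation gadget per C2T gadget in its matching initial state, $2$-tile-wide tunnels for wires, $3/4$-way tunnel intersections for wire intersections, and a $2\times2$ goal chamber for the goal location. I then let $e$ be any single cell inside that $2\times2$ goal chamber. The claim to prove is that $e$ can be occupied by some tile along a tilt sequence if and only if the original C2T puzzle is solvable. One direction is immediate: by Lemmas~\ref{lem:relocation_tunnels_NE}--\ref{lem:relocation_state} and Lemma~\ref{thm:relocation_c2t}, a solution of the puzzle yields a tilt sequence driving the robot polyomino into the goal chamber, at which point the robot covers all four cells of the chamber and in particular $e$.

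The crux is the converse, i.e.\ that $e$ is occupiable \emph{only} by the robot polyomino. The only movable pieces in the construction are the single $2\times2$ robot polyomino and the $1\times1$ state tiles, one trapped in each relocation gadget. Here I would invoke the brute-force forest search already carried out in the proof of Theorem~\ref{thm:relocation}: running the modified Algorithm~\ref{alg:config_tree} that flags any configuration in which a state tile leaves its gadget produced no flagged configuration, even when gadgets are wired together. Consequently no $1\times1$ tile ever enters a wire, an intersection, or the goal chamber, so $e$ becomes occupied precisely when the robot polyomino reaches the goal chamber — which, exactly as in the relocation argument, happens iff the C2T puzzle is solvable. With Theorem~\ref{puzzle_hardness} this gives PSPACE-hardness, and since every movable piece is a $1\times1$ square except the one $2\times2$ robot, the strengthened statement follows. (Equivalently, this is a trivial reduction from the relocation problem of Theorem~\ref{thm:relocation}, since in that instance occupying $e$ is equivalent to completing the robot's relocation.)

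The step I expect to need the most care is the ``only the robot polyomino can occupy $e$'' claim, because it rests on the \emph{global} confinement of every state tile to its own gadget even in a fully wired system — precisely the property the forest search in Theorem~\ref{thm:relocation} was designed to certify. I would lean on that certification rather than re-argue it by hand; everything else (membership, the wire/intersection behavior, and the single-gadget equivalence to C2T) is inherited directly from the relocation development and needs no new work.
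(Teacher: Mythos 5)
Your proposal matches the paper's own argument: it reuses the relocation construction of Theorem~\ref{thm:relocation} verbatim, places the target cell in the goal chamber, and appeals to the computer-certified fact that no state tile ever leaves its gadget, so only the robot polyomino can occupy the goal location, making occupancy equivalent to the C2T puzzle's solvability. Your version simply spells out the membership argument and the two directions more explicitly than the paper's brief proof, but the route is the same.
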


\begin{proof}
  This follows from the same construction that was used in Theorem~\ref{thm:relocation}. Since we know that no state tile can ever leave a relocation gadget, the only polyomino which could occupy the goal location is the robot polyomino. Thus, the same implications hold for the C2T puzzle solvability problem and the occupancy problem.
\end{proof}

\subsection{Reconfiguration}\label{subsec:reconfiguration}
In \cite{BecDemFek2014PCRA}, it was shown that computing a shortest sequence of tilts required to transform one configuration into another is PSPACE-complete.
In this section, we show that even the problem of determining if a reconfiguration sequence exists between two given configurations is PSPACE-complete.

\subsubsection{Reconfiguration Gadget Model Preliminaries}
For the reconfiguration problem we must provide a unique final configuration, as opposed to just a final location of a single polyomino. The previous gadget is insufficient for this problem as there is generally not a unique final configuration for all successful traversals, and it would not be polynomial time computable even in the cases that it were. The issue is that we would not know the state of each of the gadgets, and thus not know the locations of the state tiles throughout the system. We address this issue by extending the relocation gadget to allow for a final tilt sequence that moves all state tiles of all gadgets (regardless of the state) into one unique position in each gadget, thereby providing a polynomial time computable target configuration for the reduction.

\paragraph{Reconfiguration Gadget} The reconfiguration gadget is a relocation gadget with additional geometry. Each state tile has additional pathways that lead to an inescapable chamber on the perimeter of the gadget. The robot polyomino otherwise uses the same paths as in the relocation gadget.

\paragraph{Reconfiguration Ring} Each reconfiguration gadget has a \emph{reconfiguration ring} located on its perimeter reachable by the state tiles only. These hallways help solve the reconfiguration problem as they convert all gadgets to one global unique configuration. To insert all state tiles to this ring, we can move the state tile through the new geometry (depicted in Figure \ref{fig:reconfig_all2}) into the reconfiguration ring, and render all gadgets inoperable.

\paragraph{Restarting Positions} We define \emph{restarting positions} (depicted by the solid state tiles in Figure \ref{fig:reconfig_all2}) as the locations where we initialize all state tiles of the same state. Applying a global motion signal will ensure that all state tiles (in the same state) will be in the same position at all times.

\paragraph{Intermediate Wire} An \emph{intermediate wire} (Figure \ref{fig:inter_arrows}) is a small chamber gadget placed on wires between every pair of reconfiguration gadgets to ensure the state tiles will not enter the reconfiguration ring. The intermediate gadget gives the robot enough room to make tilts in all directions. After every reconfiguration gadget traversal, the robot can move freely in the intermediate wire and reposition the state tiles to their restarting positions.

\begin{figure*}[t]
\centering
   \begin{subfigure}[b]{0.45\textwidth}
     \centering
       \includegraphics[width=1\textwidth]{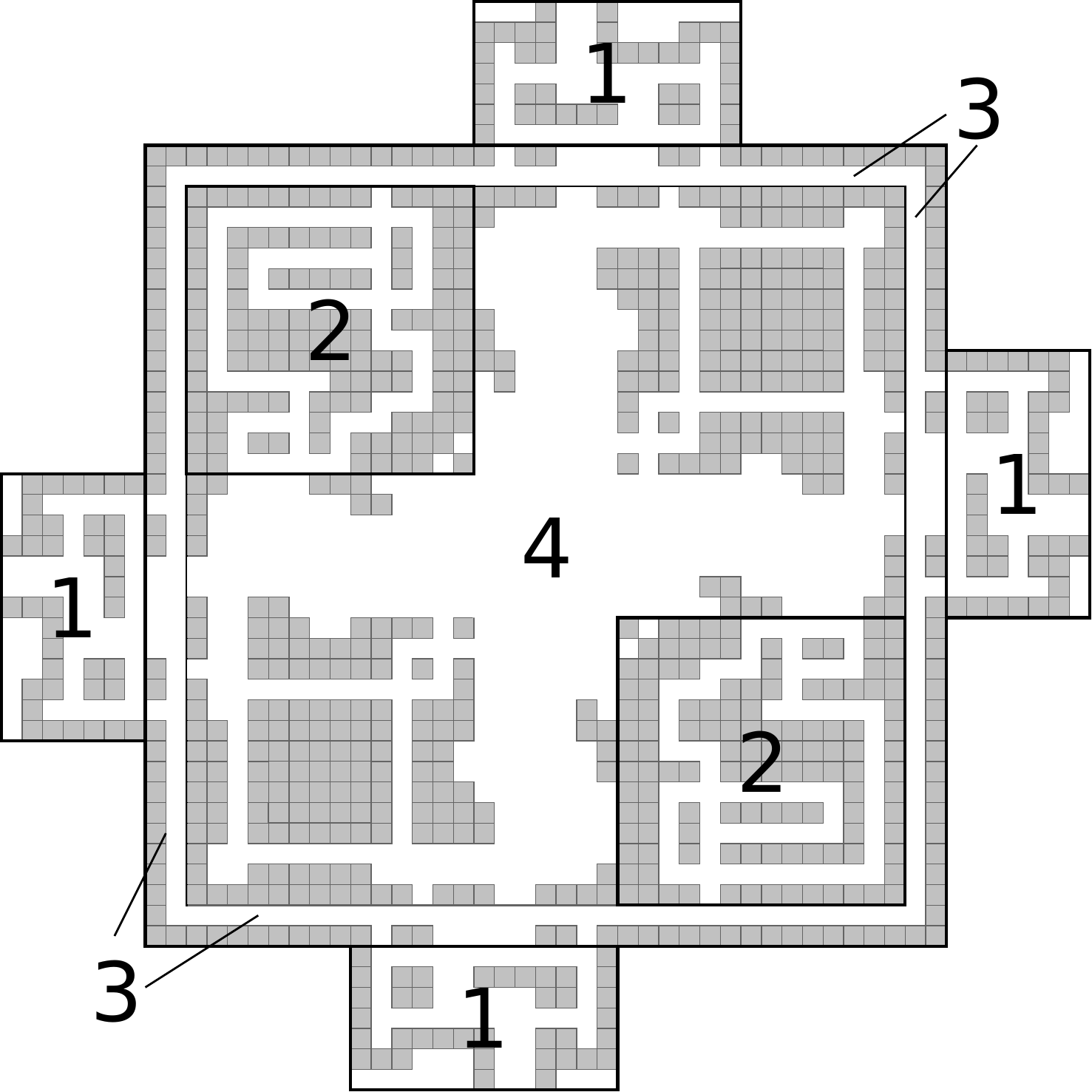}
       \caption{Reconfiguration Gadget Sections}
       \label{fig:reconfig_sec}
   \end{subfigure}
   \hspace{1cm}
   \begin{subfigure}[b]{0.45\textwidth}
      \centering
        \includegraphics[width=1\textwidth]{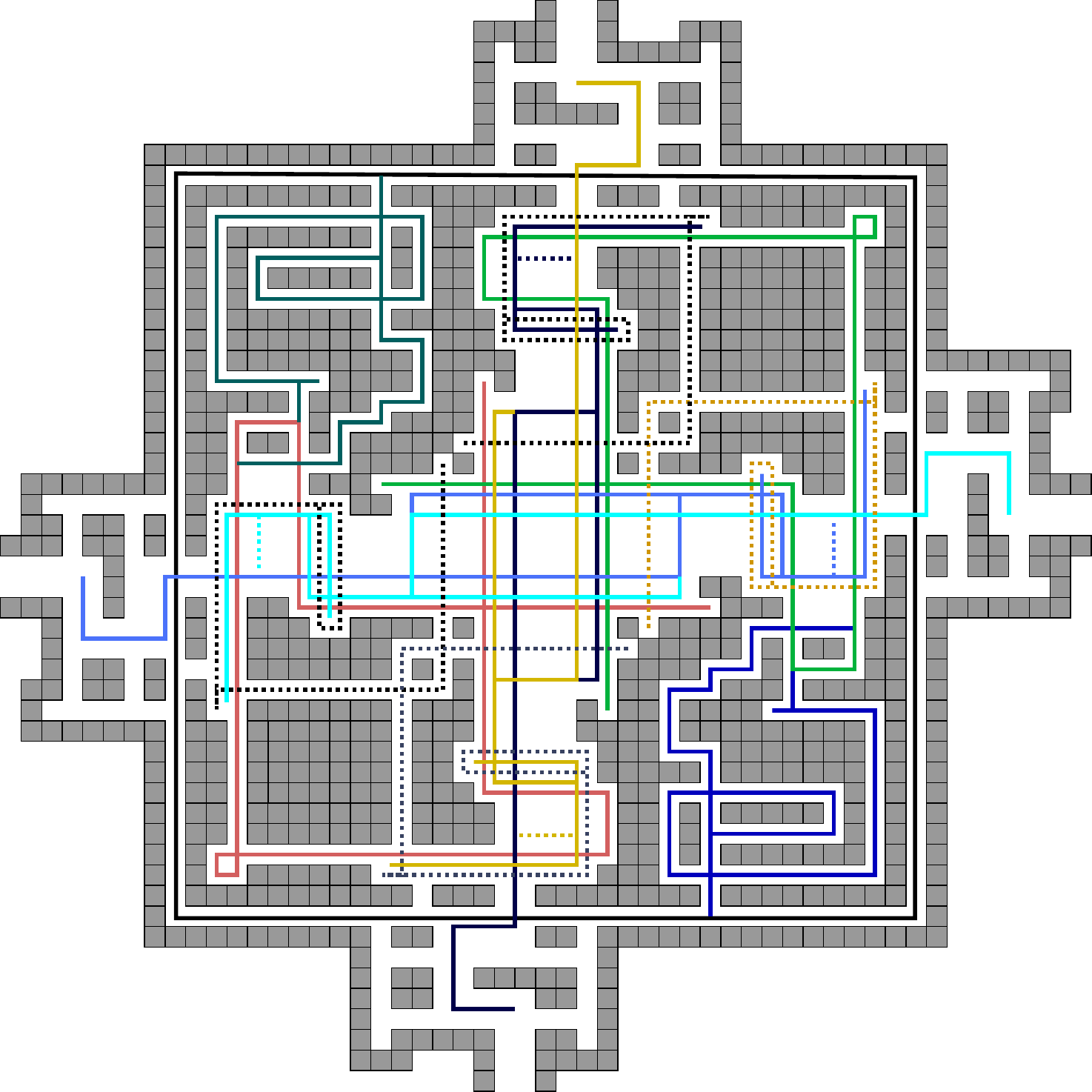}
        \caption{Reconfiguration Gadget Positions}
        \label{fig:reconfig_all}
   \end{subfigure}
   \caption{(a) Sections 1 are the \emph{entrance chambers}, sections 2 are the \emph{pre-reconfiguration tunnels}, section 3 is the \emph{reconfiguration tunnel}, section 4 is the \emph{relocation gadget}. The solid lines depict the pathways the state tile and robot polyomino are free to move through, and the dotted lines depict the pathways which are only accessible through cooperation between the state tile and robot polyomino.}
   \label{fig:reconfig_overview}
\end{figure*}

\subsubsection{Reconfiguration is PSPACE-complete}
\begin{figure*}[t]
\centering
   \begin{subfigure}[b]{0.45\textwidth}
      \centering
        \includegraphics[width=1\textwidth]{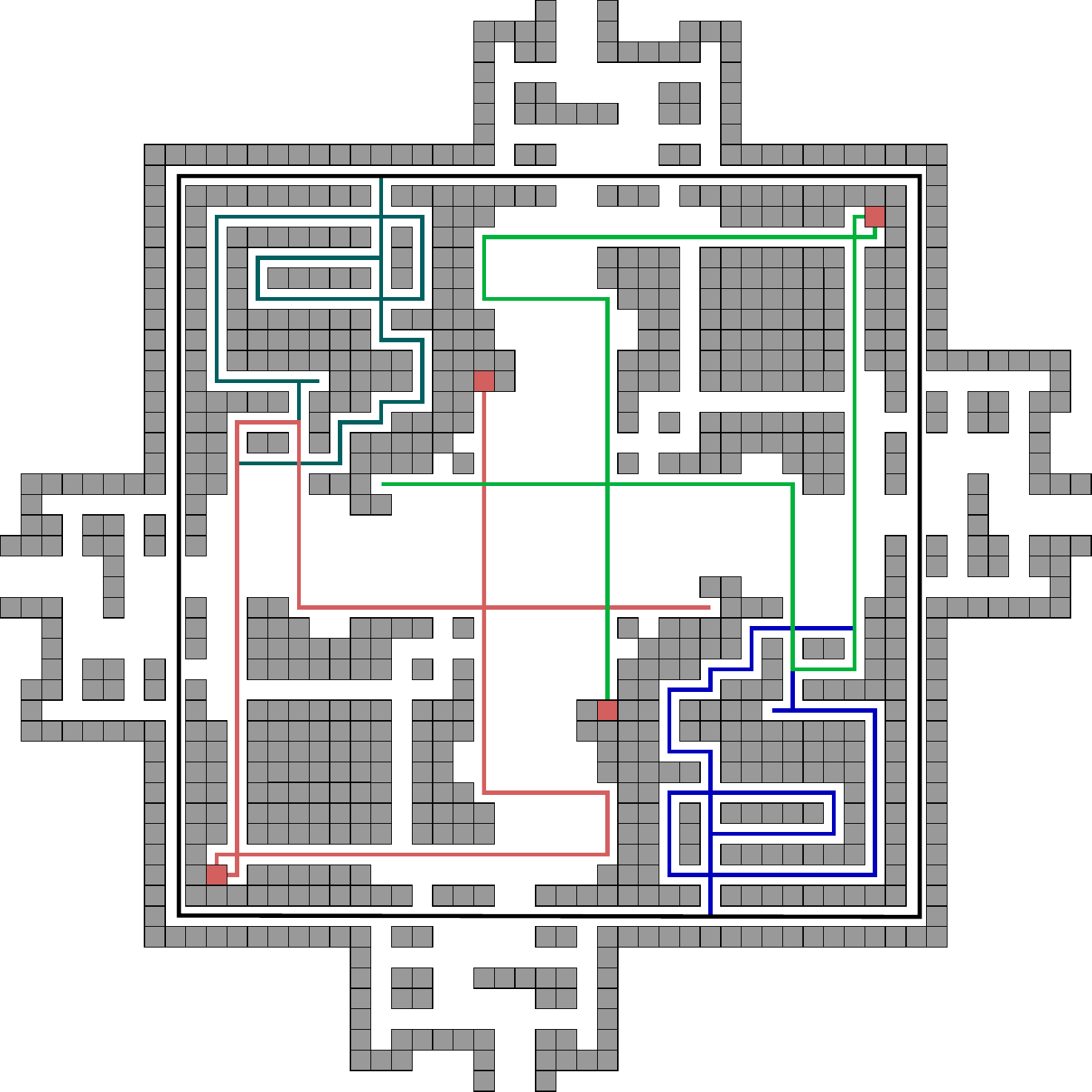}
        \caption{Reconfiguration state tile paths.}
        \label{fig:reconfig_all2}
   \end{subfigure}
   \caption{When the robot has reached its goal location, all the state tiles will be in one of the two state paths (red or light green). The user could then maneuver all state tiles through the paths (dark green and blue) and place them inside the reconfiguration ring. Doing so will convert all gadgets to a global configuration.}
   \label{fig:reconfig_overview2}
\end{figure*}

Using the same proof by computer described above, we show that in a system of reconfiguration gadgets the state tiles can be placed in their restarting positions by using the intermediate wire. We do this to avoid accidentally placing a state tile in the reconfiguration ring, caused by performing tilt sequences that move the robot around the system. We show that by placing intermediate wires before every reconfiguration gadget, we can traverse through one gadget to the next and place the state tiles back in their restarting positions. Therefore, everytime the robot enters a gadget the state tiles will be in one of the two restarting positions we depict in Figure \ref{fig:reconfig_all}. When the robot has been relocated to its appropiate location, we can then proceed to move all state tiles to their reconfiguration ring. This converts all of the gadgets into one configuration.

\begin{lemma}\label{lem:state_aligned}
For every system of gadgets there exists a sequence of tilts that allows the robot to traverse the gadget system without moving the state tile into the reconfiguration ring.
\end{lemma}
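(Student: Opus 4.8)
The plan is to reduce the statement to a finite, machine-checkable property of a single reconfiguration gadget together with one intermediate wire, exactly in the spirit of Lemmas~\ref{lem:relocation_tunnels_NE}--\ref{lem:relocation_state}. The key observation that makes this reduction sound is the synchronization remark preceding the lemma: if in every gadget all state tiles that are in the same state occupy a common relative position (one of the two restarting positions of Figure~\ref{fig:reconfig_all2}), then after \emph{any} global tilt all same-state tiles still occupy a common relative position. Hence it suffices to track, per gadget, a single ``canonical'' state-tile position for each of the two states, and to exhibit, for each legal C2T transition, a tilt sequence that (i) carries the robot from the appropriate entrance chamber to the corresponding exit chamber, (ii) toggles the state, and (iii) leaves the canonical state tile back at a restarting position, having never entered the reconfiguration ring.

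Concretely I would proceed as follows. First, re-run the forest generation of Algorithm~\ref{alg:config_tree} on the reconfiguration gadget, with the modification already mentioned in the text of flagging every configuration in which the state tile has moved into the reconfiguration ring; by Lemma~\ref{lem:config_tree} this forest enumerates exactly the reachable configurations. Second, for each of the two gadget states and for each of the four straight-through directed tunnels that Lemma~\ref{lem:relocation_tunnels_turns} leaves as the only possibilities, search the tree rooted at the entrance configuration in which the state tile sits at the matching restarting position, and verify that it contains an exit node whose configuration again has the state tile at a restarting position and such that \emph{some} root-to-node path avoids all ring-flagged configurations. Third, attach one intermediate wire (Figure~\ref{fig:inter_arrows}) and verify, again by brute force, that whenever the robot is parked in the intermediate wire there is a tilt sequence that returns the state tiles of both adjacent gadgets to their restarting positions without pushing any of them into a ring and without moving the robot out of the wire; this is what lets us ``re-canonicalize'' between consecutive gadget traversals. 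Finally, given a traversal of the underlying C2T puzzle, concatenate the per-gadget traversal sequences and the per-wire re-canonicalization sequences along the path; because same-state tiles stay synchronized, a sequence certified to keep one gadget's canonical state tile out of its ring keeps the state tiles of all gadgets out of all rings, and the robot completes the same traversal.

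The main obstacle I expect is the third step: a tilt sequence selected to drive the robot through one gadget is a \emph{global} signal and therefore also moves the synchronized state tiles of every other gadget in the system, so I must be sure that these side-effect motions never reach a reconfiguration ring and can always be undone in the next intermediate wire. The geometry is designed so that the pre-reconfiguration tunnels are only enterable with the robot's cooperation, which strongly suggests that the restarting positions are invariant under arbitrary global tilts except when the robot is actually inside that gadget; but turning this intuition into a proof is precisely what the finite forest search must certify, and arranging the intermediate-wire geometry so that it simultaneously (a) gives the robot enough freedom to choose its next direction and (b) never lets a state tile drift toward a ring is the delicate point. Everything else — in particular the composition argument — is routine given Lemmas~\ref{lem:config_tree} and~\ref{thm:relocation_c2t}.
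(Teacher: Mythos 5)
The heart of this lemma is exactly the point you label ``the delicate point'' and then defer to the brute-force search: what the traversal sequences do to the state tiles of gadgets that do \emph{not} currently contain the robot. Your proposed finite check --- a forest rooted at configurations with the robot in an entrance chamber of one reconfiguration gadget, plus one attached intermediate wire --- certifies nothing about a robot-free gadget being driven by the global tilts issued while the robot is off traversing hallways, intersections, and other gadgets; and the invariance you hope the search will establish (``restarting positions are invariant under arbitrary global tilts except when the robot is inside that gadget'') is false, and is not what the paper claims. The paper's proof concedes that state tiles in robot-free gadgets do drift, in the worst case into the reconfiguration hallway, and it rules out permanent trapping not by invariance under arbitrary sequences but by a quantitative argument about the specific sequences actually used: permanently trapping a state tile in the ring requires 13 tilts performed in clockwise order starting from a restarting position, whereas any hallway traversal takes at most 5 tilts (hallways connecting gadgets and intersections have bounded length), the intersections are designed so that choosing a direction uses counter-clockwise tilts and hence makes no progress through the ring hallway, and every gadget entrance is preceded by an intermediate wire in which tilts can return all state tiles to their restarting positions even from the worst-case drift.

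Your composition step inherits this gap. Synchronization only tells you that state tiles sharing a state move identically under a global tilt; it does not transfer safety from the traversed gadget to the others, because the traversed gadget's state tile is confined partly by the robot's geometry while the distant gadgets' tiles are not, so ``a sequence certified to keep one gadget's canonical state tile out of its ring'' can act quite differently on a gadget without the robot. To repair the argument you would need either the paper's counting/orientation argument (13 clockwise tilts to trap versus at most 5 tilts per hallway and counter-clockwise intersections, plus wire resets), or an explicit finite enumeration of the effect of every canonical hallway, intersection, gadget-traversal, and wire sequence on a robot-free gadget in each of its reachable robot-free configurations, showing each such configuration remains recoverable at the next intermediate wire. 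As written, the proposal verifies the easy half (the gadget containing the robot) and leaves the half that actually carries the lemma unproven.
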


\begin{proof}
There exists a set of tilt sequences that allow the robot polyomino to traverse the gadget while preserving the state tile restarting positions defined above. Thus, depending on the direction traveled, every state tile will be in a specific location depending on the gadget's state as the polyomino is exiting the gadget, or trivially be one tilt away. This is true not only for the gadget being traversed, but also for every gadget in the system. This ensures that while traversing through the hallways no state tiles are forced into a reconfiguration ring. From the top-right and bottom-left state tile restarting positions, a total of 13 tilts are required to permanently trap the state tile in the reconfiguration ring, and tilts must be performed in a clockwise order to progress the tile through the reconfiguration hallway. The maximum number of hallways to connect two gadgets or a gadget to an intersection is five. Thus, at most five tilts are needed in a hallway traversal. This means no state tile can enter the reconfiguration ring while only traversing hallways. The intersections are designed to use counter-clockwise tilts in choosing a direction, and therefore does not allow progression through the chamber. Finally, since the entrance to every gadget is preceeded by an intermediate wire, we can perform tilts in these wires to reposition the state tiles to their optimal positions.
In the worst case scenario all state tiles are located in their gadget's reconfiguration hallway, however, we have shown that even from this scenario there is a tilt sequence that will  reposition all state tiles into their optimal starting positions. 
\end{proof}

\begin{theorem}\label{thm:reconfiguration}
  The reconfiguration problem is PSPACE-complete.  Moreover, it remains PSPACE-complete when polyominoes are all $1\times1$ squares with a single $2\times2$ square.
\end{theorem}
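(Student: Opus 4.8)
\emph{Overall plan.} The plan is to mirror the proof of Theorem~\ref{thm:relocation} almost line for line, substituting the reconfiguration gadget for the relocation gadget, with the one genuinely new ingredient being the construction of a \emph{unique}, polynomial-time computable target configuration. Membership in PSPACE would be established exactly as before: build the directed graph whose vertices are board configurations and whose edges are single tilts, note that a nondeterministic reachability search between the two given configurations runs in polynomial space, and invoke NPSPACE $=$ PSPACE.

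\emph{The reduction.} For hardness I would reduce from the C2T puzzle solvability problem (Theorem~\ref{puzzle_hardness}). Given a C2T puzzle, I would assemble a tilt configuration $C$ just as in Theorem~\ref{thm:relocation} -- one reconfiguration gadget per C2T gadget with matching initial state, $2$-tile-wide tunnels for wires, and $3$- and $4$-way tunnel intersections -- but with two changes: an intermediate wire is inserted on every wire immediately before each gadget entrance, and the goal location becomes a $2\times 2$ goal chamber designed so that once the robot enters it, no subsequent tilt can remove it and it has a unique resting cell. The target configuration $C'$ is then the one in which the robot polyomino occupies that resting cell of the goal chamber and every state tile has been driven into its gadget's reconfiguration ring. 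The crucial observation -- and the reason the plain relocation gadget does not suffice for reconfiguration -- is that this trapped ring position is a single fixed cell that does \emph{not} depend on the gadget's state, so $C'$ is computable in polynomial time from the puzzle instance even though the individual gadget states are never tracked.

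\emph{Correctness.} I would first re-run the proof-by-computer of Lemmas~\ref{lem:relocation_tunnels_NE}--\ref{lem:relocation_state} on the augmented gadget (regenerating the forest of Algorithm~\ref{alg:config_tree} from every combination of robot entrance and state-tile placement) to confirm that the extra ring geometry leaves all robot-relevant behavior unchanged: the only directed tunnels remain $(N,S),(E,W),(S,N),(W,E)$, each present exactly in the $NE$ (resp.\ $SW$) state, and each traversal still toggles the state, so the reconfiguration gadget implements C2T just as the relocation gadget does. For the forward direction, if the puzzle is solvable then Lemma~\ref{lem:state_aligned} lets the robot follow the solution through the system while using the intermediate wires to keep every state tile pinned to one of its two restarting positions, never in a ring, until the robot reaches the goal chamber; then the clockwise ring-filling sequence (the $13$-tilt rotation that traps a tile from either restarting position) pushes every state tile into its ring simultaneously while the robot stays trapped in the goal chamber, yielding exactly $C'$, so $C \rightarrow_* C'$. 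For the reverse direction, if the puzzle is unsolvable then $C'$ is unreachable: $C'$ demands the robot be in the goal chamber, and by the C2T-equivalence above together with a modified run of Algorithm~\ref{alg:config_tree} that flags any configuration in which a state tile leaves its gadget (verifying that none occurs), the system behaves exactly like the C2T puzzle, so the robot cannot reach the goal. With Theorem~\ref{puzzle_hardness} this gives PSPACE-hardness; since $C$ uses only $1\times 1$ tiles together with a single $2\times 2$ polyomino, the refined statement follows, and combined with membership we obtain PSPACE-completeness.

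\emph{Main obstacle.} The hard part will be the proof-by-computer bookkeeping for the enlarged gadget: verifying that the reconfiguration-ring pathways create no new directed tunnels and let no state tile escape, that a state tile can always be returned to a restarting position from anywhere inside the gadget (the content of Lemma~\ref{lem:state_aligned}), and -- the most delicate point -- that no sequence of the fixed hallway and intersection maneuvers can accidentally advance a state tile the thirteen clockwise tilts needed to trap it in a ring, which rests on hallway traversals being short ($\le 5$ tilts) and intersections being navigated with counter-clockwise tilts so that progress through a reconfiguration hallway is never inadvertently made. Once these invariants are machine-checked, the reduction closes exactly as in the relocation case.
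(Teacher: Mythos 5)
Your proposal is correct and follows essentially the same route as the paper: a reduction from C2T puzzle solvability using reconfiguration gadgets (relocation gadgets augmented with reconfiguration rings), with Lemma~\ref{lem:state_aligned} and the intermediate wires guaranteeing that state tiles stay out of the rings during traversal, and the unique, state-independent target configuration obtained by driving all state tiles into their rings once the robot sits in the goal chamber, verified by the same exhaustive proof-by-computer used for the relocation gadget. No substantive differences from the paper's argument.
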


\begin{proof}
 Since the reconfiguration gadget shares the same structure and properties as the relocation gadget, it also behaves as a C2T gadget. We use an exhaustive computer simulation, as in the proof of the relocation gadget, to ensure the gadget works as described. Lemma \ref{lem:state_aligned} shows that solving the relocation problem is feasible using just reconfiguration gadgets. Therefore, after solving the relocation problem using reconfiguraton gadgets, we can move all state tiles into their reconfiguration ring. Doing so will move all state tiles into a specific location within each gadget that is specified as part of our final configuration $D$, which also has the required location of the robot.
\end{proof}

\subsection{Relocation and Occupancy with Limited Geometry}\label{limited_geometry}
In this section we show that the relocation and occupancy problems are PSPACE-complete even when limiting the type of geometry allowed in the configuration. We prove this using a rectangular board, single tiles and several $1 \times 2$ and $2 \times 1$ dominoes. To show hardness, we once again reduce from the puzzle solvability problem, but with a system of crossing toggle-lock gadgets rather than crossing 2-toggle gadgets.

\begin{figure}
    \centering
    \begin{subfigure}[b]{0.115\textwidth}
        \centering
        \includegraphics[width = 0.9\textwidth]{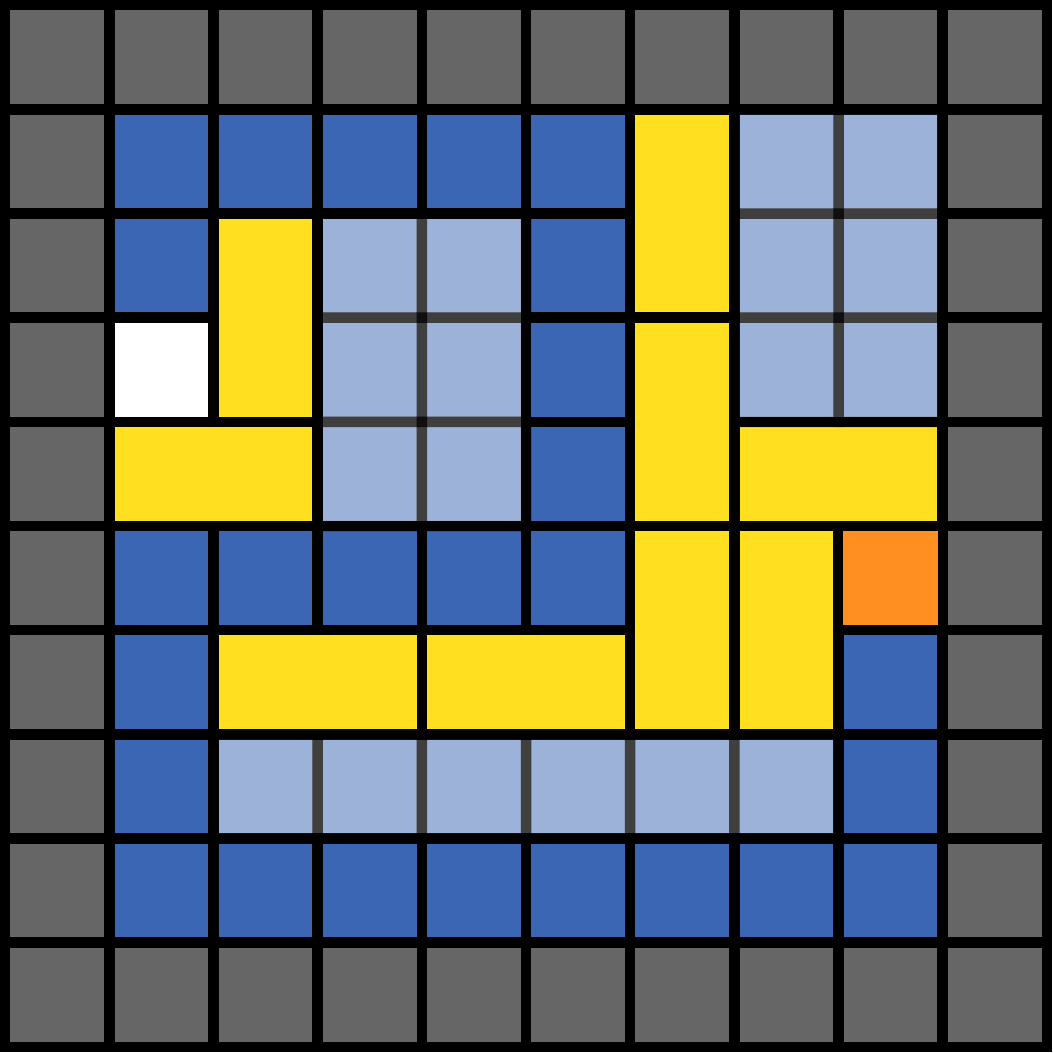}
        \caption{initial}
        \label{fig:full0}
    \end{subfigure}
    \begin{subfigure}[b]{0.115\textwidth}
        \centering
        \includegraphics[width = 0.9\textwidth]{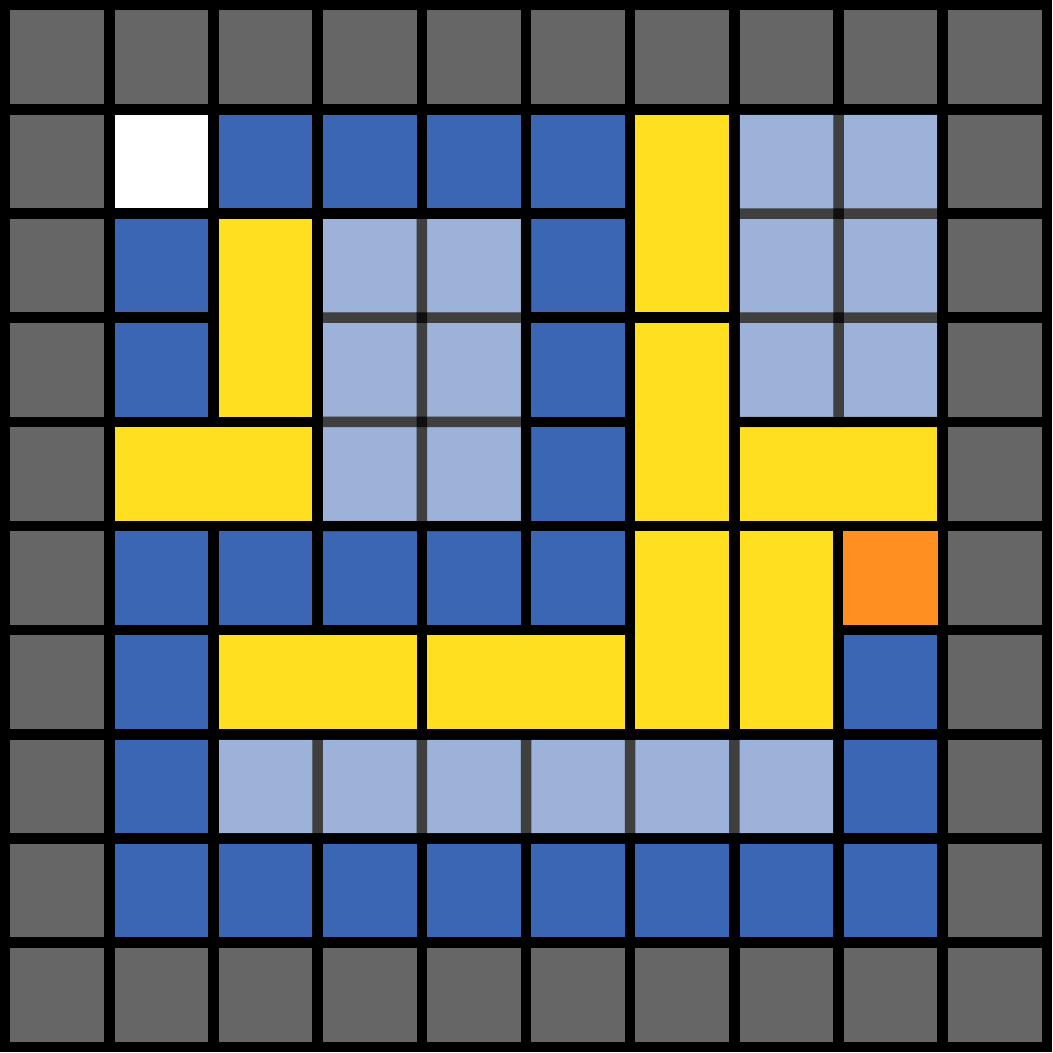}
        \caption{S}
        \label{fig:full1}
    \end{subfigure}
    \begin{subfigure}[b]{0.115\textwidth}
        \centering
        \includegraphics[width = 0.9\textwidth]{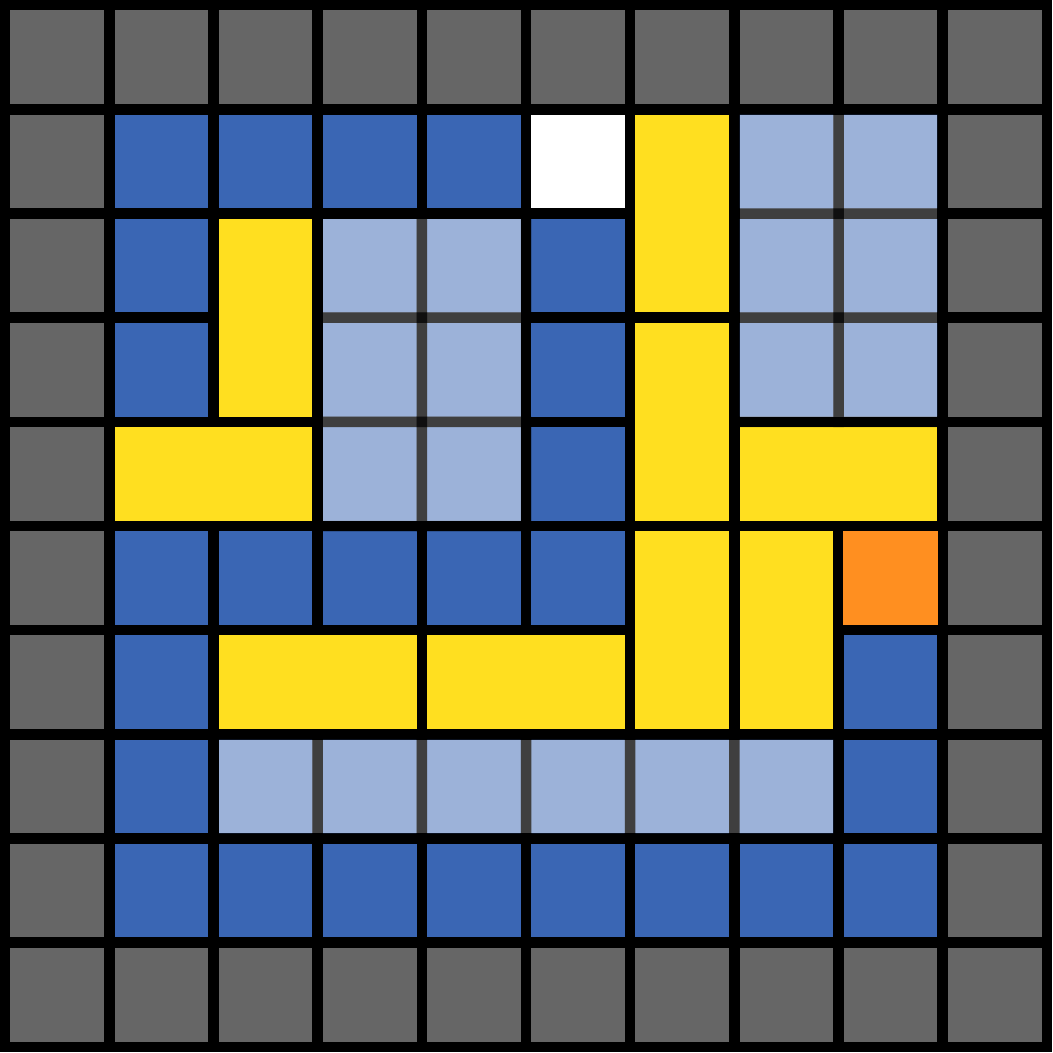}
        \caption{W}
        \label{fig:full2}
    \end{subfigure}
    \begin{subfigure}[b]{0.115\textwidth}
        \centering
        \includegraphics[width = 0.9\textwidth]{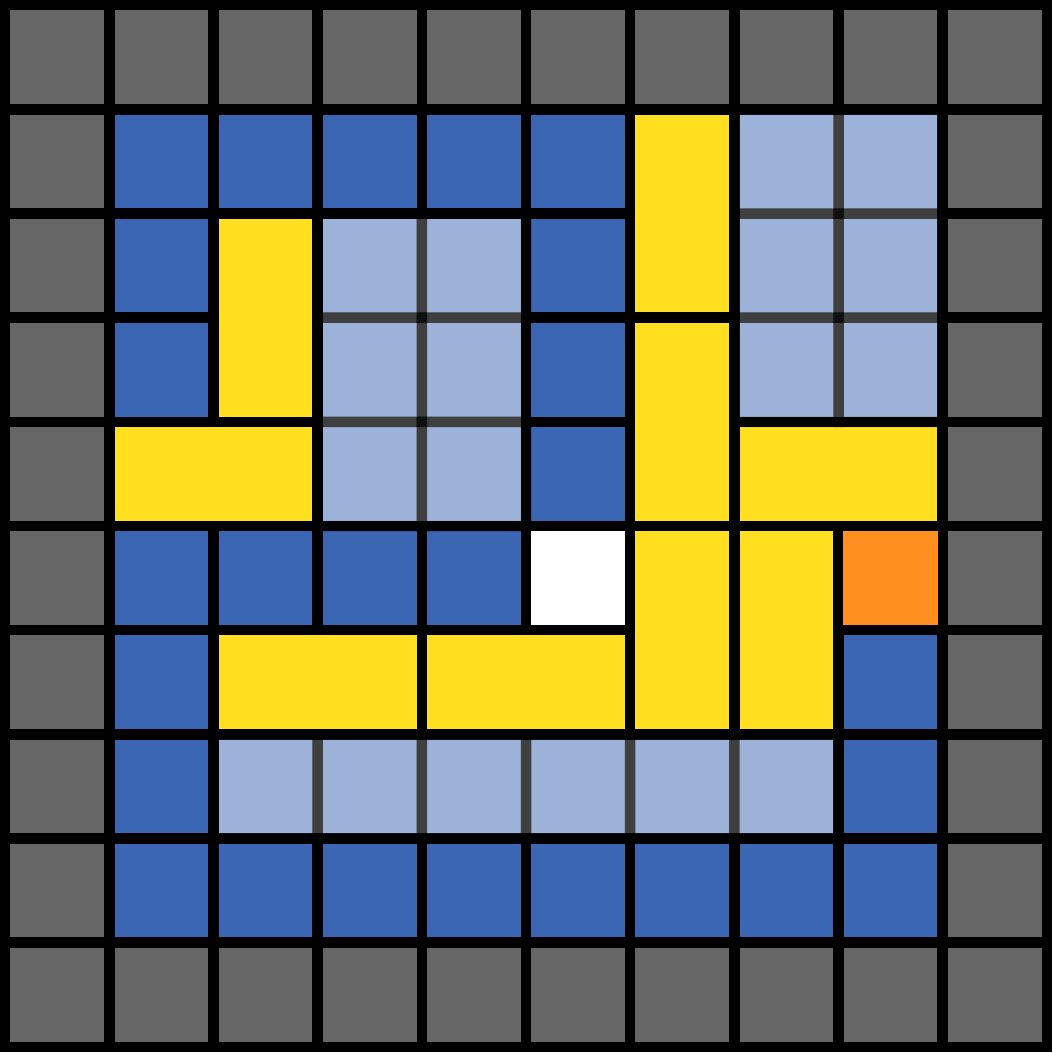}
        \caption{N}
        \label{fig:full3}
    \end{subfigure}
    \begin{subfigure}[b]{0.115\textwidth}
        \centering
        \includegraphics[width = 0.9\textwidth]{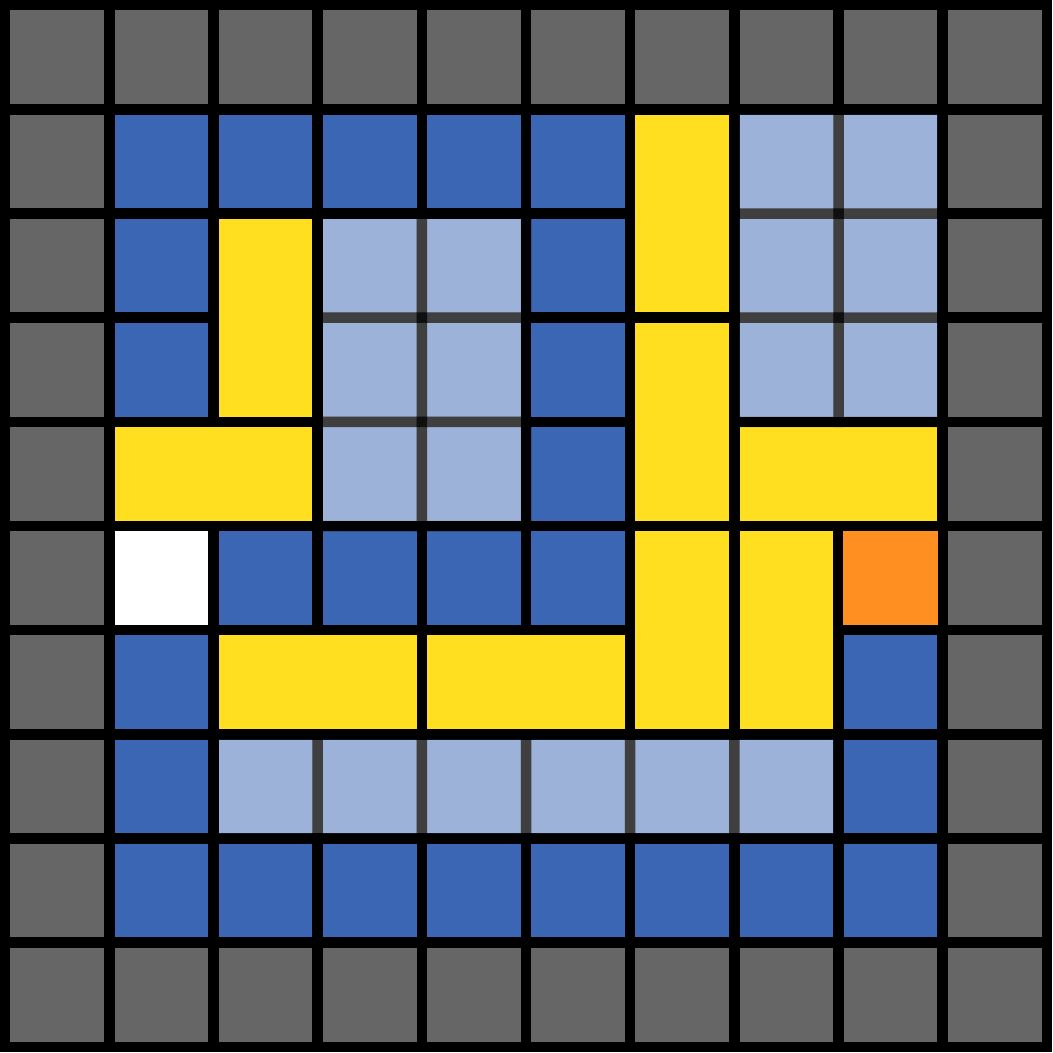}
        \caption{E}
        \label{fig:full4}
    \end{subfigure}
    \begin{subfigure}[b]{0.115\textwidth}
        \centering
        \includegraphics[width = 0.9\textwidth]{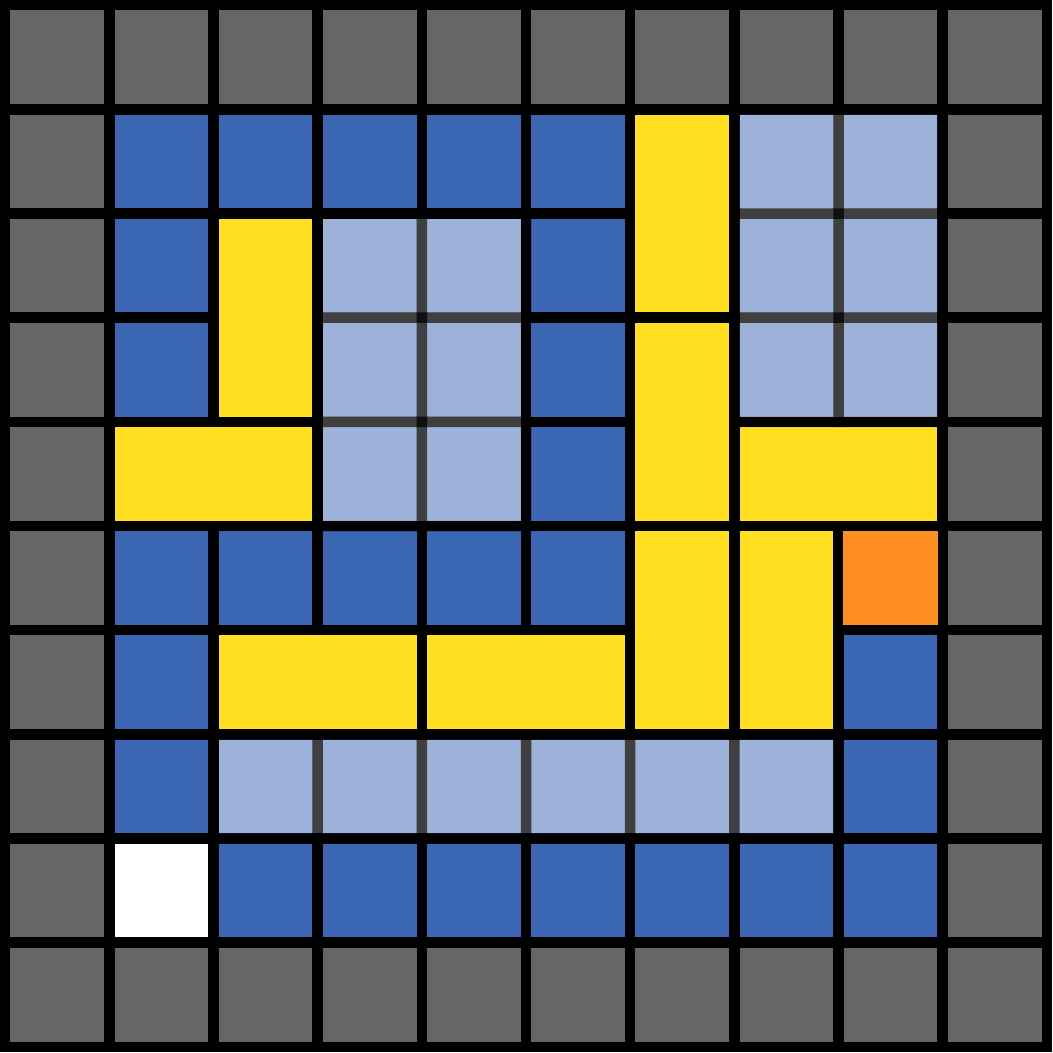}
        \caption{N}
        \label{fig:full5}
    \end{subfigure}
    \begin{subfigure}[b]{0.115\textwidth}
        \centering
        \includegraphics[width = 0.9\textwidth]{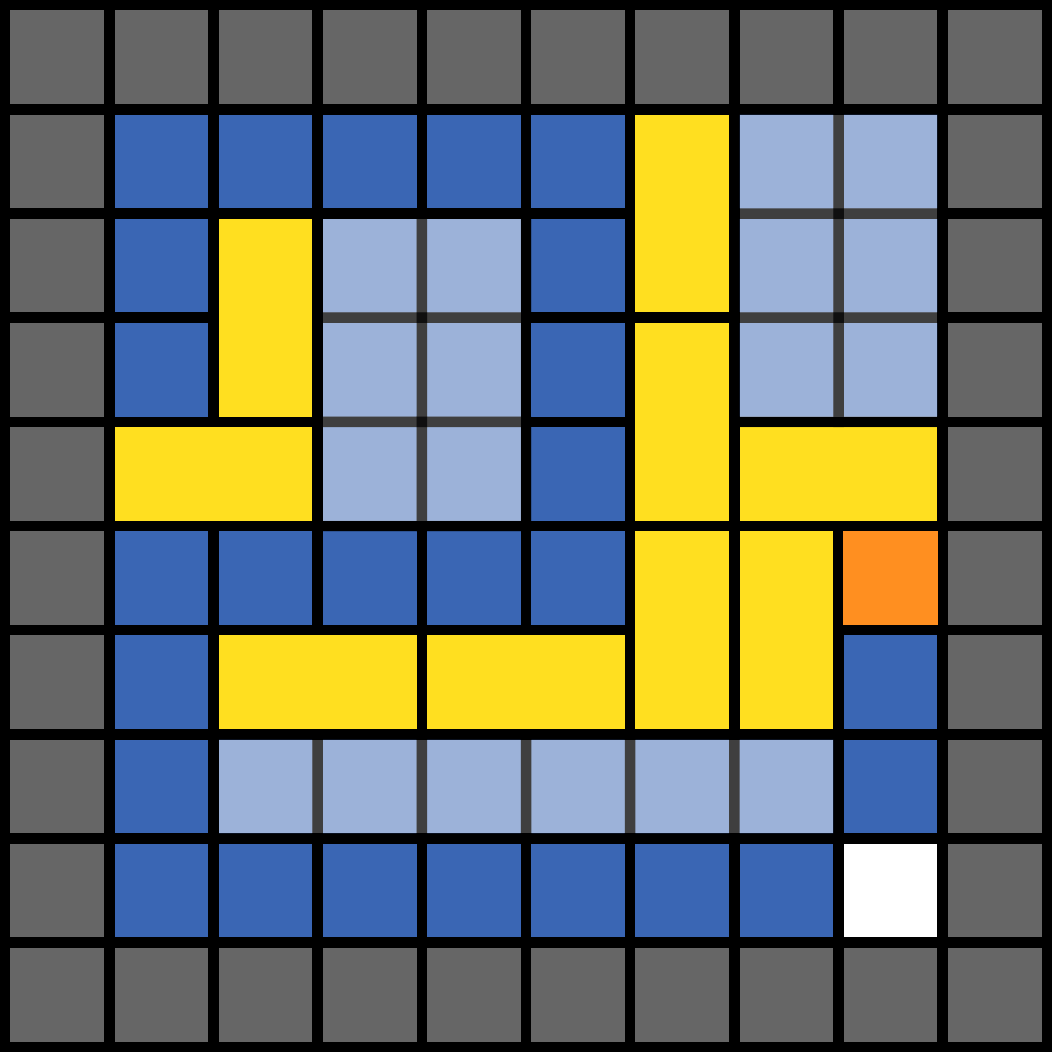}
        \caption{W}
        \label{fig:full6}
    \end{subfigure}
    \begin{subfigure}[b]{0.115\textwidth}
        \centering
        \includegraphics[width = 0.9\textwidth]{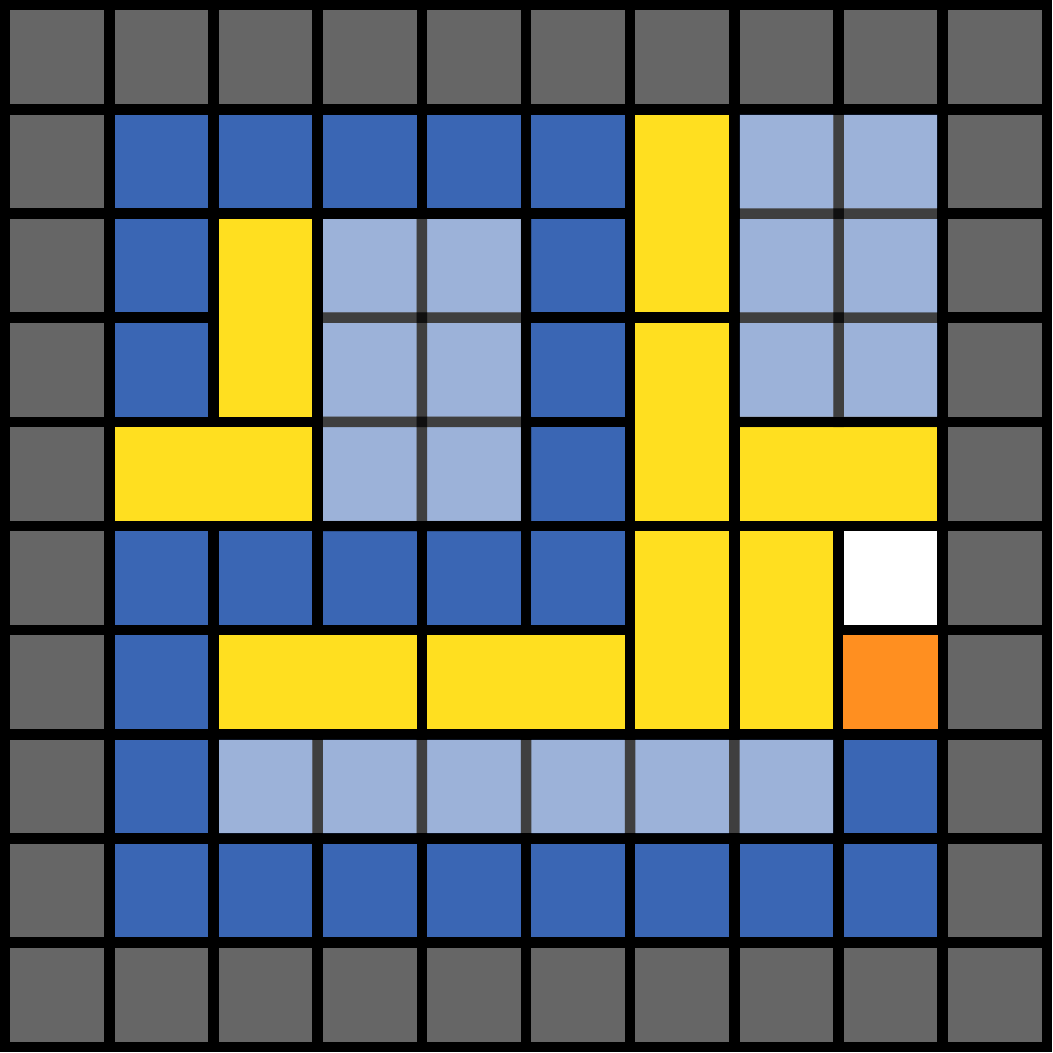}
        \caption{S}
        \label{fig:full7}
    \end{subfigure}
    \caption{An example of a empty space moving through a configuration. The board geometry is just a rectangular frame. The dominoes along with many of the tiles are gridlocked and cannot move. Gridlocked tiles are lighter blue. We can see that through a sequence of tilts the space can move through the configuration and eventually allow the orange tile to change position.}
    \label{fig:fullExample}
\end{figure}

\subsubsection{Construction Preliminaries}
Here we present the construction details for a tilt model gadget which behaves like the CTL gadget described above. We refer to this as the CTL domino gadget.
Unlike the previous constructions, this one uses a rectangular board that is completely filled with polyominoes except for one open position.
In this construction, the open position or ``space" will act as the agent rather than the robot polyomino which was used before.
When a tilt is made, all movable tiles adjacent to the space will move in that direction and fill the space.
The space will travel accordingly until it reaches either the edge of the board or an immovable polyomino.
In this way, we can effectively simulate geometry with gridlocked dominoes.
Along with the CTL domino gadget, we also present several helper gadgets which are needed for the construction. These are the 3-way branching gadget, the straight-wire gadget, the corner turning gadget, the start gadget, and the goal gadget.

\begin{figure}
    \centering
    \begin{subfigure}[b]{0.4\textwidth}
        \centering
        \includegraphics[width = 0.5\textwidth]{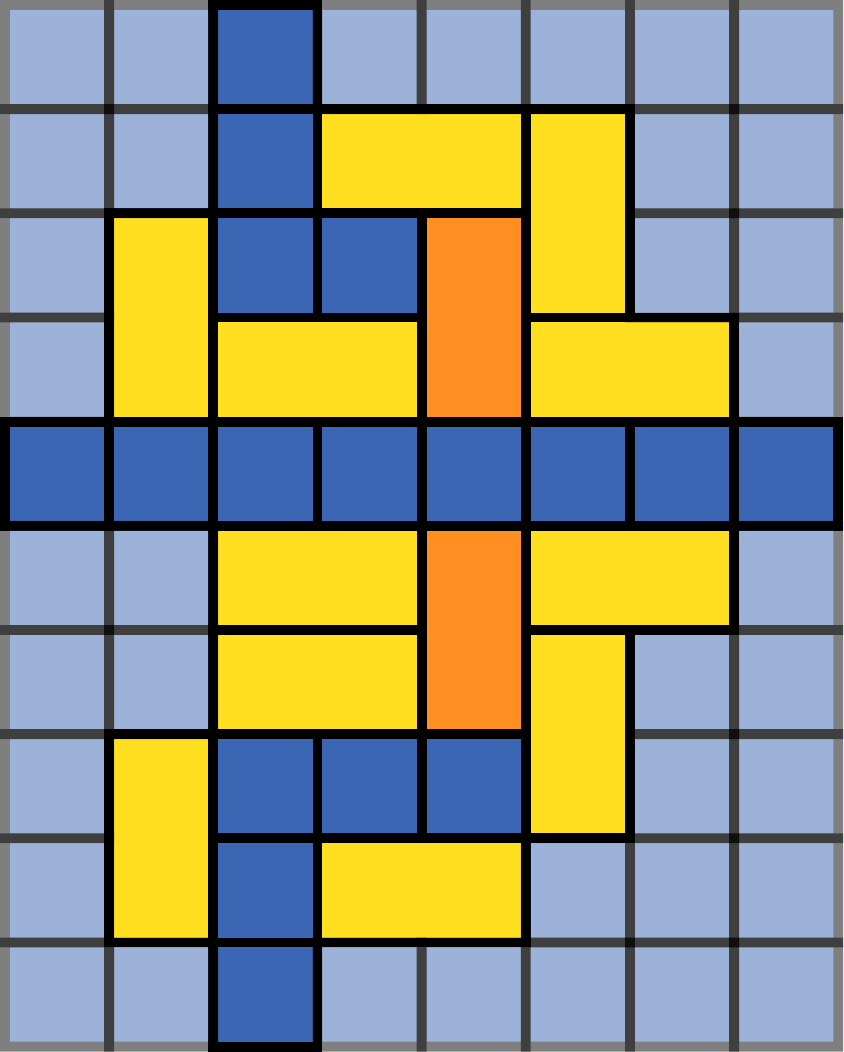}
        \caption{Unlocked CTL Domino Gadget.}
        \label{fig:CTL_unlocked}
    \end{subfigure}
    \begin{subfigure}[b]{0.4\textwidth}
        \centering
        \includegraphics[width = 0.5\textwidth]{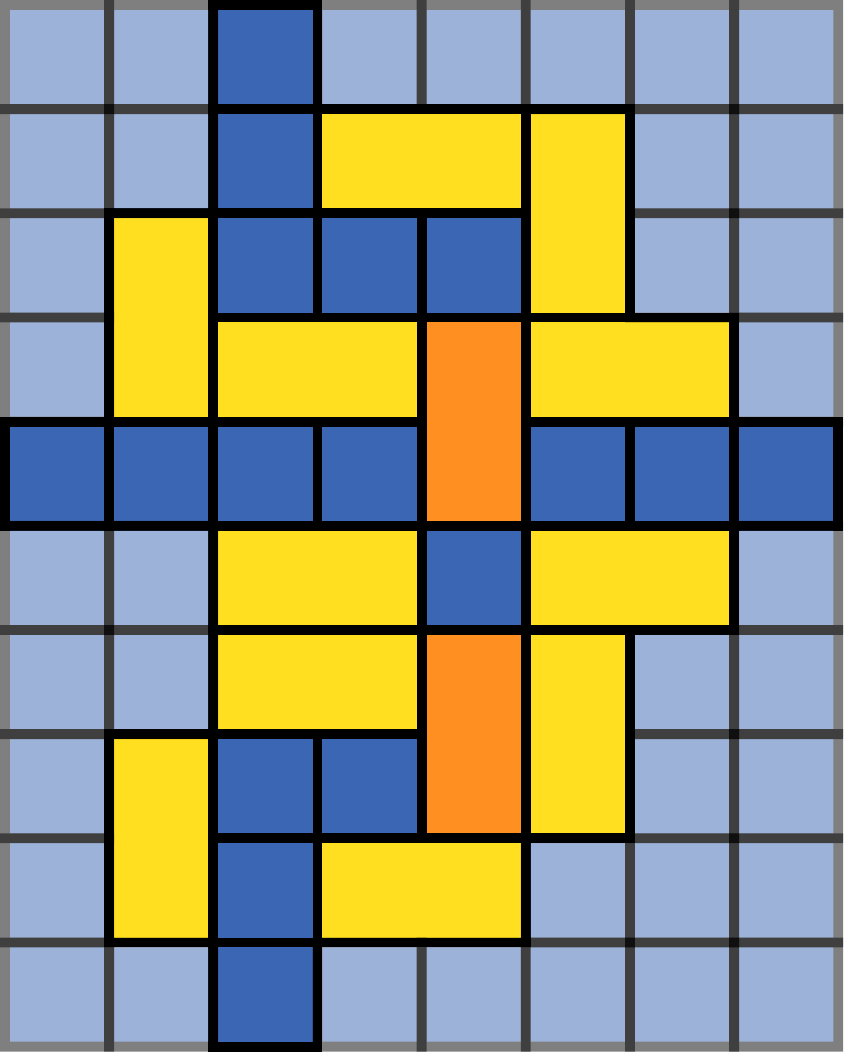}
        \caption{Locked CTL Domino Gadget}
        \label{fig:CTL_locked}
    \end{subfigure}
    \caption{Two states of the CTL domino gadget. Dark blue tiles represent tiles that can move to transfer the space. The light blue tiles and yellow dominoes never move. Orange dominoes can move and the positions of them indicate the state of the gadget with the left being unlocked and right being locked.}
    \label{fig:toggleLock}
\end{figure}

\paragraph{Crossing Toggle-Lock Domino Gadget}
The crossing toggle lock shown in Figure \ref{fig:toggleLock} enforces the traversal rules with two dominoes. In the unlocked state (Figure~\ref{fig:CTL_unlocked}) there is no domino blocking the horizontal tunnel and the space may pass through freely without stopping or changing the state of the gadget. In the locked state (Figure~\ref{fig:CTL_locked}) there is a domino blocking the horizontal tunnel and will prevent the space from traversing in either direction.

When attempting to traverse through the vertical tunnel a domino enforces that it can only be entered through the correct location.
In the unlocked state, the space may only traverse the vertical path from south to north (otherwise it encounters an immovable domino). When the gadget is in the loced state, the space may only travel from north to south.

\paragraph{Wire Gadget}
The wire gadget (Figure~\ref{fig:domino_wire}) is really just a collection of single tiles. These single tiles just act as a the medium through which the agent travels.
Since our agent can only ever be at the beginning or end of a wire, we don't need to introduce any immovable dominoes in the wire gadget.

\paragraph{Corner Gadget}
The agent in the puzzle solvability problem allows for the wires that turn.
We construct a corner gadget (Figure~\ref{fig:domino_turn}) which allows the space to turn corners.
There are two dominos in the gadget which allow the agent to enter from one direction, stop at a domino, and exit the other direction.

\paragraph{Branching Gadget}
Another gadget required for the reduction is a 3-way branching gadget. This gadget allows for the agent to enter from one location and leave the gadget through either of the others. Like the corner gadget, the gridlocked dominoes act as barriers which stop the space from moving.

\paragraph{Start Gadget}
The start gadget (Figure~\ref{fig:domino_start}) is where the agent starts. This gadget contains the open position that acts as the agent. The open position is surrounded by three dominoes to enforce that the agent can only exit in one direction.

\paragraph{Goal Gadget}
Reaching the goal gadget (Figure~\ref{fig:occ}) is the objective of the agent. The gadget contains a horizontal domino which can only be moved if the vertical domino is allowed to move down. So, in order to relocate the horizontal domino, the agent must reach the goal gadget. Also, since none of the yellow dominoes can move because they are permanently gridlocked, the empty location in the goal gadget can only be occupied if the horizontal domino can be relocated.

\begin{figure}
    \centering
    \begin{subfigure}[b]{0.12\textwidth}
        \centering
        \includegraphics[width = \textwidth]{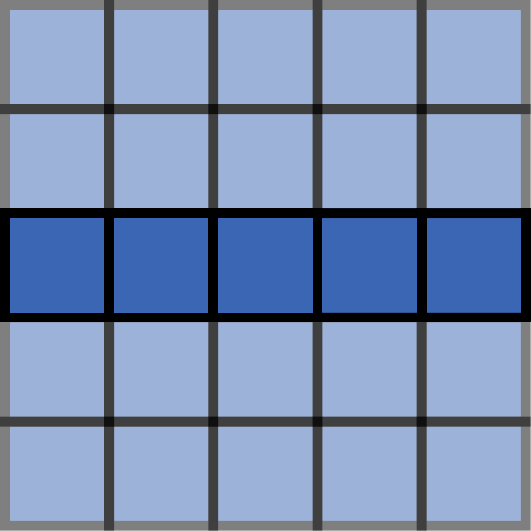}
        \caption{}
        \label{fig:domino_wire}
    \end{subfigure}
    \hspace*{.5cm}
    \begin{subfigure}[b]{0.12\textwidth}
        \centering
        \includegraphics[width = \textwidth]{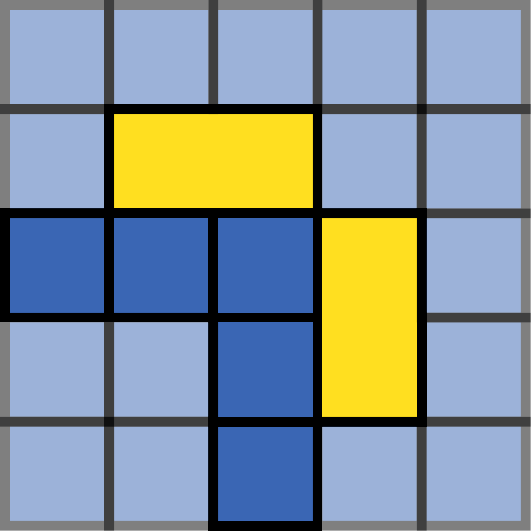}
        \caption{}
        \label{fig:domino_turn}
    \end{subfigure}
    \hspace*{.5cm}
    \begin{subfigure}[b]{0.1535\textwidth}
        \centering
        \includegraphics[width = \textwidth]{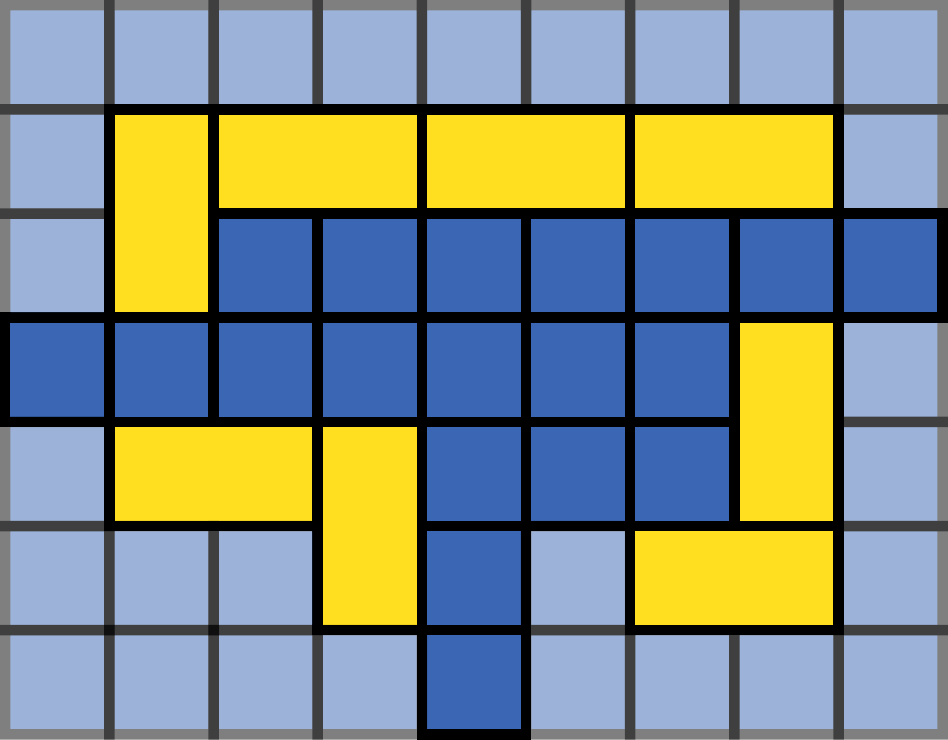}
        \caption{}
        \label{fig:branch}
    \end{subfigure}
    \hspace*{.5cm}
    \begin{subfigure}[b]{0.12\textwidth}
        \centering
        \includegraphics[width = \textwidth]{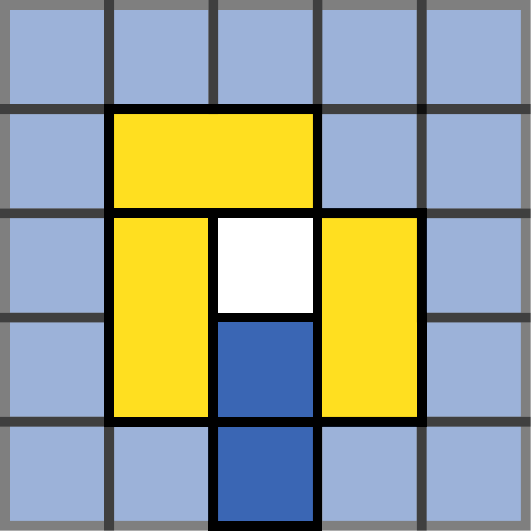}
        \caption{}
        \label{fig:domino_start}
    \end{subfigure}
    \hspace*{.5cm}
    \begin{subfigure}[b]{0.12\textwidth}
        \centering
        \includegraphics[width = \textwidth]{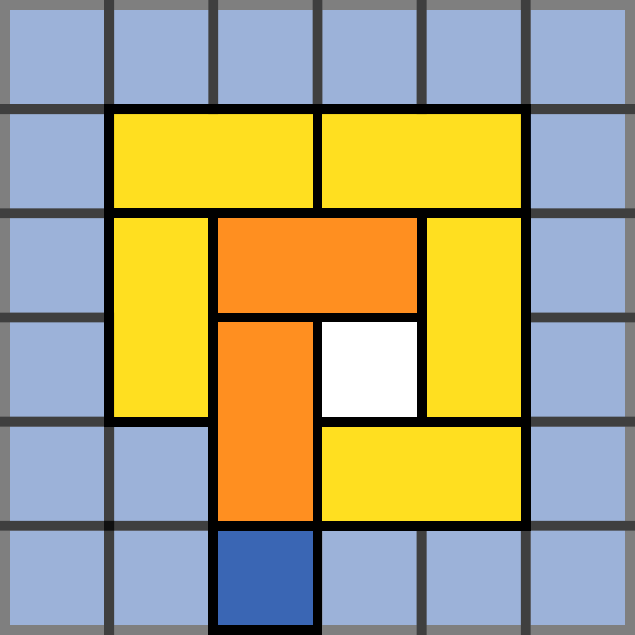}
        \caption{}
        \label{fig:occ}
    \end{subfigure}
    \caption{(a) The Straight Wire Gadget used to allow the agent to traverse straight wires. (b) The Corner Gadget used to allow the agent to change direction. (c) 3-way Branching Gadget. Allows for a 3-way direction change.
    (d) The Start Gadget which contains the open position that acts as the agent. (e) The Goal Gadget. Moving the horizontal orange domino is the objective of the agent.}
    \label{fig:domino_gadgets}
\end{figure}

\subsubsection{Limited Geometry PSPACE-complete Results}

\begin{lemma}\label{lem:CTL_domino}
  The CTL domino gadget correctly implements the behavior of the CTL gadget.
\end{lemma}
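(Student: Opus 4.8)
The plan is to mirror, almost verbatim, the proof-by-computer strategy used for the relocation gadget (Lemmas~\ref{lem:config_tree}--\ref{thm:relocation_c2t}), adapted to the fully-packed ``space as agent'' setting of this section. First I would fix the notion of \emph{state} of a CTL domino gadget: the state is determined entirely by the positions of the two movable orange dominoes, with the configuration of Figure~\ref{fig:CTL_unlocked} declared \emph{unlocked} and that of Figure~\ref{fig:CTL_locked} declared \emph{locked}. Next I would observe that, as in Lemma~\ref{lem:config_tree}, Algorithm~\ref{alg:config_tree} applied to a gadget configuration enumerates exactly the set of reachable configurations (the state space is finite because the board is a bounded rectangle completely filled by polyominoes except for a single open cell, so there are only finitely many placements of that cell together with the finitely many placements of the non-gridlocked dominoes). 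I would then generate a forest of these trees, one root for each combination of (i) the state of the gadget and (ii) the open cell sitting at each of the four tunnel endpoints ($W$, $E$ of the lock tunnel; $N$, $S$ of the toggle tunnel), flagging nodes by which endpoint, if any, the open cell currently occupies.

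The core of the argument is then a finite list of assertions to be checked by searching this forest, each the analogue of one of Lemmas~\ref{lem:relocation_tunnels_NE}--\ref{lem:relocation_state}: (1) the directed lock-tunnel traversals $(W,E)$ and $(E,W)$ exist if and only if the gadget is in the unlocked state, and no such traversal changes the state; (2) in the locked state neither lock-tunnel traversal exists; (3) the toggle-tunnel traversal $(S,N)$ exists exactly when the gadget is unlocked and $(N,S)$ exists exactly when it is locked, and in each case the traversal toggles the state between locked and unlocked; (4) no ``turning'' traversal between a lock-tunnel endpoint and a toggle-tunnel endpoint (e.g.\ $(W,N)$, $(S,E)$, etc.) ever exists, in either state; and (5) every tile/domino marked gridlocked (the light-blue tiles and yellow dominoes) never moves under any tilt sequence. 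Assertions (1)--(4) are verified by the same root/descendant search over flagged nodes used in Lemmas~\ref{lem:relocation_tunnels_NE}--\ref{lem:relocation_tunnels_turns}, and (3)'s state-toggle claim by the search pattern of Lemma~\ref{lem:relocation_state}; assertion (5) is verified by checking that in no reachable configuration does a gridlocked piece occupy a position other than its initial one. Combining these exactly as in Lemma~\ref{thm:relocation_c2t} yields that the CTL domino gadget realizes the two locations (lock endpoints), the two locations (toggle endpoints), the bidirectional-when-unlocked / blocked-when-locked behavior of a lock, the state-dependent one-way behavior of a toggle, the toggle-traversal state flip, and the perpendicular non-interfering crossing of the two tunnels --- i.e.\ precisely the CTL gadget of Figure~\ref{fig:gadget_stuff} and Theorem~\ref{puzzle_hardness}.

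The step I expect to be the main obstacle is getting the gadget geometry right so that all five assertions actually hold simultaneously: because the board is completely packed, the open cell can in principle ``tunnel'' a gridlocked domino out of place or squeeze between the crossing tunnels along an unintended diagonal route, and a single badly placed cell in the blueprint of Figure~\ref{fig:toggleLock} could break assertion (4) or (5). In practice this means the real work is iterating on the gadget design until the brute-force search comes back clean, and then presenting the verified design; the write-up itself should (a) state the finite configuration-space bound that makes Algorithm~\ref{alg:config_tree} terminate here, (b) point to the forest-search code as in the relocation section, and (c) record that the same search also certifies that the open cell never leaves the gadget into an attached wire, which is what will later let us compose CTL domino gadgets without cross-gadget interference in the PSPACE-hardness reduction.
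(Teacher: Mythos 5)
Your proposal is sound, but it takes a genuinely different route from the paper. The paper's proof of Lemma~\ref{lem:CTL_domino} is a short direct-inspection argument: it simply observes, from the two configurations in Figure~\ref{fig:toggleLock}, that the unlocked state (Figure~\ref{fig:CTL_unlocked}) permits bidirectional passage of the space through the horizontal tunnel and south-to-north passage through the vertical tunnel, that the locked state (Figure~\ref{fig:CTL_locked}) blocks the horizontal tunnel and permits only north-to-south vertical passage, and that any vertical traversal toggles the state --- no computer search is invoked for this gadget (the exhaustive simulation is reserved for the relocation and reconfiguration gadgets, whose $2\times 2$ robot and state-tile interactions are far less transparent). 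You instead port the full brute-force machinery of Algorithm~\ref{alg:config_tree} and Lemmas~\ref{lem:relocation_tunnels_NE}--\ref{thm:relocation_c2t}: enumerate all reachable configurations of the packed gadget, flag the four tunnel endpoints, and verify tunnel existence, state toggling, gridlock of the yellow/light-blue pieces, and confinement of the space, by forest search. Your list of assertions (1)--(5) matches the behavioral claims the paper makes, so the approach would establish the lemma; what it buys is machine-checkable rigor, an explicit certificate that gridlocked pieces never move, and the cross-gadget non-interference fact that the paper only handles informally in Theorem~\ref{thm:PSPACEdomi}. What it costs is considerable overhead for a gadget small and rigid enough that the paper dispatches it in a few sentences, and it requires fixing boundary conditions for a single gadget embedded in a fully packed board, a modeling step the paper avoids entirely by arguing directly about the drawn configuration.
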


\begin{proof}
  Like the CTL gadget, the CTL domino gadget has two tunnels. Observe that the CTL domino gadget has two states, as does the CTL gadget.
  Observe that in the unlocked state, an agent (an empty space) may pass through the horizontal tunnel in either direction, while agent traversal through the vertical tunnel can only be done from south to north.
  Also, observe that in the locked state, an agent cannot pass through the horizontal tunnel (as it is blocked by a domino), while traversal through the vertical tunnel can only be done from north to south.
  Lastly, observe that any traversal of the vertical tunnel toggles the gadget between its locked and unlocked state.
  Thus, the CTL domino gadget correctly implements the behavior of the CTL gadget.
\end{proof}

\begin{theorem}\label{thm:PSPACEdomi}
The relocation problem is PSPACE-complete when limited to a rectangular board and allowing for $1\times1$ tiles, and $1\times2$, $2\times1$ dominoes.
\end{theorem}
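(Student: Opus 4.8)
The plan is to mirror the structure of the proof of Theorem~\ref{thm:relocation}, replacing the C2T-based machinery with the CTL domino gadgets developed in this subsection. Membership in PSPACE is immediate and essentially identical to the earlier argument: consider the directed graph whose vertices are the (finitely many) board configurations and whose edges correspond to single tilts; a nondeterministic walk from the start configuration to any configuration realizing the desired translation of the target domino uses only polynomial space, so the problem lies in NPSPACE, and since $\mathrm{NPSPACE} = \mathrm{PSPACE}$ we get membership in PSPACE.

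For hardness I would reduce from the CTL puzzle solvability problem, which is PSPACE-complete by Theorem~\ref{puzzle_hardness}. Given a CTL puzzle, assemble a rectangular board that is completely filled with polyominoes except for a single open position, as follows: instantiate one CTL domino gadget (Figure~\ref{fig:toggleLock}) for each CTL gadget in the puzzle, with its orange dominoes placed to match the initial locked/unlocked state; connect gadget locations using the wire gadget (Figure~\ref{fig:domino_wire}), the corner gadget (Figure~\ref{fig:domino_turn}), and the $3$-way branching gadget (Figure~\ref{fig:branch}); place the single open position inside the start gadget (Figure~\ref{fig:domino_start}) at the robot's start location; and attach the goal gadget (Figure~\ref{fig:occ}) at the robot's goal location. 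The relocation instance then asks whether the horizontal orange domino of the goal gadget can be moved to the translated position one cell below it. By the design of the goal gadget, this horizontal domino can move only after the vertical domino in that gadget is allowed to drop, which in turn can happen only once the open space has entered the goal gadget; hence the relocation is achievable if and only if the puzzle is solvable.

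The correctness argument has two directions. For the forward direction, a solution path in the puzzle is simulated tilt-by-tilt: Lemma~\ref{lem:CTL_domino} shows that each CTL domino gadget reproduces the one-way/locking traversal semantics and the state-toggle behavior of a CTL gadget, and a direct check confirms that the wire, corner, and branching gadgets let the space traverse straight segments, turns, and branches exactly as the robot does, so the space reaches the goal gadget and the target domino is relocated. For the converse, I must rule out any behavior in the composed system that is not available to the robot in the puzzle model: I would verify, by a brute-force proof-by-computer over each gadget together with an argument that gridlocked dominoes cannot move under any global tilt, that the space never leaves a gadget or wire except through its designated locations, never changes a gadget's state except via a legal traversal, and never frees a gridlocked domino anywhere other than the goal gadget. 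The main obstacle I anticipate is precisely this compositionality check: because the board is entirely filled, every tilt slides a large connected front of tiles, so I need to confirm that away from the single space no tile ever moves (so that the ``walls'' built from gridlocked dominoes genuinely act as walls) and that the local, per-gadget analyses remain valid once gadgets are wired together — in particular that no tilt sequence smuggles the space, or a loose domino, across the boundary between two adjacent gadgets. Verifying the behavior at these gadget-to-gadget and gadget-to-wire interfaces, and confirming that the intended directed and locking behavior survives there, is where the bulk of the verification effort will lie; once this is established, Theorem~\ref{puzzle_hardness} yields PSPACE-hardness, and combined with membership in PSPACE the theorem follows.
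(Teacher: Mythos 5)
Your proposal matches the paper's proof essentially step for step: PSPACE membership via the configuration-graph argument from Theorem~\ref{thm:relocation}, and hardness by reducing from the CTL puzzle solvability problem using the CTL domino, wire, corner, branching, start, and goal gadgets, with Lemma~\ref{lem:CTL_domino} carrying both directions of the correctness argument. The only difference is that you flag the gadget-composition/interface verification explicitly, which the paper handles more tersely by observing that there is only a single movable space in the filled board, so your approach is correct and essentially the same as the paper's.
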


\begin{proof}
  We have already shown PSPACE membership for the relocation problem in Theorem~\ref{thm:relocation}. We now show the problem is PSPACE-hard even with a rectangular board filled with $1\times1$ tiles, and $1\times2$, $2\times1$ dominoes by reducing from the CTL puzzle solvability problem.

  Given an instance of the CTL puzzle solvability problem from \cite{DBLP:conf/fun/DemaineGLR18}, used the tools described above to create a tilt model configuration as follows:
  For each CTL gadget in the puzzle, create a crossing-toggle lock domino gadget such that the initial state of the domino gadget matches the initial state of the CTL gadget.
  For each straight wire in the puzzle, create a wire gadget in the tilt configuration.
  For each 3-way intersection in the puzzle, create a branching gadget in the tilt configuration.
  For each wire turn in the puzzle, create a corner gadget in the tilt configuration.
  For the start and goal locations in the puzzle, create a start gadget and goal gadget in the tilt configuration.
  Create the board geometry as a rectangle of blocked spaces that surrounds the configuration.
  Fill all remaining empty board locations (except the open space in the start and goal gadgets) with $1\times 1$ tiles.

  Lemma~\ref{lem:CTL_domino}, along with the fact that there clearly exists move sequences to traverse the other gadgets, shows that if the given CTL puzzle has a solution, then there exists a tilt sequence which moves the space to the goal gadget in our tilt configuration.

  Furthermore, since there only exists a single space in the configuration (exluding the stationary space in the goal gadget), Lemma~\ref{lem:CTL_domino} also shows that no solution for the CTL puzzle implies no solution for our relocation problem.
\end{proof}

\begin{corollary}\label{thm:occupancy}
The occupancy problem is PSPACE-complete when limited to a rectangular board and allowing for $1\times1$ tiles, and $1\times2$, $2\times1$ dominoes.
\end{corollary}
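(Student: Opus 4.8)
The plan is to mirror the proof of Corollary~\ref{corr:occupancy}: reuse verbatim the reduction from the CTL puzzle solvability problem that was constructed for Theorem~\ref{thm:PSPACEdomi}, and then argue that the designated cell inside the goal gadget can be occupied if and only if the input puzzle is solvable. Membership in PSPACE is handled exactly as in Theorem~\ref{thm:relocation}: nondeterministically guess a tilt sequence while storing only the current configuration (polynomial space), so the problem lies in NPSPACE $=$ PSPACE.

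For hardness, given a CTL puzzle instance I would build the tilt configuration precisely as in the proof of Theorem~\ref{thm:PSPACEdomi} --- one CTL domino gadget per CTL gadget (with matching initial state), wire/corner/branching gadgets for the wiring, a start gadget holding the unique free ``space,'' a goal gadget, a rectangular frame of blocked cells, and $1\times1$ filler tiles in every remaining cell. The occupancy target is the empty cell inside the goal gadget of Figure~\ref{fig:occ}. By construction that cell can only be filled by sliding the horizontal orange domino into it, and that domino is immobilized unless the vertical domino above it can descend, which in turn requires the agent (the single free space) to have arrived at the goal gadget. Hence if the puzzle is solvable, Lemma~\ref{lem:CTL_domino} together with the obvious traversal sequences for wires, corners, and branches yields a tilt sequence bringing the space to the goal gadget, followed by a short local sequence that moves the horizontal domino into the target cell, occupying it.

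For the converse, I would show that occupancy implies a puzzle solution. The only mobile empty cell anywhere on the board (other than the target cell itself, which stays enclosed until the horizontal domino has moved) is the one seeded in the start gadget; all yellow dominoes are permanently gridlocked, and the remaining filler tiles and orange dominoes move only to shuttle that single space. So the target cell can become occupied only through the intended mechanism, which forces the space to have reached the goal gadget; by Lemma~\ref{lem:CTL_domino} this corresponds to a legal robot traversal of the CTL puzzle, i.e., the puzzle is solvable. Combined with Theorem~\ref{puzzle_hardness}, this gives PSPACE-completeness on a rectangular board using only $1\times1$, $1\times2$, and $2\times1$ pieces.

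The main obstacle, as in the relocation argument, is this soundness direction: certifying that there is no ``unintended'' way to occupy the target cell --- that the global dynamics of all the gridlocked dominoes plus the single wandering space cannot conspire to push some tile into the goal cell without first effecting a genuine puzzle solution. This is where one leans on the per-gadget gridlock analysis (Lemma~\ref{lem:CTL_domino}) plus the invariant that at every reachable configuration there is exactly one empty movable cell and it lies on the ``agent graph'' formed by the wire and gadget tunnels; I would state and verify that invariant explicitly so that the goal cell is provably off-limits until the horizontal domino moves.
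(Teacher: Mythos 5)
Your proposal matches the paper's proof: it reuses the Theorem~\ref{thm:PSPACEdomi} construction verbatim and observes that the empty cell in the goal gadget can only be occupied (via the horizontal domino) if the agent space reaches the goal gadget, which happens exactly when the CTL puzzle is solvable. Your added explicit invariant about the single wandering space is a more careful elaboration of the soundness step, but the approach is essentially identical to the paper's.
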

\begin{proof}
This follows from the same construction that was used in Theorem \ref{thm:PSPACEdomi}. The agent can only reach the goal if the puzzle if a solution exists to the puzzle solvability problem. The space in the goal gadget can only be occupied if the agent reaches the gadget.
\end{proof}

\section{Future Work} \label{Conclusion}
There are a number of open questions and directions for future work that stem from our results.  In the area of complexity, we have shown that relocation, occupancy, and reconfiguration problems are PSPACE-complete if both $1\times 1$ tiles and a single $2\times 2$ block are considered, and the relocation and occupancy problems are PSPACE-complete when limited to a rectangular board and allowing $1\times 1$ and $1\times 2$ and $2\times 1$ dominos.  The complexity of this problem is only known to be NP-hard if restricted to $1\times 1$ movable blocks~\cite{BecDemFek2017PCCAL}.  In~\cite{BecDemFek2017PCCAL} the authors showed that $1\times 1$ blocks have substantial limitations such as the impossibility of implementing a certain type of fanout gate.  This might be evidence that this variant is not PSPACE-complete.  However, the existence of necessarily exponentially long tilt sequences for reconfiguration shown in~\cite{BecDemFek2017PCCAL} provides some evidence that the problem does not lie within the class NP.  An alternate set of questions involve the trade off between of number of larger than $1 \times 1$ tiles and the board complexity.  Does the problem remain PSPACE-complete when both are restricted, or does the problem become simpler, perhaps allowing for a polynomial time solution?

Another direction of future work is to explore strong universal configurations for classes beyond the ``drop'' class.  One plausible extension might entail building drop shapes as a subroutine and combining them together in a hierarchical ``staged assembly'' fashion. Another could be to employ a ``sand sifter'' along with this staged assembly as a part of the pattern builder. Related to this is the consideration of speeding up the assembly process by attempting to build distinct portions of a target shape in parallel.  This approach has been recently explored~\cite{schmidt2018efficient}, but has not been considered in the context of building universal configurations.

One interesting direction for future work involves relaxing the constraint in which polyominos slide \emph{maximally} until stopped.  Instead, polyominos could slide a fixed amount per tilt, or travel at some particular speed.  This strengthens the model significantly, but is motivated by a number of practical proposed implementations.  What interesting added capabilities does this modification allow, and how do complexity questions change for this model?

\section*{Acknowledgments} We would like to thank Andrew Winslow for pointing us to the literature on the C2T gadget, which greatly simplified our proof of the complexity of the reconfiguration and relocation problems.  We would also like to thank Aaron Becker for providing feedback on the final presentation of this work.  Finally, we would like to thank the anonymous reviewers of the preliminary conference version of this paper for their thorough, careful, and constructive feedback.
\bibliographystyle{amsplain}
\bibliography{tam,tilt}

\end{document}